 \newlength\dlf  
\newtheorem{theorem}{Theorem}
\newtheorem{corollary}[theorem]{Corollary}
\newtheorem{lemma}{Lemma}
\newtheorem{proposition}{Proposition}
\newtheorem{remark}[theorem]{Remark}
\renewcommand\bibsection{%
  \section*{{\refname}}%
}%
\newcommand{\bes} {\begin{subequations}}
\newcommand{\ees} {\end{subequations}}
\newcommand{\bea} {\begin{eqnarray}}
\newcommand{\eea} {\end{eqnarray}}
\newcommand{\be} {\begin{equation}}
\newcommand{\ee} {\end{equation}}
\def\>{\rangle}
\def\<{\langle}
\def\Tr{\textrm{Tr}}
\newcommand{\ignore}[1]{}
\definecolor{dukeblue}{rgb}{0.0, 0.0, 0.61}
\begin{document}	
	\title{Restrictions on realizable unitary operations imposed by symmetry and locality}

	\author{Iman Marvian}
\affiliation{Departments of Physics \& Electrical and Computer Engineering, Duke University, Durham, North Carolina 27708, USA}

	\begin{abstract}
	
	According to a fundamental result in quantum computing, any unitary transformation on a composite system can be generated using so-called 2-local unitaries that act only on two subsystems. Beyond its importance in quantum computing, this result can also be regarded as a statement about the dynamics of systems with local Hamiltonians: although locality puts various constraints on the short-term dynamics, it does not restrict the possible unitary evolutions that a composite system with a general local Hamiltonian can experience after a sufficiently long time. Here we show that this universality does not remain valid in the presence of conservation laws and global continuous symmetries such as U(1) and SU(2). In particular, we show that generic symmetric unitaries cannot be implemented, even approximately, using local symmetric unitaries. Based on this no-go theorem, we propose a method for experimentally probing the locality of interactions in nature. In the context of quantum thermodynamics our results mean that generic energy-conserving unitary transformations on a composite system cannot be realized solely by combining local energy-conserving unitaries on the components.  We  show how  this can be circumvented via catalysis.
		         
	\end{abstract}
	\maketitle

Locality and  symmetry are fundamental and  ubiquitous  properties of physical  systems and their interplay leads to diverse  emergent phenomena, such as spontaneous symmetry breaking. They also put various constraints on both equilibrium and dynamical properties of physical systems. For instance,  symmetry implies  conservation laws, as highlighted by the Noether's theorem \cite{noether1918nachrichten, noether1971invariant}, and locality of interactions implies finite speed of propagation of   information, as highlighted by the Lieb-Robinson bound \cite{lieb1972finite}. Nevertheless, in spite of the restrictions imposed by locality on the short-term dynamics, it turns out that   after a sufficiently long time and in the absence of symmetries,  a composite system with a general local  (time-dependent) Hamiltonian can experience any arbitrary unitary time evolution.  This is related  to a fundamental  result in quantum  computing: any unitary transformation  on a composite system can be generated by a sequence of 2-local unitary transformations, i.e., those  that couple, at most, two subsystems \cite{divincenzo1995two,  lloyd1995almost, deutsch1995universality}.  

In this Letter, we study this phenomenon in the presence of conservation laws and global symmetries. In particular, we
ask whether this universality remains valid in the presence of symmetries, or whether locality puts additional constraints on the possible unitary evolutions of a composite system. Clearly, if all the local unitaries obey a certain symmetry,  then the overall unitary evolution also obeys the same symmetry.  The question is if  \emph{all}  symmetric unitaries on a composite system can be generated using \emph{local} symmetric unitaries on the system. Surprisingly, it turns out that the answer is negative in the case of continuous symmetries, such as SU(2) and U(1).  In fact, we show that  generic  symmetric unitaries cannot be implemented, even approximately, using  {local} symmetric unitaries.  Furthermore, the difference between the dimensions of the manifold of all symmetric unitaries and the sub-manifold of unitaries generated by $k$-local symmetric unitaries with a fixed $k$, constantly increases with the system size.

This result implies  that, in the presence of locality,  symmetries of Hamiltonian   impose extra  constraints on the time evolution of the system, which are not captured by the Noether's theorem. We  show how the violation of these 
constraints can be observed experimentally and, in fact,   can be used as a new method for probing the locality of interactions in nature. 
These additional constraints can also have interesting implications in the context of quantum chaos and thermalization of many-body systems \cite{khemani2018operator}.  We also explain how  in the case of U(1) symmetry, the no-go theorem can be circumvented using ancillary qubits  and discuss the implications of these results in the contexts of the resource theory of 
quantum thermodynamics  \cite{FundLimitsNature, brandao2013resource, janzing2000thermodynamic,  lostaglio2015quantumPRX, halpern2016microcanonical, halpern2016beyond, guryanova2016thermodynamics, chitambar2019quantum}, quantum reference frames \cite{QRF_BRS_07}
 and quantum circuit synthesis.

\section{Preliminaries}

\subsection{Local Symmetric Quantum Circuits (LSQC)}

Consider an arbitrary composite system formed from local subsystems or \emph{sites} (e.g., qubits or spins). In this paper we focus on systems with finite-dimensional Hilbert spaces.   An operator is called $k$-local if it acts non-trivially on the Hilbert spaces of, at most, $k$ sites.    Consider a symmetry described by a general group $G$. To simplify the following discussion, unless otherwise stated,  we assume all sites in the system have identical Hilbert spaces and carry the same unitary representation of group $G$ (In Supplementary Note 1 we consider a more general case). In particular, on a system with $n$ sites, assume each group element $g\in G$ is represented by the unitary $U(g)=u(g)^{\otimes n}$.  An operator $A$ acting on the total system is called  $G$-invariant, or \emph{symmetric}, if  satisfies $U(g) A U^\dag(g)=A$, for any group element $g\in G$.  The set of symmetric unitaries  itself forms a group, denoted by 
\be
\mathcal{V}^{\text{G}}\equiv\{V: VV^\dag=I ,  [V,U(g)]=0, \forall g\in G\}\ ,   
\ee
 where $I$ is the identity operator.

\begin{figure}\label{fig:circuit}
  \includegraphics[scale=.6]{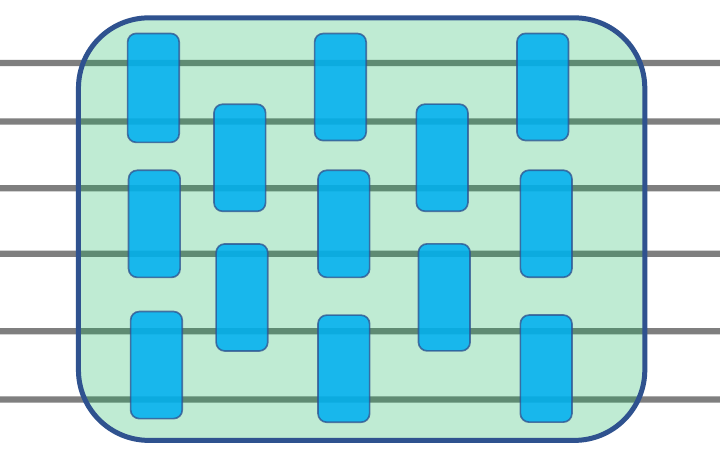}
   \caption{\small{\textbf{Local Symmetric Quantum Circuits.} A quantum circuit with 2-local unitaries on 6 subsystems (e.g., qubits). A Local Symmetric Quantum Circuit (LSQC) only contains local unitaries that respect a certain symmetry. For instance, they are all invariant under rotations around z axis.   Such circuits can model the time evolution of systems with local symmetric Hamiltonians. Conversely, any LSQC corresponds to the time evolution generated by a local symmetric (time-dependent) Hamiltonian. Therefore, by studying LSQC, we can also characterize general features of time evolution under local symmetric Hamiltonians.}}\label{fig90}\end{figure}

 As an example, we consider a system with $n$ qubits and the U(1) symmetry corresponding to global rotations around the $z$  axis. Then, an operator $A$ is symmetric  if  $(e^{-i\theta Z})^{\otimes n} A (e^{i\theta Z})^{\otimes n}=A$, for $\theta\in[0,2\pi)$, or, equivalently, if it commutes with  $\sum_{j=1}^n Z_j$, where  $X_j, Y_j,  Z_j$  denote Pauli operators on qubit $j$ tensor product with the identity operators on the rest of qubits.   Depending on the  context, this symmetry can have different physical interpretations. For instance, if each qubit has Hamiltonian $\frac{\Delta E}{2}  Z$, then $\frac{\Delta E}{2} \sum_{j=1}^n Z_j$ is the total Hamiltonian of the system. Then, unitaries that satisfy this symmetry are the  {energy-conserving} unitaries.

We define $\mathcal{V}_{k}^{G}$  to be the set of all unitary transformations that can be implemented with Local Symmetric Quantum Circuits (LSQC)  with $k$-local unitaries (See Fig.\ref{fig90}). More formally, $\mathcal{V}_{k}^{G}$ is the set of unitaries $V=\prod_{i=1}^{m} V_{i}$,  generated by composing symmetric $k$-local unitaries $V_i: i=1\cdots m$, for a finite $m$.  It can be easily seen that $\mathcal{V}_{k}^{G}$ is a  subgroup of  $\mathcal{V}^{G}=\mathcal{V}_n^{G}$, the group of all symmetric unitaries. More generally, for $k\le l\le n$, we have $\mathcal{V}_{k}^{G}\subseteq \mathcal{V}_{l}^{G}\subseteq \mathcal{V}^G$.  We are interested in characterizing each subgroup 
$\mathcal{V}_k^{G}$ and, in particular,  to determine if there exists $k<n$, such that $k$-local symmetric unitaries become \emph{universal}, that is  $\mathcal{V}_k^G=\mathcal{V}_n^{G}=\mathcal{V}^G$. As we discussed before, in the absence of symmetries, i.e., when $G$ is the trivial group, this holds for $k=2$.  To study these questions we use the Lie algebraic methods of quantum control theory \cite{d2007introduction, jurdjevic1972control}, which  have also   been previously  used  to study the universality of 2-local gates in the absence of symmetries \cite{divincenzo1995two, lloyd1995almost, brylinski2002universal, childs2010characterization,  zanardi2004universal, giorda2003universal,  Bacon:2000qf, lidar1998decoherence}.

It is worth noting that for composite  systems with a given geometry, one can consider the stronger constraint of geometric locality in the above definitions:  the $k$-local symmetric unitaries should act on local neighborhoods, e.g., only on $k$ nearest-neighbor sites.     
However, provided that the sites lie on a connected graph, e.g., on a connected 1D chain, adding this additional constraint does not change the generated group  $\mathcal{V}_{k}^{G}$. This is true because  the swap unitary that exchanges the states of two nearest-neighbor sites is 2-local and respects the symmetry, for all symmetry groups. If the graph  is connected, by combining these 2-local permutations on pairs of neighboring sites, we can generate all permutations and hence change the order of sites arbitrarily. Therefore, any $k$-local symmetric unitary can be realized by a sequence of $k$-local symmetric unitaries on $k$ nearest-neighbor sites. 

\subsection{Time evolution under local symmetric Hamiltonians}\label{Sec:132} 
  
 Next, we consider a slightly different formulation of this problem in terms of the notion of local symmetric Hamiltonians.    
A generic \emph{local} Hamiltonian $H(t)$ acts non-trivially on all subsystems in the system, but,  it has a decomposition as $H(t)=\sum_j h_j(t)$,    where  each term $h_j(t)$ is $k$-local for a fixed $k$, which is often much smaller than the total number of subsystems in the system.   The unitary evolution generated by this Hamiltonian is determined by the  Schr\"{o}dinger equation
\be\label{eq:sch2020}
\frac{d V(t)}{dt}=-i H(t) V(t)=-i\  \big[\sum_j h_j(t)\big]\  V(t)\ ,
\ee
with the initial condition $V(0)=I$. 
 Suppose, in addition to the above locality constraint, the Hamiltonian $H(t)$ also respects the symmetry described by the group $G$, such that $[U(g), H(t)]=0$, for all $g\in G$, and all $t\ge 0$.
Then, it can be shown that the family of unitaries $\{V(t): t\ge 0\}$  generated by any such  Hamiltonian belongs to $\mathcal{V}_{k}^{G}$, i.e., the group of symmetric unitaries that can be implemented by $k$-local symmetric unitaries    (See Supplementary Note 1). Conversely, any unitary in this group  is generated by a Hamiltonian $H(t)$  satisfying the above locality and symmetry constraints (any quantum circuit can be thought of as the time evolution generated by a time-dependent local Hamiltonian). Therefore, by characterizing $\mathcal{V}_{k}^{G}$  and studying its relation with the group of all symmetric unitaries $\mathcal{V}^{G}$, we can also unveil possible constraints on the time evolution  under local symmetric Hamiltonians, which are not captured by the standard conservation laws imposed by the Noether's theorem.

 \section{Main Results}

 \subsection{A no-go theorem:  Non-universality of local unitaries in the presence of symmetries}\label{Sec:I}

We show that in the case of continuous symmetries such as  U(1) and SU(2), most symmetric unitaries cannot be implemented, even approximately, using local symmetric unitaries: First, as we prove in Supplementary Note 1, for any group $G$, the set of symmetric  unitaries $\mathcal{V}^{G}=\mathcal{V}_n^{G}$ and its subgroup $\mathcal{V}^{G}_k$  generated by $k$-local symmetric unitaries,  are both connected  compact Lie groups, and hence closed manifolds (See Fig.\ref{Torus2}).  This means that if a unitary $V$ is not in $\mathcal{V}^{G}_k$, then there is  a  neighborhood of  symmetric unitaries around $V$, none of which can be implemented using $k$-local symmetric unitaries. On the other hand, if $V$ belongs to  $\mathcal{V}^{G}_k$, then it can be  implemented with a uniformly finite number of such unitaries,  that is upper bounded by a fixed number  independent of $V$ \cite{d2007introduction}.

Secondly, we prove that for any  finite or compact Lie group $G$,  the difference between the dimensions of the manifolds associated to all symmetric unitaries $\mathcal{V}^{G}=\mathcal{V}^{G}_n$ and its sub-manifold $\mathcal{V}^{G}_k$   is lower bounded by 
\be\label{Eq:dim3}
\text{dim}(\mathcal{V}^{G})-\text{dim}(\mathcal{V}^{G}_k)\ge |\text{Irreps}_G(n)|-|\text{Irreps}_G(k)|\ ,
\ee
where for any integer $l$, $|\text{Irreps}_G(l)|$ is the number of inequivalent irreducible representations (irreps) of group $G$, appearing in the representation  $\{u(g)^{\otimes l}: g\in G\}$, i.e., in   the action of symmetry on $l$ subsystems.  We conclude that, unless $|\text{Irreps}_G(n)|=|\text{Irreps}_G(k)|$, there is a family of symmetric unitaries on $n$ subsystems that cannot be implemented with $k$-local symmetric unitaries. In the case of continuous symmetries such as U(1) and SU(2), $|\text{Irreps}_G(n)|$ grows unboundedly with $n$. This means that there is no fixed integer  $k$, such that $k$-local symmetric unitaries become universal for all  system size  $n$. This is in a sharp contrast with the universality of $2$-local unitaries in the absence of symmetries.      
In Methods we provide a simple proof of the non-universality of local unitaries in the case of continuous symmetries using a technique called  \emph{charge vectors}.  
 In Supplementary Note 2 we 
prove Eq.(\ref{Eq:dim3}) and present  a more refined version of this inequality in the case of connected Lie groups, such as U(1) and SU(2), as well as an extension of the no-go theorem to the case where the subsystems can have different representations of the symmetry. We also discuss more about the nature of the constraints imposed by locality that lead to the bound in Eq.(\ref{Eq:dim3}) (Namely,  we argue that certain elements of the center of the Lie algebra of symmetric Hamiltonians cannot be generated using local symmetric  Hamiltonians).

\begin{figure}
{\includegraphics[scale=.6]{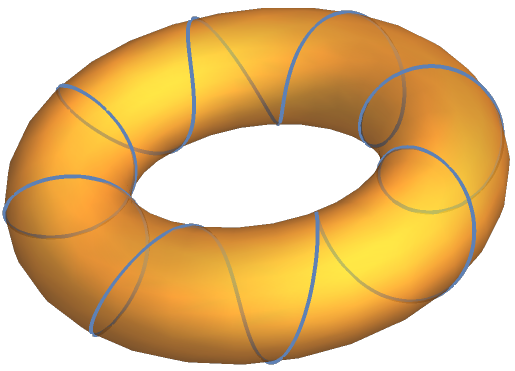}}
\caption{\textbf{The schematic relation between the group of all symmetric unitaries (the torus) and the subgroup generated by Local Symmetric Quantum Circuits (the blue curve)}. They are both  compact connected Lie groups and hence closed manifolds. Unitary evolution under any  local symmetric Hamiltonian is restricted to the submanifold corresponding to LSQC. In other words, adding a perturbation to the Hamiltonian can bring the evolution outside this submanifold, only if it is non-local or symmetry-breaking. In the example of U(1) symmetry, we discuss a more explicit  interpretation of this schematic figure. 
}\label{Torus2}
\end{figure}

 
 \subsection{Example: U(1) symmetry for systems of qubits}

 Recall the example of the U(1) symmetry for a system  of $n$ qubits. In this case,   
the representation of symmetry on $n$ sites is  $(e^{i \theta Z})^{\otimes n}=\exp(i\theta [n I-2 N])$ for $\theta\in [0,2\pi)$, where $N=\sum_j ({I}-Z_j)/{2}$ determines the total \emph{charge} (or, excitations) in the system. It follows that the irreps of U(1) can be labeled by distinct eigenvalues of $N$, which take integer values $m=0,\cdots, n$.  Then,   Eq.(\ref{Eq:dim3}) 
implies that  for a system with $n$ qubits the 
 difference between the dimensions of the manifold of all symmetric unitaries and those generated by $k$-local symmetric unitaries   is, at least,  $n-k$. 
Remarkably, it turns out that in this case this  bound holds as equality. In Methods  we present a full characterization of   Hamiltonians that can be generated using $k$-local U(1)-invariant Hamiltonians.   This result, for instance, implies that 
 even if one can implement all U(1)-invariant unitaries that act on $n-1$ qubits, still the unitary $\exp({i\phi  Z^{\otimes n}})$ cannot be implemented for generic values of $\phi$.  
 
It is useful to express the  constraints  imposed by the locality of interactions  in terms of experimentally observable quantities. Consider a general U(1)-invariant unitary $V$ on $n$ qubits.  For instance, $V$ can be the unitary generated by U(1)-invariant Hamiltonian $H(t)$, from time $t=0$ to $T$ under the Schr\"{o}dinger equation.   Any such unitary has a decomposition as $V=\bigoplus_{m=0}^n V_m$, where
$V_m$ is the component of $V$ in the charge sector $m$, i.e., the eigen-subspace of operator $N=\sum_j ({I}-Z_j)/{2}$ with eigenvalue $m$. For any integer $l=0,\cdots, n$, define   the $l$-body phase $\Phi_l\in(-\pi,\pi]$  of $V$ as  
\be\label{ary4}
 \Phi_l\equiv    \sum_{m=0}^nc_l(m) \theta_m=-\hspace{-1mm}\int_0^T\hspace{-2mm} dt  \hspace{-2mm}\sum_{\substack{{\bf{b}}: w({\bf{b}})=l}} \hspace{-
 2mm} \Tr(H(t) {\bf{Z}}^{\bf{b}})\ \ \  \ \text{: mod} \ 2\pi\ ,
\ee
where  $\theta_m=\text{arg}(\text{det}(V_m))\in(-\pi,\pi]$ is the phase of the determinant of $V_m$,  $c_l(m)=\sum_{s=0}^m (-1)^s  {{m}\choose{s}} {{n-m}\choose{l-s}}$ is an integer coefficient, and we use the convention that  for integers $a$ and $b$, the binomial coefficient ${{a}\choose{b}}=0$ if $b>a$.  
   In the second equality the summation is 
 over all bit strings $\textbf{b}=b_1\cdots b_n\in\{0,1\}^n$ with Hamming weight $w(\textbf{b})\equiv \sum_{j=1}^n b_j$ equal to $l$, and  we have defined 
 ${\bf{Z}}^{\bf{b}}\equiv Z_1^{b_1}\cdots Z_n^{b_n}$.  Note that this  equality is satisfied for any U(1)-invariant Hamiltonian $H(t)$ that realizes unitary $V$.  Using this equality, for instance, we can see that for unitary $V=\exp({i\phi {\bf{Z}}^{\bf{b}}})$, 
 all $l$-body phases vanish, except for $l=w(\textbf{b})$, where  $\Phi_{w(\textbf{b})}=2^n \phi \text{: mod} \ 2\pi$.  In Supplementary Note 3  we prove Eq.(\ref{ary4})  and  present  coefficients $c_l(m)$ for a system with $n=5$ qubits.

The notion of $l$-body phases provides a useful characterization  of  the constraints imposed by the locality of interactions. In Supplementary Note 3 we show that:  $\textbf{(i)}$ for $l\ge 1$, the $l$-body phases $\{\Phi_l\}$ of a U(1)-invariant unitary time evolution can be  measured experimentally. On the other hand, the phases $\{\theta_m\}$ are not physically observable, because they 
 transform non-trivially under the global phase transformation $V\rightarrow e^{i\alpha} V$. Similarly,  $\Phi_0=\sum_m\theta_m=\text{arg}(\text{det}(V))$ is not  observable.  $\textbf{(ii)}$ If a unitary is realizable by $k$-local U(1)-invariant unitaries, then its $l$-body phases are zero for $l>k$, which can be seen using the second equality in Eq.(\ref{ary4}).  This, for instance, implies  that unless  $\phi$ is an integer multiple of ${\pi}/{2^{n-1}}$,  unitary $\exp({i\phi {\bf{Z}}^{\bf{b}}})$  cannot  be  implemented using $k$-local U(1)-invariant unitaries with $k<w(\textbf{b})$. $\textbf{(iii)}$  Conversely, for a general  U(1)-invariant unitary  $V$, if all $l$-body phases vanish for $l>k$, then  $V$ is realizable using $k$-local U(1)-invariant unitaries, up to a unitary in a fixed finite subgroup of  U(1)-invariant unitaries. Finally, it is  worth mentioning that from a geometrical point of view, the transformation $\{\theta_m\} \rightarrow \{\Phi_l\}$ in Eq.(\ref{ary4})  describes  a change of the coordinate system on the  $(n+1)$-torus corresponding to phases $\theta_m=\text{arg}(\text{det}(V_m))$, for  charges $m=0,\cdots, n$. For instance, when the system evolves under the Hamiltonian $H=\gamma   \textbf{Z}^{\textbf{b}}$, its  trajectory on this torus is a helix,  described by equation $\Phi_l(t)=- 2^n \gamma t\times  \delta_{l,w(\textbf{b})}$, where $\delta$ denotes the Kronecker delta (See Fig.\ref{Torus2}). 

In Sec.\ref{Sec:phase}, we discuss  an application of this framework for  synthesizing phase-insensitive quantum circuits. But, first we start with a rather surprising implication of these ideas. 
  \begin{figure}{\includegraphics[scale=.29]{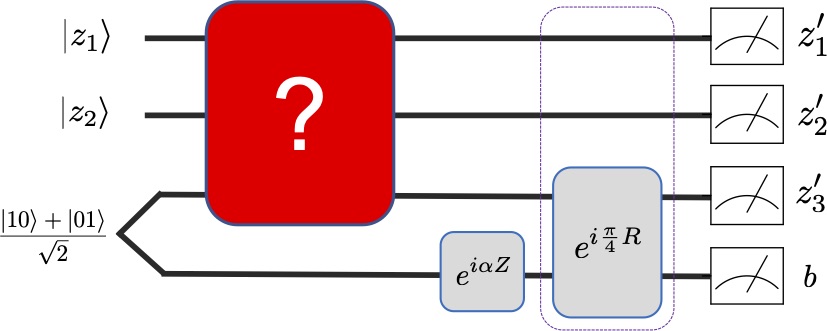}}\caption{
  \textbf{A scheme for local symmetric process tomography and  measurement of $l$-body phases.}  
  The no-go theorem found in this paper has an immediate useful  implication: it gives a new method for detecting the locality of the underlying interactions that govern a charge-conserving unitary process. 
  Specifically, by measuring the  $l$-body phase of the unitary, as defined in Eq.(\ref{ary4}), we can detect $l$-body interactions.   This figure presents a schematic experimental setup that fully characterizes an unknown U(1)-invariant unitary and   its $l$-body  phases, using  initial  states, single-qubit measurements and  2-local unitaries, which all respect the symmetry.     
  In this example, 
the red  box corresponds to an unknown 3-qubit charge-conserving unitary  $V$.  The      
 goal is to measure  the 3-body phase  $\Phi_3\in (-\pi,\pi]$. Observing $\Phi_3\neq 0$ indicates the presence of the 3-body interaction $Z^{\otimes 3}$.   At the input of  $V$ all the qubits  
 are  prepared in unentangled symmetric states $|z\rangle$ with $z=0,1$, except one of them,  which is entangled  with an ancillary  qubit, in the joint state $(|01\rangle+|10\rangle)/\sqrt{2}$.   This ancillary  qubit plays the role of an internal quantum reference frame \cite{QRF_BRS_07} and allows us to probe the relative phases between sectors with different charges through an interference experiment. 
After the unknown unitary $V$,  we apply the single-qubit unitary $\exp({i\alpha Z})$ on the ancillary qubit, then interact it with one of the three qubits in the system via 2-local unitary $\exp(i\frac{\pi R}{4})$, where $R=(XX+YY)/2$, and finally measure all qubits in $\{|0\rangle, |1\rangle\}$ basis. As we discuss further in Supplementary Note 5, using this scheme we can fully characterize  the unknown unitary $V$, up to a global phase and, in particular,   determine the 3-body phase $\Phi_3$.  }\label{Fig:gha}
\end{figure}

\subsection{Application: Probing the locality of  interactions in nature}

Our no-go theorem  leads us to a new method for experimentally probing the locality of interactions: According to this theorem, in the presence of symmetries,   interactions that couple more subsystems can imprint certain observable effects on the time evolution of the system that cannot be reproduced by those that act on smaller number of subsystems.   Therefore, by probing these effects, we can directly obtain information  about the locality of the underlying interactions that govern the process.  This is analogous to the fact that in the presence of symmetries  we can detect a hypothetical symmetry-breaking interaction, just by observing  the violation of Noether's conservation law for the input and output of the process, without knowing the details of the underlying interactions (In our case,  the hypothetical term is not symmetry-breaking; rather it couples multiple subsystems  together).  
 
As a simple example, consider  a system of $n$ qubits evolving  for the total time $T$  under an unknown Hamiltonian $H(t)$ that preserves  $\sum_j Z_j$.  To have a concrete example, one can assume $H(t)$ models the interactions in  a complex scattering  process with $n$ particles, and  states  $\{|0\rangle, |1\rangle\}$ of  each qubit corresponds to an internal degree of freedom of a particle, e.g., its electric charge, whose total value remains conserved  in the process.  Suppose we want to characterize  the locality of interactions that govern this process. For instance,  we start with the hypothesis that $H(t)=H_0(t)+\gamma(t) Z^{\otimes n}$, where $H_0$ only contains $k$-local  terms with $k<n$ and  $\gamma Z^{\otimes n}$ corresponds to a hypothetical $n$-body interaction, e.g., a correction to the Coulomb law.  The goal is to test the hypothesis that the $n$-body term $\gamma Z^{\otimes n}$ is zero, by probing the output of this process for different input states.     
 Note that in the absence of symmetries, unless there are further assumptions about the form of $H_0$, it is impossible to obtain information about the strength of $\gamma$: 
 the universality of 2-local unitaries means that  even if  $\gamma=0$,  the Hamiltonian $H_0$ with 2-local  interactions can generate any arbitrary unitary transformation. Therefore, by probing the outputs 
 of this process for different inputs, we cannot distinguish  the cases of  $\gamma=0$ and $\gamma\neq 0$.

While this is impossible in the absence of symmetries,  our result reveals that symmetries allow us to directly probe the locality of interactions that govern a process, just by observing the inputs and outputs of the process.  This can be achieved systematically by measuring the $l$-body phases of the unitary process for $l\ge 1$.  For instance, in the above example,  
 by measuring the $n$-body phase $\Phi_n\in(-\pi,\pi]$ of the unitary $V$ that describes the overall process,  we obtain a lower bound on  $\gamma_{\text{max}}=\max_{t\in[0,T]} |\gamma(t)|$, that determines 
the maximum strength of the $n$-body interaction; namely, 
\be
\gamma_{\text{max}} \ge \frac{\big|\Phi_n \big|}{   2^n \times  T}\ ,
\ee
 where we have applied the second equality in Eq.(\ref{ary4}).  Note that according to the first equality in Eq.(\ref{ary4}), $\Phi_n=\sum_{m=0}^n (-1)^m \theta_m\ (\text{mod } 2\pi)$.  
 
How can we  measure $l$-body phases of a unitary? More generally, is it possible  to characterize a U(1)-invariant unitary transformation and perform  process tomography \cite{NielsenAndChuang}, using only local symmetric  operations? We find that, despite our no-go theorem on realizable unitaries,  the answer is affirmative. A general U(1)-invariant unitary can be fully characterized, up to a global phase,  using symmetric initial states, symmetric single-qubit measurements, and 2-local symmetric unitaries, provided that one can use a single ancillary qubit, which is initially entangled with one of the qubits in the system.  In particular, the scheme presented in Fig.\ref{Fig:gha} does not require preparing superpositions of states with different charges, which might be impractical due to the superselection rules (See Supplementary Note 5 for further discussion).

  \begin{figure}
{\includegraphics[scale=1.05]{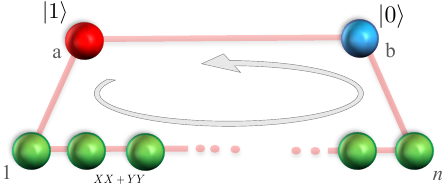}}\caption{
\textbf{Circumventing the no-go theorem with ancillary qubits.}  Our no-go theorem implies that the family of unitaries generated by the $n$-qubit Hamiltonian $Z^{\otimes n} $ cannot be implemented using  local U(1)-invariant unitaries, even if they act on $n-1$ qubits. In this figure, we describe a scheme for circumventing this no-go result, using two ancillary qubits.  This scheme uses     
interaction $R=(XX+YY)/2$  between  nearest-neighbor qubits on a closed loop. 
The two ancillary qubits, denoted by a and b are initially prepared in states $|1\rangle$ and $|0\rangle$, respectively.  First, we show that 
 it is possible to realize the Hamiltonian $K=Z^{\otimes n}\otimes R_{\text{a},\text{b}} $  without any direct interaction between the ancillary qubits. This only requires  coupling qubit a to qubit $j=1$ in the chain, coupling between nearest-neighbor qubits in the chain, and coupling between qubit $j=n$ and ancilla b.   
   This Hamiltonian describes the  process in which a charge is transported through the chain from one ancillary qubit to the other and obtains a phase  depending on the parity of the total charge in the system. As we explain in Supplementary Note 6,  this has an intuitive interpretation in the fermionic description of this system, obtained by applying the Jordan-Wigner transform.   
After evolving the entire system for a short time interval  $\delta t$ under Hamiltonian $K$, we obtain the joint state  $|\psi\rangle|1\rangle_{\text{a}}|0\rangle_{\text{b}}- i\delta t Z^{\otimes n} |\psi\rangle|0\rangle_{\text{a}}|1\rangle_{\text{b}}+\mathcal{O}(\delta t^2)$, where $|\psi\rangle$ is the initial state of $n$ qubits. Next, we directly couple $a$ to $b$ and close the loop,  using the 2-local unitary $\exp{(i\pi R_{\text{a},\text{b}}/4)} \exp{(i\pi Z_b/4)}$ that allows 
the charge to move back and forth between the ancillary qubits, without going through the chain. Finally,  we measure one of the ancillary qubits in $\{|0\rangle,|1\rangle\}$ basis. This  determines the final location of the charge  initially located in qubit $a$.      
The  final state of $n$ qubits is $\exp({\pm i \delta t Z^{\otimes n}})|\psi\rangle+\mathcal{O}(\delta t^2)$, where the sign depends on whether the final location of charge is qubit a or b. Therefore, this process stochastically implements the Hamiltonian $\pm Z^{\otimes n}$. In principle, by 
choosing infinitesimal time step $\delta t$ and repeating this scheme many times, we can implement the desired unitary $\exp(i\phi Z^{\otimes n})$  for arbitrary angle $\phi$, with an error approaching zero and probability of success approaching one. We show that a slightly more complicated version of this scheme  can be realized deterministically. \color{black}   
}\label{ancilla:newFig}
\end{figure}

\subsection{Circumventing the no-go theorem with ancillary systems}
\vspace{-3mm}
Interestingly, it turns out that in the case of U(1) symmetry our no-go theorem can be circumvented,  provided that one is allowed to interact with an ancillary qubit:  for any $n$-qubit U(1)-invariant unitary $V$, there exists  $(n+1)$-qubit unitary $\tilde{V}$ that can be implemented  using  2-local U(1)-invariant Hamiltonians $XX+YY$ and local $Z$, and satisfies  
\be
\tilde{V}\big(|\psi\rangle\otimes |0\rangle_{\text{a}}\big)=(V|\psi\rangle)\otimes |0\rangle_{\text{a}}\ ,
\ee
for all $n$-qubit states $|\psi\rangle$.  
This means that, while by applying local symmetric unitaries  
the ancillary qubit becomes entangled with the qubits in the system, at the end of the process  it returns back to its initial state $|0\rangle$, whereas  the state of system transforms as the desired unitary $V$.

  Fig.\ref{ancilla:newFig} demonstrates a   variant of this result that requires 2 ancillary qubits.   In this example the goal is to implement the unitaries generated by Hamiltonian $Z^{\otimes n}$. Roughly speaking, in this scheme  a charge  is transported 
  through a closed loop that starts from an ancillary qubit, goes through the entire system and finally returns back to the  ancilla. As a result, the joint state obtains a phase depending  on the parity of the total charge in  the system, which  corresponds to  the  observable $Z^{\otimes n}$. The overall effect is equivalent to applying the desired Hamiltonian $Z^{\otimes n}$ on the system. Here, the ancillary qubits can be interpreted as an internal quantum reference frame \cite{QRF_BRS_07}, relative to which the phase-shift generated by observable $Z^{\otimes n}$ is measured in a coherent fashion.   As we further explain  in Supplementary Note 6, this process has also a nice interpretation in the fermionic description of the system, obtained by applying the Jordan-Wigner transform \cite{jordan1928pauli, fradkin1989jordan, nielsen2005fermionic}. 
  
  \color{black}
  

\subsection{Application: Quantum thermodynamics with local interactions}\label{Sec:III}

Our surprising no-go theorem has also interesting    implications in the context of  quantum thermodynamics and, specifically, the operational  approach to thermodynamics, which is often called the \emph{resource theory} of quantum thermodynamics   \cite{FundLimitsNature, brandao2013resource, janzing2000thermodynamic,  lostaglio2015quantumPRX, halpern2016microcanonical, halpern2016beyond, guryanova2016thermodynamics, chitambar2019quantum}. 
  A fundamental assumption in this framework is that all energy-conserving unitaries, i.e., those commuting with the   intrinsic Hamiltonian of the system, are \emph{free}, that is, they  can be implemented with negligible thermodynamic costs. This is assumed even for composite systems with arbitrarily large number of  subsystems. 
 However, our result  implies 
that general energy-conserving  unitaries 
on a composite system cannot be implemented by applying \emph{local} energy-conserving unitaries on the subsystems.    In fact,  even by composing energy-conserving unitaries that act on $n-1$ subsystems, one still cannot generate all energy-conserving unitaries on $n$ subsystems. Note that energy-conserving unitaries are  those  that are invariant under the time-translation symmetry $\{e^{-i H_0 t}: t\in \mathbb{R}\}$ generated by the  intrinsic Hamiltonian $H_0$; a \emph{continuous}  symmetry, which is isomorphic to the group U(1) in the case of periodic systems.


    Therefore, this no-go theorem  suggests that there might be some \emph{hidden} thermodynamic costs for implementing general energy-conserving unitaries, using  \emph{local} energy-conserving unitaries and, in principle, this additional cost can increase with the system size.   The following theorem addresses this concern 
 (See Supplementary Note 7  for a more precise  statement).\\

\noindent\textbf{Theorem:} Consider a finite set of closed systems  with the property that for each system the  gap between any consecutive pairs of energy levels is $\Delta E$. Then, any global energy-conserving unitary transformation on these systems can be implemented by a finite sequence of  2-local energy-conserving unitaries, provided that the systems can interact with a single ancillary qubit  with the energy gap $\Delta E$ between its two levels.  \\

To establish this result, we introduce a generalization of the scheme introduced in the previous section for qubit systems with U(1) symmetry.  We conclude that  the assumption of the resource theory of quantum thermodynamics  \cite{janzing2000thermodynamic, FundLimitsNature, brandao2013resource, guryanova2016thermodynamics, lostaglio2015quantumPRX, halpern2016microcanonical, halpern2016beyond} that all energy-conserving unitaries (and, hence all thermal operations) are \emph{free}, is consistent with the locality of interactions,  provided that one allows  the use of ancillary systems. 
 In the context of quantum thermodynamics such systems can be interpreted as  catalysts  \cite{chitambar2019quantum, jonathan1999entanglement}. It is  worth mentioning that the  assumption in this theorem on the energy gap $\Delta E$ between consecutive levels can be relaxed, provided that one can use larger catalysts with more energy levels.  

 \subsection{Application: Synthesizing noise-resilient    quantum circuits }\label{Sec:phase}

Another motivation to study local symmetric quantum circuits comes from the field of quantum computing and, specifically, the desire to design fault-tolerant quantum circuits.  
In both prominent implementations of quantum computers, namely superconducting and trapped-ion computers, the instability of the master clock that determines the timing of the control pulses is a major source of noise  \cite{ball2016role, bermudez2017assessing}. Each qubit in these systems has a non-zero intrinsic Hamiltonian, which corresponds to an energy difference between states $|0\rangle$ and  $|1\rangle$. Hence, the state of qubit is constantly evolving in time. Ideally, using a stable clock one can keep track of this intrinsic time evolution. 
In other words, one can assume quantum computation is performed in a co-rotating frame, where there is no energy difference between $|0\rangle$ and  $|1\rangle$.  
In practice, however, due to the instabilities of the clock,  this intrinsic time evolution of qubits causes error and destroys coherence between states with different energies. 
For instance, if there is a random time delay $\delta t$ in applying the control pulses that implement a desired unitary transformation $V$, then the actual  implemented unitary in the co-rotating frame will be  $\exp(i\delta t H_0)V\exp(-i\delta t H_0)$, where $H_0=-\Delta E\sum_j Z_j/2$  is the total intrinsic Hamiltonian of the qubits.   In principle,  this effect can be suppressed by restricting the state of qubits to an energy eigen-subspace,  which is  a decoherence-free subspace \cite{lidar1998decoherence, Bacon:2000qf, Zanardi:97c}. But, this amounts to  sacrificing a fraction of physical qubits.  Given the limited number of available qubits on near-term quantum computers,  it is  crucial to explore  other complementary  techniques.

One approach for suppressing this type of noise is to minimize the use of  non-energy-conserving unitaries in the circuit. That is the circuit should be mostly formed from local energy-conserving unitaries.  This includes energy-conserving elementary  gates, such as single-qubit rotations around z, as well as energy-conserving  multi-qubit modules, which may contain  non-energy-conserving  elementary gates.  As long as  the entire module can be  executed in a sufficiently short time during which the clock fluctuations are negligible,  then the energy conservation  of the module guarantees its  resilience against this type of noise. For example, while the 
 standard M{\o}lmer-S{\o}rensen gate~\cite{Molmer1999} 
 $\exp({i\theta X X})$ on trapped-ion quantum computers, is not energy-conserving and hence is sensitive to these  fluctuations,  when it is sandwiched between Hadamards on both qubits, it transforms to $\exp({i \theta Z Z})$, which is energy-conserving.  Similarly, by combining two M{\o}lmer-S{\o}rensen gates  with single-qubit phase gates, we obtain   $\exp({i\theta (X  X+Y Y)})$, which is again energy-conserving.

The tools and ideas introduced in this paper provide a foundation for the   systematic synthesis of  quantum circuits that are resilient against this type of noise. To minimize the number of non-energy-conserving unitaries, the first step is to determine which unitaries can be efficiently realized using local energy-conserving modules. 
 As an example, consider the family of unitaries  generated by the multi-qubit swap Hamiltonian:  Suppose a system with $2r$ qubits is partitioned to two subsystems $\text{A}$ and $\text{B}$, each with $r$ qubits. Let $S_{\text{AB}}$ be the  multi-qubit swap operator
 that exchanges the states of A and B. The family of unitaries $\exp(i \phi S_{\text{AB}})$ for $\phi\in[0,2\pi)$ appears 
 as a subroutine  in various quantum algorithms  (See, e.g.,\cite{lloyd2014quantum,  marvian2016universal, kimmel2017hamiltonian, pichler2016measurement}). It has also found applications in the study of quantum reference frames and quantum thermodynamics \cite{marvian2016universal, popescu2018quantum}.     The multi-qubit swap Hamiltonian $S_{\text{AB}}$ is not only energy-conserving, but in fact it respects the stronger  SU(2) symmetry, i.e.,  $[S_{\text{AB}}, U^{\otimes 2r}]=0$ for all single-qubit unitaries $U$. Therefore, one may  expect that this family of unitaries  should be realizable using a sequence of local SU(2)-invariant  unitaries or, at least, using local energy-conserving  unitaries, which may break the SU(2) symmetry. However, our results refute this conjecture: For generic values of $\phi$, all the $l$-body phases of the unitary $\exp(i \phi S_{AB})$  are non-zero (e.g., $\Phi_{2r}=2^{r} \phi: \text{mod } 2\pi$), which means this unitary is not realizable using local energy-conserving unitaries.   On the other hand, 
 if one is allowed to use a single ancillary qubit, then this family is realizable using   single-qubit rotations around z together with unitaries $\exp(i\theta(XX+YY))$, which, as discussed above, can be obtained from two M{\o}lmer-S{\o}rensen gates. Therefore, to implement a quantum algorithm that employs this subroutine, this part of the circuit can be  realized using only  energy-conserving  modules. This makes the entire circuit more resilient  against clock fluctuations.

\section{Discussion}

Universality of local unitaries in the absence of symmetries is a profound fact about composite  quantum-mechanical systems,   with  vast applications and implications in different areas of physics. Hence, the failure of universality in the presence of symmetries can also have interesting and unexpected implications in different areas.  Here, we saw an example of such surprising   implications, namely the  possibility of  probing the locality of interactions. We end with a brief discussion about  other examples of applications of these results and the related open questions:\\

 \noindent \textbf{Quantum Reference Frames and Covariant Codes:}  Symmetric unitaries also naturally appear in the study of quantum reference frames  \cite{QRF_BRS_07}.   For instance,  it is often assumed that in the absence of a Cartesian reference frame, it is still possible to perform any unitary that respects SO(3) symmetry group corresponding to rotations in 3D space \cite{QRF_BRS_07}.  The no-go theorem found in this paper implies that if one takes into account the locality of interactions, then there can be further restrictions on the realizable unitaries.
It will be interesting to study possible implications of these additional constraints in the context of quantum reference frames.  As an example, Ref. \cite{marvian2008building} shows that  arbitrary symmetry-breaking Hamiltonians on a system  can be simulated  by coupling the  system via rotationally-invariant Hamiltonians 
to  $n\gg 1$ spin-half systems aligned in x and z directions. Therefore, in the limit of large $n$, this quantum reference frame fully lifts the constraint of symmetry.  
It is interesting to further study the efficiency and complexity of such schemes when the Hamiltonians are restricted to be local.

Similar question also arises in the context of covariant error correction, which has recently gained attention in quantum information community (See, e.g.,  \cite{faist2020continuous, hayden2021error, kong2021charge}). Here, the goal is to understand the limitations and capabilities of quantum error-correcting codes that can be realized by symmetric operations.   Then, again it is crucial to understand whether those codes can be  realized via local symmetric unitaries.\\

 \noindent \textbf{Symmetry-Protected Complexity:} Another  interesting open question in this area is to understand  how the notion of \emph{circuit complexity}  changes under the constraint of symmetry.   Recall that the circuit complexity of a unitary transformation or a state 
is the minimum number of local gates needed to implement the unitary or to 
prepare the state from a fixed (product) state \cite{aaronson2016complexity}. For a symmetric unitary or a symmetric state, we can define a modified notion of complexity, which can be called 
\emph{Symmetry-Protected Complexity} (SPC) and is defined as the minimum number \emph{symmetric} local unitaries  needed to implement a symmetric unitary  or to prepare a symmetric state. 
Certain aspects of this notion of complexity has been studied in the context of SPT phases \cite{chen2010local, chen2011classification}. In particular, it is known that for certain family of states the SPC grow linearly with the number of subsystems, whereas the regular complexity remains constant.    Given the conjectured roles of complexity   in the context of holography and AdS/CFT correspondence  \cite{susskind2016computational, brown2016holographic, stanford2014complexity}, it is interesting to further study the notion of SPC and compare it with the regular complexity. \\

\vspace{-2mm}
 
 \noindent \textbf{Analog  Quantum Simulation:}  Understanding the constraints imposed by the  locality of interactions   is also  crucial  in the context of analog quantum simulation, which is  one of the main applications of the near-term quantum technology. In this approach to quantum simulation,
 the degrees of freedom and the dynamics  of the target system are directly mapped to those of the simulator, which is a well-controlled quantum system with a tunable Hamiltonian (See e.g.,\cite{banuls2020simulating, altman2021quantum, yang2020observation}). 
  As we saw in this work,  in the presence of symmetries, the locality  of the simulator Hamiltonian  severely restrict the set of realizable Hamiltonians.   It is interesting to further explore how these  restrictions limit the power of analog quantum simulators in the presence of symmetries, and, in particular, to investigate if  they can be efficiently  circumvented.  \\


\section{Methods}\label{Sec:Methods}

  \subsection{Preliminaries: The Lie algebra generated by local symmetric Hamiltonians}\label{Sec:Lie}

We start by a quick review of a standard result in quantum control theory (See Supplementary Note 1 for more details).   Suppose one can implement the unitary time evolutions generated by Hamiltonians  $\pm A$ and $\pm B$ for an arbitrary amount of time $t\ge 0$; that is one can turn on and off these Hamiltonians at will. Then, combining these time evolutions one  
can obtain  unitaries
\bes\label{Lie}
\begin{align}
&e^{-i B (c_2 \delta t) }e^{-i A (c_1 \delta t) }=e^{-i (c_1 A+ c_2 B) \delta t } +\mathcal{O}(\delta t^2)\\
&e^{-i A \delta t}e^{-i B \delta t} e^{i A \delta t }e^{i B \delta t}= e^{-[A, B] \delta t^2}+\mathcal{O}(\delta t^3)\ ,
\end{align}
\ees
for arbitrary coefficients $c_1,c_2\in\mathbb{R}$, and for sufficiently small $\delta t$. This means that using  Hamiltonians  $\pm A$ and $\pm B$, one can approximately simulate the time evolutions generated by  any Hamiltonian in the linear span of $A$ and $B$ as well as the Hamiltonian $i[A,B]$. Furthermore, by  repeating such combinations of unitaries, one can obtain  a larger class of unitaries.   In fact, it can be proven   that using finite sequences of unitaries generated by Hamiltonians $\pm A$ and $\pm B$, one obtains  all unitary transformations $\{e^{-i H t}: t\in\mathbb{R}\}$ generated by any Hermitian operator $H$ if, and only if, $H$  belongs to the real Lie algebra generated by $A$ and $B$, i.e., it can be written as a linear combination of $A$, $B$, and their (nested) commutators, $i[A, B]$, $[[A, B], A]$, $[[A, B], B],...$, with real coefficients.  As we explain more in Supplementary Note 1, this result means that to characterize the group $\mathcal{V}^{G}_k$ generated by $k$-local symmetric unitaries, it suffices to characterize the Lie algebra generated by $k$-local symmetric skew-Hermitian operators. In particular, the dimension of this Lie algebra, as a vector space over $\mathbb{R}$, is equal to $\text{dim}(\mathcal{V}^{G}_k)$, the dimension of the manifold associated to $\mathcal{V}^{G}_k$, which is also equal to the number of real parameters needed to specify a general element of  $\mathcal{V}^{G}_k$. Using this relation, we establish an upper bound on $\text{dim}(\mathcal{V}^{G}_k)$, which is discussed next.

\color{black}
  
  \subsection{Charge vectors}
  
Next, we introduce the idea  of \emph{charge vectors}, which is our main tool for deriving constraints on the unitary evolutions generated by local symmetric Hamiltonians.  Recall that  $\text{Irreps}_G(n)$ denotes the set of inequivalent irreps of group $G$
 that appear in the representation $\{U(g)=u(g)^{\otimes n}: g\in G\}$ and 
 $|\text{Irreps}_G(n)|$ is the number of  these inequivalent irreps. Let $\Pi_\mu$ be the projector to the subspace corresponding  to irrep  $\mu\in \text{Irreps}_G(n)$, also known as the \emph{isotypic} component of $\mu$.
     For any operator $A$ define the \emph{charge vector} of $A$ as
\be\label{map-center}
|\chi_A\rangle\equiv\sum_{\mu\in\text{Irreps}_G(n)} \Tr(\Pi_\mu A)\ |\mu\rangle\ ,
\ee
 where  $\{|\mu\rangle:\mu\in\text{Irreps}_G(n)\}$ is a set of orthonormal vectors in an abstract vector space with dimension  $|\text{Irreps}_G(n)|$.   
 A general $G$-invariant Hamiltonian can have any charge vector with real coefficients. In particular,  for any set of real numbers $\{h_\mu\in\mathbb{R} :\mu\in\text{Irreps}_G(n)\}$, the Hermitian operator $\sum_{\mu\in\text{Irreps}_G(n)}  \frac{h_\mu}{\Tr(\Pi_\mu)} \Pi_\mu$ is  $G$-invariant and has the charge vector $\sum_{\mu\in\text{Irreps}_G(n)}  h_\mu \ |\mu\rangle$.   In other words, under the linear map $A\rightarrow |\chi_A\rangle$,   the image of the linear space of Hermitian $G$-invariant operators has dimension   $|\text{Irreps}_G(n)|$.

On the other hand, it turns out that if the the unitary evolutions  generated by Hamiltonian $H$ can be simulated by $k$-local $G$-invariant unitaries, i.e., if $\forall t\in \mathbb{R}: e^{-i H t}\in \mathcal{V}_k^G$, then  the charge of vector of $H$ should satisfy certain constraints. Let  $\mathcal{S}_k$ be the set of charge vectors for all such Hamiltonians, i.e.,
\begin{align}
\mathcal{S}_k\equiv \{|\chi_H\rangle: e^{-i H t }\in \mathcal{V}_k^G\ ,\ \forall t\in\mathbb{R}\}\ .
\end{align}
We prove that $\mathcal{S}_k$ is a linear subspace (over the field $\mathbb{R}$) with dimension
\be\label{dims}
\text{dim}(\mathcal{S}_k)\le |\text{Irreps}_G(k)|\ ,
\ee 
and the equality holds if $G$ is a connected Lie group, such as U(1) and SU(2). Therefore, if $|\text{Irreps}_G(k)|< |\text{Irreps}_G(n)|$, then $\text{dim}(\mathcal{S}_k)< |\text{Irreps}_G(n)|$, which means there are $G$-invariant Hamiltonians whose charge vectors do not belong to $\mathcal{S}_k$, which in turn implies they  cannot be simulated using $k$-local symmetric unitaries. For continuous groups,  such as U(1),  $|\text{Irreps}_G(n)|$ grows unboundedly with $n$ and, therefore, universality cannot be achieved with $k$-local symmetric unitaries with a fixed $k$. 

Below we present a simple argument  that explains  why the dimension of $\mathcal{S}_k$ cannot grow 
unboundedly with the system size. 
The specific  bound on $\text{dim}(\mathcal{S}_k)$ in Eq.(\ref{dims}) is proven in Supplementary Note 2,  using the Fourier transform of charge vectors. 
    Furthermore,
in Supplementary Note 2 we 
discuss more about  charge vectors and their Lie-algebraic interpretation. Briefly,   
 charge vector $|\chi_A\rangle$ of an operator $A$ determines its component in the center of  the Lie algebra of all $G$-invariant Hamiltonians, i.e., the Lie algebra corresponding  to the Lie group $ \mathcal{V}^G$. Then, the subspace $\mathcal{S}_k$ determines which part of the center can be generated by $k$-local $G$-invariant Hamiltonians. In particular, if $\text{dim}(\mathcal{S}_k)$ is less than $\text{dim}(\mathcal{S}_n)=|\text{Irreps}_G(n)|$, then local symmetric Hamiltonians cannot generate 
certain elements of the center, which means such Hamiltonians  are not universal, and results in the  bound in Eq.(\ref{Eq:dim3}).

Next, we explain why $\text{dim}(\mathcal{S}_k)$ can not grow unboundedly  with $n$.  To determine $\mathcal{S}_k$
we use the fact that if $e^{-i H t}\in\mathcal{V}^G_k$ for all $t\in\mathbb{R}$, then $ H$ should be in the Lie algebra generated by   $k$-local  
$G$-invariant operators, i.e.,  $H=\sum_j c_j A_j +\sum_{j_1,j_2}  c_{j_1,j_2}\ i [A_{j_1}, A_{j_2}]+\cdots $, where $A_j$  are Hermitian $k$-local $G$-invariant operators and coefficients $c_j, c_{j_1,j_2}, \cdots$ are real numbers.  It can be shown that the commutators appearing in this expansion, do not contribute in the charge vector of $H$, i.e., $|\chi_H\rangle=\sum_j c_j |\chi_j\rangle$, where $ |\chi_j\rangle\equiv \sum_{\mu\in\text{Irreps}_G(n)} \Tr(\Pi_\mu A_j)\ |\mu\rangle$ is the charge vector of $A_j$. To see this note that for any irrep $\mu\in \text{Irreps}_G(n)$,  $\Tr([A_{j_1}, A_{j_2}] \Pi_\mu)=\Tr(A_{j_1} [A_{j_2}, \Pi_\mu])=0$, where the first equality follows from the cyclic property of trace and  the second equality follows from  the assumption that $A_{j_2}$ is $G$-invariant, and therefore commutes with $\Pi_\mu$. It follows that  the commutator $ [A_{j_1}, A_{j_2}]$ and other nested commutators do not contribute in $|\chi_H\rangle$. This implies that  $\mathcal{S}_k$ is spanned by the charge vectors of $k$-local $G$-invariant Hermitian operators, i.e., $\mathcal{S}_k$ is equal to
\begin{align}\label{efef}
\text{Span}_\mathbb{R}\big\{|\chi_A\rangle: A=A^\dag, \text{$A$ is $k$-local}, [A, U(g)]=0: \forall g\in G\big\}\ .
\end{align}
Next, note that for any $k$-local operator $A$,  by applying a properly chosen permutation operator $S$ which changes the order of  sites, we can obtain an operator in the form $SAS^\dag=\tilde{A}\otimes I_\text{rest}$ with the property that $\tilde{A}$ acts  on a fixed set of $k$ sites (e.g., the first $k$ sites according to a certain ordering) and $I_\text{rest}$ is the identity operator on the remaining $n-k$  sites. Since charge vectors remain invariant under permutations, operators $A$ and $SAS^\dag=\tilde{A}\otimes I_\text{rest}$ have the same charge vectors. It follows that the subspace in Eq.(\ref{efef}) is equal to the set of the charge vectors of $G$-invariant Hermitian operators that act non-trivially only on a fixed set of $k$ sites (e.g., the first $k$ sites).  Therefore, as the number of total sites $n$ increases, $\text{dim}(\mathcal{S}_k)$ remains bounded by a number independent of $n$.  
In other words, even though using $k$-local $G$-invariant unitaries we can  simulate  Hamiltonians  that  are  not $k$-local,  they  can only  have  charge  vectors  which  are  allowed  for $k$-local $G$-invariant Hamiltonians. This explains why the upper bound on $\text{dim}(\mathcal{S}_k)$  in Eq.(\ref{dims}) does not depend on the system size.  \\

\noindent\textbf{Example-SU(2) symmetry with spin $s$ systems:} 
In the case of SU(2) symmetry, consider $n$ spin $s$ systems, each with the Hilbert space of dimension $2s+1$.
Recall that irreps of SU(2) can be labeled by the eigenvalues of the squared angular momentum operator  $J^2=J_x^2+J_y^2+J_z^2$. The eigenvalues  have the form of $j(j+1)$, where $j$ is half-integer 
which takes values $j=1/2,3/2, \cdots, ns$ if $s$ is not integer and $n$ is odd, and values  $j=0,1,\cdots, n$, otherwise.  In both cases the total number of distinct irreps is $|\text{Irreps}_{\text{SU}(2)}(n)|=\lfloor ns \rfloor +1$. Because SU(2) is a connected group, the bound in Eq.(\ref{dims}) holds as equality, i.e., $\text{dim}(\mathcal{S}_k)=\lfloor ks \rfloor +1$.   Furthermore, Eq.(\ref{Eq:dim3}) implies that the difference between the dimensions of the manifolds of all  SU(2)-invariant unitaries and those realizable by $k$-local SU(2)-invariant unitaries is lower bounded by  
\be
\text{dim}(\mathcal{V}_n^{_{\text{SU}(2)}})-\text{dim}(\mathcal{V}^{_{\text{SU}(2)}}_k)\ge \lfloor n s \rfloor- \lfloor k s \rfloor\ .
\ee
For integer spin $s$, this means 
that for any $k<n$, there are $(k+1)$-local unitaries that cannot be realized using $k$-local unitaries. Similarly, for non-integer  $s$,  there are $(k+2)$-local unitaries that cannot be realized using $k$-local unitaries.\\

\subsection{Full characterization of realizable U(1)-invariant Hamiltonians for qubits}\label{Sec:C:Methods}

In Supplementary Note 3, we study the example of U(1) symmetry for qubit systems.  Interestingly, it turns out that  in this example  the constraints imposed by the charge vectors fully characterize the set of realizable Hamiltonians.  The theorem below states these conditions.  

For a system with $n$ qubits define Hermitian operators $C_l: l=0,\cdots, n$ as 
\be\label{defCl}
C_l\equiv
\sum_{\substack{{\bf{b}}:  w({\bf{b}})=l}\ }  {\bf{Z}}^{\bf{b}}=\sum^n_{m=0} c_l(m) \  \Pi_m\ ,
\ee
where the first  summation is   over all bit strings $\textbf{b}=b_1\cdots b_n\in\{0,1\}^n$ with Hamming weight $w(\textbf{b})\equiv \sum_{j=1}^n b_j$ equal to $l$,   
and ${\bf{Z}}^{\bf{b}}=Z_1^{b_1}\cdots Z_n^{b_n}$.  In the second term,  $\Pi_m$ is the 
projector to the eigen-subspace of $N=\sum_{j=1}^n (I-Z_j)/2$ with 
eigenvalue $m$, and 
\be\label{bvb3}
c_l(m)=\sum_{s=0}^m (-1)^s  {{m}\choose{s}} {{n-m}\choose{l-s}}\ ,
\ee
 is the eigenvalue of $C_l$ in this subspace (recall that the binomial coefficient ${{a}\choose{b}}=0$ for $b>a$. See Supplementary Note 3 for derivation of Eq.(\ref{bvb3})). We prove  \\

\noindent\textbf{Theorem:} For any  U(1)-invariant Hamiltonian $H$ on $n$ qubits the family of unitaries $\{e^{-i t H}: t\in\mathbb{R}\}$ can be implemented using $k$-local U(1)-invariant unitaries for $k\ge 2$, if, and only if
\be\label{App:conds2}
\Tr(H C_l)=0\ \ \ \  :\ l=k+1,\cdots, n\ .\\
\ee 

Note that using Eq.(\ref{defCl}) these conditions can be rewritten in terms of the charge vector $|\chi_H\rangle=\sum_{m=0}^n \Tr(H \Pi_m)|m\rangle$ of Hamiltonian $H$, where $\{|m\rangle\}$ is a basis for an abstract $(n+1)$-dimensional vector space.

Eqs.(\ref{App:conds2}) impose  exactly $n-k$ independent constrains on the set of realizable Hamiltonians.  Hence, the difference between the dimension of realizable U(1)-invariant Hamiltonians and all U(1)-invariant Hamiltonians is exactly $n-k$, which means that in this case the general bound in Eq.(\ref{Eq:dim3}) holds as equality. This theorem is proven in the Supplementary Note 3. \\

\noindent\textbf{Proofs:}  All the results in the paper are rigorously proven in the Supplementary Notes 1-7.


\section*{Acknowledgments}
 
  I am grateful to Austin Hulse, David Jennings,  Hanqing Liu,   Hadi Salmasian, and Nicole Yunger-Halpern  for reading the manuscript carefully and providing many useful comments.  This work was supported by   NSF FET-1910571, NSF Phy-2046195  and  Army Research Office (W911NF-21-1-0005).  \\




%

\renewcommand\bibsection{\textbf{\large{References}}: \newline}

\newpage

\onecolumngrid

\onecolumngrid

\newpage

\maketitle
\vspace{-5in}
\begin{center}

\Large{Supplementary Material:\\ $ $ \\   Restrictions on realizable unitary operations imposed by symmetry and locality}
\end{center}
\appendix

	
	\title{\textbf{Supplementary Material}:\\ $ $ \\   Restrictions on realizable unitary operations imposed by symmetry and locality}
	
	\author{Iman Marvian}
\affiliation{Departments of Physics \& Electrical and Computer Engineering, Duke University, Durham, North Carolina 27708, USA}



\maketitle

\tableofcontents

\newpage

\section{Supplementary Note 1: Characterizing the group generated by local symmetric quantum circuits}\label{App:Char}
In this section we use a Lie algebraic approach to characterize the family of unitaries that can be implemented using local  symmetric quantum circuits. We start with a general setting, where the systems are not necessarily identical.

Consider $n$ systems, labeled as $j=1,\cdots, n$. Assume the dimension of the Hilbert space of system $j$ is $d_j<\infty$. Therefore, the total Hilbert space of the composite system is
\be\label{Hilbert}
\bigotimes_{j=1}^n \mathbb{C}^{d_j}\ .
\ee 
Suppose for each system we are given a unitary representation of a symmetry group $G$. In particular, let $u^{(j)}(g)$ be the action of group element $g\in G$ on the Hilbert space of system $j$. Then, the action of this group element  on the joint system  is described by the unitary
\be
U(g)=\bigotimes_{j=1}^n u^{(j)}(g)\ : \ \ g\in G\ .
\ee

\subsection{Lie Algebraic formulation of the problem}
 
Suppose one can implement unitary transformations  $\{e^{- i t K_j}: t\in \mathbb{R}, j=1,\cdots \}$, generated by a set of Hamiltonians $\{\pm K_j\}_j$. Composing these  unitaries one can implement  the group of unitaries  
\be
\Big\{e^{-i t_m K_{j_m}} \cdots e^{-i t_2 K_{j_2}} e^{-i t_1 K_{j_1}}\ :\ t_1,\cdots,t_m \in\mathbb{R}, m\in \mathbb{N}\Big\} \ .
\ee
Using the standard results in the theory of Lie groups and control theory \cite{d2007introduction, jurdjevic1972control}, it turns out that this group is a connected Lie group and is fully characterized by its corresponding Lie algebra. In particular, for any Hermitian operator $H$, the one-parameter family of unitaries $\{e^{-i H t}: t\in \mathbb{R}\}$ is in this group if, and only if, $i H$ is in the real Lie algebra generated by skew-Hermitian operators $\{i K_j\}_j$, denoted by $\mathfrak{alg}\{i K_j\}_j$, such that
\begin{align}\label{fwq}
i H=\sum_j &\alpha_j\ i K_j+\sum_{j_1,j_2} \beta_{j_1,j_2}\  [i K_{j_1}, i K_{j_2}] +\sum_{j_1,j_2, j_3} \gamma_{j_1,j_2,j_3}\ \big[[i K_{j_1}, i K_{j_2}], i K_{j_3}]\big]+ \cdots\ ,\nonumber
\end{align}
for some real coefficients    $\alpha_j,  \beta_{j_1,j_2}, \gamma_{j_1,j_2,j_3} \cdots$.  

We apply the above fact to study 
\be
\mathcal{V}_k^G\equiv \big\langle V: VV^\dag=I, V \text{ is $k$-local},   [V, U(g)]=0, \forall g\in G \big\rangle\ ,
\ee
i.e., the group  generated by $k$-local symmetric unitaries, where $I$ is the identity operator on the total Hilbert space in Eq.(\ref{Hilbert}) and $\langle\cdot\rangle$ denotes the generated group. Note that for any $k$-local symmetric Hamiltonian $h$, the family of unitaries generated by $h$  i.e., $\{e^{-i h t}: t\in\mathbb{R}\}$ are all $k$-local and symmetric. Conversely,  any $k$-local symmetric unitary $V$ can be obtained by applying a $k$-local symmetric Hamiltonian on the system for a finite time. Hence, we conclude that
\begin{proposition}\label{lkejf}
Let 
\be
\mathfrak{h}_{k}\equiv \mathfrak{alg}_\mathbb{R}\big\{A: \text{$k$-local}, A+A^\dag=0\ ,\ [A, U(g)]=0:\forall g\in G\big\}\ ,
\ee
be the real Lie algebra generated by the $k$-local, skew-Hermitian, $G$-invariant operators. For any Hermitian operator $H$, the family of unitaries $\{e^{-i H t}: t\in\mathbb{R}\}$ can be implemented  using $k$-local symmetric unitaries, if and only if, $iH\in \mathfrak{h}_{k}$, i.e.
\be
\forall t: e^{-i H t} \in \mathcal{V}_k^G  \ \ \Longleftrightarrow \ \  iH\in \mathfrak{h}_{k}\ .
\ee
\end{proposition}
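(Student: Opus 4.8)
The plan is to deduce Proposition~\ref{lkejf} from the standard control-theory characterization recalled above (the ``if and only if'' statement relating the group generated by a family of time evolutions $\{e^{-itK_j}\}$ to the real Lie algebra $\mathfrak{alg}_\mathbb{R}\{iK_j\}$, cf.\ Eq.~\eqref{fwq}). The one genuine task is to rewrite $\mathcal{V}_k^G$, which is defined as the group \emph{generated by $k$-local symmetric unitaries as individual elements}, as the group generated by the \emph{one-parameter time evolutions} of $k$-local symmetric Hamiltonians, so that the control-theory theorem applies verbatim. Both implications of the proposition then fall out simultaneously from that theorem.

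First I would establish the bridging lemma: every $k$-local $G$-invariant unitary $V$ can be written as $V=e^{-ih}$ with $h$ a $k$-local $G$-invariant Hermitian operator. Writing $V=\tilde V\otimes I_{\mathrm{rest}}$ with $\tilde V$ acting on a fixed set of $k$ sites, I take the spectral decomposition $\tilde V=\sum_\lambda e^{-i\phi_\lambda}\tilde P_\lambda$ and set $h=\big(\sum_\lambda \phi_\lambda\,\tilde P_\lambda\big)\otimes I_{\mathrm{rest}}$. This $h$ is Hermitian, acts only on the same $k$ sites (hence is $k$-local), and satisfies $e^{-ih}=V$. Since $[V,U(g)]=0$, the operator $U(g)$ commutes with each spectral projection $\tilde P_\lambda\otimes I_{\mathrm{rest}}$ (in finite dimension these projections are polynomials in $V$), so $[h,U(g)]=0$ as well. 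Conversely, for any such $h$ the whole family $\{e^{-iht}:t\in\mathbb{R}\}$ consists of $k$-local $G$-invariant unitaries. Because the \emph{defining} generators of $\mathcal{V}_k^G$ are \emph{all} $k$-local symmetric unitaries, and each $e^{-iht}$ is itself such a unitary, I conclude
\[
\mathcal{V}_k^G=\big\langle e^{-iht}: h \text{ is $k$-local, Hermitian, } [h,U(g)]=0\ \forall g,\ t\in\mathbb{R}\big\rangle .
\]

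Next I would pass to a finite generating set. For a fixed system the space of $k$-local $G$-invariant Hermitian operators is a finite-dimensional real vector space; let $\{K_1,\dots,K_N\}$ be a basis. Every $k$-local symmetric Hermitian $h$ is a real linear combination of the $K_j$, so $ih$ lies in $\mathfrak{alg}_\mathbb{R}\{iK_j\}$ and, by the control-theory theorem, $\{e^{-iht}\}\subseteq\langle e^{-iK_j t}\rangle$; together with the trivial reverse inclusion this gives $\mathcal{V}_k^G=\langle e^{-iK_j t}:j=1,\dots,N,\ t\in\mathbb{R}\rangle$. Now the control-theory theorem applies directly: $e^{-iHt}\in\mathcal{V}_k^G$ for all $t$ if and only if $iH\in\mathfrak{alg}_\mathbb{R}\{iK_1,\dots,iK_N\}$. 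Finally, since $\{iK_j\}$ is a basis of the generating set of $\mathfrak{h}_k$ and the generated Lie algebra depends only on the real span of its generators (the bracket being $\mathbb{R}$-bilinear), $\mathfrak{alg}_\mathbb{R}\{iK_j\}=\mathfrak{h}_k$, which is exactly the claimed equivalence.

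The main obstacle is the bridging lemma rather than the Lie-theoretic bookkeeping: one must ensure that taking a logarithm of a $k$-local symmetric unitary returns a generator that is \emph{simultaneously} $k$-local and $G$-invariant, and that replacing ``group generated by individual unitaries'' with ``group generated by one-parameter subgroups'' does not enlarge or shrink $\mathcal{V}_k^G$. Both points rest on the observation that for a $k$-local unitary $V=\tilde V\otimes I_{\mathrm{rest}}$ the spectral data of $V$ is supported on the same $k$ sites and commutes with $U(g)$ whenever $V$ does; once this is in hand, the reduction to a finite basis and the appeal to the standard control-theory result are routine.
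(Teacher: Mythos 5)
Your proposal is correct and follows essentially the same route as the paper: the standard control-theory equivalence between the generated group and the real Lie algebra $\mathfrak{alg}_\mathbb{R}\{iK_j\}$, combined with the observation that $k$-local symmetric unitaries are exactly the exponentials of $k$-local symmetric Hamiltonians. The only difference is that you explicitly prove the bridging step (via the spectral decomposition $V=\tilde V\otimes I_{\mathrm{rest}}$ and the fact that spectral projections of a unitary are polynomials in it, hence inherit $G$-invariance) and the reduction to a finite generating basis, both of which the paper asserts without detail.
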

Therefore, in the following, to characterize $\mathcal{V}_k^G$, we study the Lie algebra $\mathfrak{h}_{k}$.  Note that $\mathfrak{h}_{k}$ corresponds to the tangent space (at the identity) of the manifold  associated  to $\mathcal{V}^G_k$. In particular, the dimension of  $\mathcal{V}^G_k$ as a  manifold, is equal to the dimension of $\mathfrak{h}_{k}$ as a vector space (over the field  $\mathbb{R}$), i.e.
\be
\text{dim}(\mathcal{V}^G_k)=\text{dim}(\mathfrak{h}_{k})\ .
\ee
It is also worth noting that for $l>k$, $\mathcal{V}_k^G$ is a subgroup of $\mathcal{V}_l^G$ and $\mathfrak{h}_{k}$ is a sub-algebra of $\mathfrak{h}_{l}$. If $\text{dim}(\mathfrak{h}_{l})>\text{dim}(\mathfrak{h}_{k})$,  
then there are unitaries that can be implemented with $l$-local symmetric unitaries but not with $k$-local symmetric unitaries.

\subsection{Compactness of $\mathcal{V}^G_k$ and the impossibility of approximate implementation of symmetric unitaries}



As we show next, the group $\mathcal{V}^G_k$ generated by  $k$-local $G$-invariant unitaries is compact.  To prove this, we use the following fact, which can be proven using the techniques of the theory of algebraic groups (See e.g., chapter 5 of \cite{onishchik2012lie} and \cite{brylinski2002universal}).\\

\noindent \textbf{Fact 1:} For any compact Lie group, the subgroup generated by a finite set of compact connected subgroups is itself a compact connected Lie group. \\

Recall that in this paper, we only consider systems with finite-dimensional Hilbert spaces, and therefore the unitary group defined on the Hilbert space of the systems is a compact connected Lie group. Then, using this fact we can show that 

\begin{proposition}
For any symmetry group $G$, the group of unitaries $\mathcal{V}_k^G$ generated by $k$-local $G$-invariant unitaries is a compact connected Lie group.
\end{proposition}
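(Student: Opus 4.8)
The plan is to realize $\mathcal{V}_k^G$ as the subgroup of the full unitary group generated by \emph{finitely many} compact connected subgroups, and then invoke Fact 1 directly. The ambient group is $U(\mathcal{H})$, the unitary group of the total Hilbert space $\mathcal{H}=\bigotimes_{j=1}^n \mathbb{C}^{d_j}$ of Eq.(\ref{Hilbert}); since $\mathcal{H}$ is finite-dimensional, $U(\mathcal{H})$ is a compact connected Lie group, which supplies the ambient group required by Fact 1. The whole argument then reduces to exhibiting the right finite family of building blocks inside $U(\mathcal{H})$.

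The building blocks are, for each $k$-element subset $S\subseteq\{1,\dots,n\}$, the group
\[
\mathcal{W}_S \equiv \{V : VV^\dagger=I,\ V \text{ acts trivially outside } S,\ [V,U(g)]=0\ \forall g\in G\}
\]
of $G$-invariant unitaries supported on $S$. First I would show each $\mathcal{W}_S$ is a compact connected Lie group. Writing $V=\tilde V\otimes I_{\text{rest}}$ and $U(g)=u_S(g)\otimes u_{S^c}(g)$ with $u_S(g)=\bigotimes_{j\in S}u^{(j)}(g)$, one checks that $[V,U(g)]=0$ is equivalent to $[\tilde V,u_S(g)]=0$, so $\mathcal{W}_S$ is precisely the unitary group of the commutant of the representation $u_S$. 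By the Artin--Wedderburn structure theorem for finite-dimensional $*$-algebras, this commutant is isomorphic to a direct sum $\bigoplus_\mu M_{m_\mu}(\mathbb{C})$ of full matrix algebras, one block per irrep $\mu$ of $G$ occurring in $u_S$, with $m_\mu$ the corresponding multiplicity. Hence $\mathcal{W}_S\cong\prod_\mu U(m_\mu)$, a finite product of unitary groups, which is compact and connected.

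Next, since $n<\infty$ there are only finitely many $k$-element subsets $S$, so $\{\mathcal{W}_S : |S|=k\}$ is a finite family of compact connected subgroups of $U(\mathcal{H})$. Every generator of $\mathcal{V}_k^G$, i.e.\ every $k$-local $G$-invariant unitary, acts nontrivially on at most $k$ sites and therefore lies in some $\mathcal{W}_S$ (take $S$ to be any $k$-element superset of its support, which exists since $k\le n$); conversely every element of every $\mathcal{W}_S$ is itself a $k$-local $G$-invariant unitary. Thus $\mathcal{V}_k^G$ is exactly the subgroup of $U(\mathcal{H})$ generated by the finite family $\{\mathcal{W}_S\}$, and Fact 1 yields at once that $\mathcal{V}_k^G$ is a compact connected Lie group.

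The point requiring genuine care is twofold: the compactness and connectedness of each block $\mathcal{W}_S$, and the fact that the \emph{generated} group need not a priori be closed — which is precisely the gap Fact 1 is invoked to fill, and the reason compactness of $\mathcal{V}_k^G$ cannot simply be asserted from compactness of its generators. The load-bearing step is the identification $\mathcal{W}_S\cong\prod_\mu U(m_\mu)$ via the decomposition of the commutant, since it simultaneously delivers the compactness, connectedness, and Lie-group structure of each building block. Everything else — finiteness of the index set of subsets and the reduction of an arbitrary $k$-local generator to one supported on a fixed $k$-element set — is routine and uses only $n<\infty$.
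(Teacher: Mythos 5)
Your proof is correct and follows essentially the same route as the paper: both realize $\mathcal{V}_k^G$ as the subgroup of the (compact, connected) full unitary group generated by the finitely many groups of $G$-invariant unitaries supported on $k$-site subsets, verify each such group is a compact connected Lie group via the commutant decomposition $\mathcal{W}_S\cong\prod_\mu U(m_\mu)$, and conclude with Fact~1. Your write-up is merely more explicit than the paper's at the step identifying the blocks $\mathcal{W}_S$ (the paper asserts this by ``similar arguments'' and records the $\prod_\mu U(m_\mu)$ characterization separately), so no gap remains.
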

\begin{proof}
First, it can be easily seen that the group of $G$-invariant unitaries $\mathcal{V}^G$ is a connected Lie group (i.e., there is a smooth path between the identity and any other group element $V$). Furthermore, defined as the commutant of a set of operators, the subgroup of symmetric unitaries $\mathcal{V}^G$  is closed. Therefore, we conclude that it is a compact connected subgroup of the unitary group. In fact, as we discuss later, $\mathcal{V}^G$   has a simple characterization, as the direct product of groups isomorphic to the unitary group U($m_\mu$), for different integers $m_\mu$.

Using similar arguments, we can also see that for any finite subset of sites, the group of $G$-invariant unitaries  which act non-trivially only on those sites is also a connected compact Lie group. Finally, recall that $\mathcal{V}_k^G$ is generated by  $k$-local $G$-invariant unitaries, i.e., $G$-invariant unitaries that act non-trivially on arbitrary subset of $k$ sites. Assuming the system has a finite number of sites $n$, this means that $\mathcal{V}_k^G$ is generated by a finite set of connected compact  Lie groups. Using Fact 1 about compact Lie groups, we conclude that  $\mathcal{V}_k^G$  itself is a compact connected Lie group. 
\end{proof}


The compactness of the group $\mathcal{V}_k^G$  has several useful implications. For instance, as we mentioned before,  compactness implies that $\mathcal{V}_k^G$  is uniformly finitely generated by $k$-local symmetric unitaries  \cite{d2007introduction}. Another useful implication of compactness follows from the following fact about Lie groups:  \\

\noindent \textbf{Fact 2:} For compact connected Lie groups, the exponential map from the Lie algebra to the Lie group is surjective.\\

It follows that
\begin{corollary}\label{cor31}
A unitary $V$ can be implemented using $k$-local symmetric unitaries, i.e., $V\in \mathcal{V}^G_k$ if, and only if, there exists $C\in \mathfrak{h}_{k}$ such that $V=e^{C}$. In other words, $\mathcal{V}^G_k=e^{\mathfrak{h}_{k}}$. 
\end{corollary}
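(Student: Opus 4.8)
The plan is to read the Corollary as the single assertion that the exponential map $\exp\colon \mathfrak{h}_k \to \mathcal{V}_k^G$ is onto, and to prove the biconditional by treating its two inclusions $e^{\mathfrak{h}_k}\subseteq \mathcal{V}_k^G$ and $\mathcal{V}_k^G\subseteq e^{\mathfrak{h}_k}$ separately, each of which is essentially delivered by the machinery already assembled. First I would dispatch the implication ``$V=e^C$ with $C\in\mathfrak{h}_k$ implies $V\in\mathcal{V}_k^G$''. Since $\mathfrak{h}_k$ is the real Lie algebra generated by skew-Hermitian operators, and commutators and real linear combinations of skew-Hermitian operators are again skew-Hermitian, every $C\in\mathfrak{h}_k$ is skew-Hermitian; hence I may write $C=iH$ with $H=-iC$ Hermitian. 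By construction $iH=C\in\mathfrak{h}_k$, so Proposition \ref{lkejf} gives $e^{-iHt}\in\mathcal{V}_k^G$ for every $t\in\mathbb{R}$, and choosing $t=-1$ yields $e^{iH}=e^{C}=V\in\mathcal{V}_k^G$.

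For the converse, ``$V\in\mathcal{V}_k^G$ implies $V=e^{C}$ for some $C\in\mathfrak{h}_k$'', I would invoke the two structural results established just above: the group $\mathcal{V}_k^G$ is a compact connected Lie group, and $\mathfrak{h}_k$ is precisely its Lie algebra, realized as the tangent space at the identity, so that $\dim\mathcal{V}_k^G=\dim\mathfrak{h}_k$. Fact 2 states that for a compact connected Lie group the exponential map from its Lie algebra is surjective; applying this to $\mathcal{V}_k^G$ with Lie algebra $\mathfrak{h}_k$ immediately produces, for each $V\in\mathcal{V}_k^G$, an element $C\in\mathfrak{h}_k$ with $V=e^{C}$. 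Combining the two inclusions gives $\mathcal{V}_k^G=e^{\mathfrak{h}_k}$.

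The step I expect to carry the real weight is the converse, and within it the honest input is Fact 2 (surjectivity of $\exp$ for compact connected Lie groups), which is cited rather than reproved here; so the only thing genuinely left for me to verify is that the hypotheses of Fact 2 truly apply. Two points deserve attention: that $\mathcal{V}_k^G$ is both compact and connected, which is the content of the preceding proposition and is essential since connectedness alone does not force $\exp$ to be onto, and that the Lie algebra appearing in Fact 2 is exactly $\mathfrak{h}_k$ rather than some a priori larger or smaller algebra. The latter is the subtle identification: I would confirm that the generators of $\mathcal{V}_k^G$ are the $k$-local $G$-invariant unitaries, whose generating Hamiltonians have skew-Hermitian versions lying in $\mathfrak{h}_k$, while conversely $\mathfrak{h}_k$ is by definition generated by exactly these, so that Proposition \ref{lkejf} pins $\mathfrak{h}_k$ down as the Lie algebra of $\mathcal{V}_k^G$ with no mismatch. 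Granting this identification, no further obstacle remains and the Corollary follows.
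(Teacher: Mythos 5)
Your proposal is correct and follows essentially the same route as the paper: the inclusion $e^{\mathfrak{h}_k}\subseteq\mathcal{V}_k^G$ via Proposition \ref{lkejf}, and the converse by applying Fact 2 (surjectivity of $\exp$ for compact connected Lie groups) to $\mathcal{V}_k^G$, whose compactness and connectedness come from the preceding proposition and whose Lie algebra is identified with $\mathfrak{h}_k$. The paper simply states the corollary as an immediate consequence of these ingredients, so your more explicit two-inclusion write-up adds no new ideas, only detail.
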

 Therefore, by characterizing the Lie algebra $\mathfrak{h}_{k}$ we also find a full and direct characterization of $\mathcal{V}^G_k$. Hence, in the following, we focus on the study of the Lie algebra $\mathfrak{h}_{k}$.

 \subsection{Unitary evolution generated by local symmetric Hamiltonians}
Although we defined the group  $\mathcal{V}_k^G$ in terms of local symmetric quantum circuits, it can also be equivalently defined in terms of the unitary evolutions generated by local symmetric Hamiltonians.  Clearly, any quantum circuit can be interpreted as the unitary time evolution generated under a time-dependent Hamiltonian, whose symmetries and locality 
are determined by the symmetries and locality of the  quantum circuit.  In particular, any unitary $W\in \mathcal{V}_k^G$  can be generated by a Hamiltonian $H(t)=\sum_j h_j(t)$,  where each term $h_j(t)$ is $k$-local and $G$-invariant. In particular, we can choose k-local, $G$-invariant terms $h_j(t)$ such that  the family of unitaries  $V(t)$, generated by Hamiltonian $H(t)=\sum_j h_j(t)$ under the Schr\"{o}dinger equation,
\be\label{Sch}
\frac{d V(t)}{dt}=-i H(t) V(t)=-i\  \big[\sum_j h_j(t)\big]\  V(t)  ,
\ee
satisfy $V(t=0)=I$ and $V(t=1)=W$, where $I$ is the identity operator on the total Hilbert space.

 In the following, we argue that the converse is also true, i.e., the time evolution generated by any local symmetric Hamiltonian can also be realized by a finite local symmetric quantum circuit.

\begin{proposition} 
Suppose for all time $t\ge 0$, the Hermitian operator $H(t)$ is $G$-invariant and can be written as the sum of $k$-local terms. Then, for all time $t\ge 0$, the unitary evolution $V(t)$ generated by Hamiltonian $H(t)$ according to the Schr\"{o}dinger equation belongs to the Lie group $\mathcal{V}^G_k$, i.e., can be implemented by a quantum circuit with a finite number of $k$-local $G$-invariant gates.  Conversely, any unitary in $\mathcal{V}^G_k$ can be generated using a $G$-invariant Hamiltonian $H(t)$ that can be written as the sum of $k$-local terms. 
 \end{proposition}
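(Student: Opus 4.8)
The statement has two directions; the converse (``any element of $\mathcal{V}_k^G$ arises as such an evolution'') was essentially established in the paragraph preceding the proposition, so the plan is to concentrate on the forward direction: that the Schr\"odinger evolution $V(t)$ generated by $H(t)=\sum_j h_j(t)$ remains inside $\mathcal{V}_k^G$. The guiding idea is that $\mathcal{V}_k^G$ is a \emph{closed} subgroup of the full unitary group --- this is precisely the compactness established above --- so it suffices to exhibit $V(t)$ as a limit of elements already known to lie in $\mathcal{V}_k^G$.

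First I would record the elementary but essential observation that $e^{-i h_j(s)\tau}\in\mathcal{V}_k^G$ for every $s,\tau\in\mathbb{R}$: since $h_j(s)$ is Hermitian, $k$-local and $G$-invariant, its exponential is a unitary acting nontrivially on the same $\le k$ sites and commuting with every $U(g)$, i.e.\ it is a $k$-local $G$-invariant unitary and hence a generator of $\mathcal{V}_k^G$. (Equivalently, $-ih_j(s)\in\mathfrak{h}_{k}$, so Corollary~\ref{cor31} gives $e^{-ih_j(s)\tau}\in\mathcal{V}_k^G$.) Next I would approximate the time-ordered exponential by a Lie--Trotter product: fixing $t$, partitioning $[0,t]$ into $N$ equal steps with nodes $t_m=mt/N$, and setting
\[
W_N=\prod_{m=1}^{N}\ \prod_j e^{-i h_j(t_m)\,t/N}
\]
with a fixed ordering of the factors. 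Each $W_N$ is a finite product of the unitaries from the previous step, hence $W_N\in\mathcal{V}_k^G$. Because $\mathcal{V}_k^G$ is compact and therefore closed, once one shows $W_N\to V(t)$ the conclusion $V(t)\in\mathcal{V}_k^G$ follows immediately.

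I expect the technical heart of the argument to be exactly this convergence $W_N\to V(t)$, where one must simultaneously control the error of replacing the time-ordered integral by a product over time slices and the error of splitting $e^{-iH(t_m)t/N}$ into $\prod_j e^{-ih_j(t_m)t/N}$. Here finite-dimensionality is what saves us: all the $h_j(t)$ are uniformly bounded on $[0,t]$ and, assuming $t\mapsto h_j(t)$ is (piecewise) continuous, both errors are $O(1/N)$ uniformly, so the standard time-ordered-product theorem applies. A cleaner coordinate-free alternative I would note avoids Trotter altogether: since $-iH(t)=\sum_j\bigl(-i h_j(t)\bigr)$ is a sum of elements of the subalgebra $\mathfrak{h}_{k}$, the logarithmic derivative $\dot V(t)V(t)^{-1}=-iH(t)$ lies in $\mathfrak{h}_{k}$ for all $t$; by the standard fact that a curve starting at the identity whose logarithmic derivative stays in a Lie subalgebra remains in the corresponding connected subgroup, one again obtains $V(t)\in\mathcal{V}_k^G$.

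Finally, for the converse I would write any $V\in\mathcal{V}_k^G$ as a finite product $V=\prod_i V_i$ of $k$-local $G$-invariant unitaries, invoke surjectivity of the exponential map (Fact 2) on the relevant compact connected subgroup to write each $V_i=e^{-i h_i}$ with $h_i$ a $k$-local $G$-invariant Hermitian operator, and realize $V$ by the piecewise-constant Hamiltonian equal to $h_i$ on the $i$-th unit time interval --- a Hamiltonian that is $G$-invariant and a sum of $k$-local terms at every instant, whose Schr\"odinger evolution produces $V$.
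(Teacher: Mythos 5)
There is a genuine gap, and it sits exactly at the step you label ``elementary but essential.'' The hypothesis of the proposition is only that the \emph{sum} $H(t)=\sum_j h_j(t)$ is $G$-invariant; the individual $k$-local terms $h_j(t)$ need not commute with the $U(g)$ (e.g., $H=Z_1-Z_1$ written with $h_1=Z_1$, $h_2=-Z_1$ under a symmetry not preserving $Z_1$ illustrates that a decomposition into non-invariant pieces is perfectly admissible). Consequently your claim that $e^{-ih_j(s)\tau}\in\mathcal{V}_k^G$ is unjustified: these Trotter factors are $k$-local unitaries but in general \emph{not} symmetric, so the products $W_N$ need not lie in $\mathcal{V}_k^G$, and the compactness/closedness argument cannot conclude anything about their limit. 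The same defect invalidates your ``coordinate-free alternative,'' since $-ih_j(t)\in\mathfrak{h}_{k}$ again requires each term to be $G$-invariant; without that, the logarithmic derivative $-iH(t)$ is only known to lie in the span of $k$-local skew-Hermitian operators, not in the subalgebra $\mathfrak{h}_{k}$. In short, you have silently strengthened the hypothesis to ``each $h_j(t)$ is $G$-invariant,'' which begs the only nontrivial point of the statement.

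The missing ingredient — and the heart of the paper's proof — is a symmetrization step: replace each term by its Haar twirl $\tilde{h}_j(t)=\int dg\, U(g)\,h_j(t)\,U(g)^\dag$. Because $U(g)$ is a tensor product of on-site unitaries, twirling preserves $k$-locality (the support of $\tilde h_j$ is contained in that of $h_j$), each $\tilde h_j(t)$ is manifestly $G$-invariant, and the $G$-invariance of the total Hamiltonian gives $H(t)=\int dg\,U(g)H(t)U(g)^\dag=\sum_j\tilde h_j(t)$. Once this is in place, your remaining machinery does go through and essentially coincides with the paper's: now $-i\tilde h_j(t)\in\mathfrak{h}_{k}$, so either your logarithmic-derivative argument or the standard control-theory result cited by the paper puts $V(t)$ in the connected subgroup $\mathcal{V}_k^G$ (the Trotter-plus-closedness route also works, with the invariant factors $e^{-i\tilde h_j(t_m)t/N}$). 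Your treatment of the converse is correct and matches the paper's, which indeed regards that direction as immediate; your extra use of surjectivity of the exponential map to get a piecewise-constant generating Hamiltonian is sound but more than is needed, since each gate of a circuit realizing $V$ is already $e^{-ih_i}$ for some $k$-local $G$-invariant $h_i$ by definition of $\mathcal{V}_k^G$.
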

\begin{proof}
As we explained above the second part of this proposition  is trivial. 
To prove the first part, suppose $H(t)=\sum_j h_j(t)$ is $G$-invariant. 
This does not imply that the $k$-local terms $\{h_j(t)\}$ are also $G$-invariant. However, we can easily show that $H(t)$ can be written as the sum of $k$-local $G$-invariant terms, i.e.  $H(t)=\sum_j \tilde{h}_j(t)$, where each $\tilde{h}_j(t)$ is both $k$-local and $G$-invariant. This can be shown, for instance, by twirling over group $G$, using the uniform (Haar) distribution over group $G$. E.g. for a compact Lie group $G$, suppose we define 
\be
\tilde{h}_j(t)\equiv\int dg\  U(g) {h}_j(t) {U(g)^\dag}\ ,   
\ee
where $dg$ is the uniform measure over group $G$. It can be easily seen that $\tilde{h}_j(t)$ becomes $G$-invariant. Furthermore, because $U(g) =\bigotimes_{s=1}^n u^{(s)}(g)$, the operator  $U(g) {h}_j(t) U^\dag(g)$ acts trivially on all systems except on the $k$ systems, where ${h}_j(t)$ acts non-trivially. It follows that $\tilde{h}_j(t)$ is also $k$-local. Finally, the assumption that $H(t)=\sum_j h_j(t)$ is $G$-invariant implies
\be
H(t)=\int dg\  U(g) H(t) {U(g)^\dag}=\int dg\  U(g) \big[\sum_j h_j(t)\big] {U(g)^\dag}=\sum_j \tilde{h}_j(t)\ .
\ee 
Since all operators $\{\tilde{h}_j(t): t\ge 0\}_j$ are $k$-local and $G$-invariant, the Lie algebra generated by operators  $\{i\tilde{h}_j(t): t\ge 0\}_j$ is a sub-algebra of $\mathfrak{h}_k$, the Lie algebra associated to the Lie group $\mathcal{V}_k^G$. 

Finally, we use a standard result of quantum control theory \cite{d2007introduction, jurdjevic1972control}, which implies that the 1-parameter family of unitaries $\{V(t): t\ge 0\}$ satisfying the Schr\"{o}dinger equation in Eq.(\ref{Sch}) with the initial condition $V(t=0)=I$ are in the Lie group associated to the Lie algebra generated by the set of operators $\{i h_j(t): t\ge 0\}$. Together with the above result  this implies that the family of unitaries $\{V(t): t\ge 0\}$ belongs to  $\mathcal{V}_k^G$ for all $t\ge 0$.

\end{proof}

\newpage

 \section{Supplementary Note 2: Charge vectors and constraints on realizable Hamiltonians}
In this section we further study and develop the idea of charge vectors and  explain its  Lie-algebraic interpretation, as the projection to the center of the Lie algebra of $G$-invariant Hamiltonians.  Using this technique we prove the bound 
\be\label{Eq:dim3}
\text{dim}(\mathcal{V}^{G})-\text{dim}(\mathcal{V}^{G}_k)\ge |\text{Irreps}_G(n)|-|\text{Irreps}_G(k)|\ .
\ee
Furthermore, in the special case of connected compact Lie groups, such as U(1) and SU(2),  we obtain a more fine-grained version of this bound. Namely, we show that  for any integer $l$ in the interval $k\le l\le n$, it holds that 
\be\label{Eq:dim}
\text{dim}(\mathcal{V}_l^{G})-\text{dim}(\mathcal{V}^{G}_k)\ge |\text{Irreps}_G(l)|-|\text{Irreps}_G(k)|\ .
\ee
This means that, if $|\text{Irreps}_G(l)|>|\text{Irreps}_G(k)|$, then there are symmetric unitaries that can be implemented with $l$-local symmetric unitaries, but not with $k$-local symmetric unitaries. In other words, 
as $k$ increases from 1 to $n$, i.e., as the local unitaries become more \emph{non-local}, the subgroup $\mathcal{V}_k^G$ generated by $k$-local symmetric unitaries gradually becomes larger.  

The following theorem contains a useful summary of some other results in this section.
\begin{theorem}\label{Thm10}
Consider a system with $n$ identical sites, and let $\{U(g)=u(g)^{\otimes n}: g\in G\}$ be the unitary representation of an  arbitrary finite or compact Lie group  $G$ on this system. Suppose the family of unitaries $\{e^{-i H t}: t\in \mathbb{R}\}$ generated by Hamiltonian $H$, belongs to $\mathcal{V}_k^G$, i.e., can be implemented by $G$-invariant $k$-local unitaries. Then, there exists a set of real numbers $c_\nu\in\mathbb{R}$, such that for all group elements $g\in G$, it holds that
\be\label{kjh}
\Tr\big(H\ u(g)^{\otimes n}\big) =[\Tr(u(g))]^{n-k}\times \sum_{\nu\in \text{Irreps}_G(k)} c_\nu\  f_\nu(g) \ ,
\ee
where the summation is over all inequivalent irreducible representations of $G$ appearing in the representation $\{u(g)^{\otimes k}: g\in G\}$, and $f_\nu: G\rightarrow\mathbb{C}$ is the character of the irrep $\nu$. Conversely, for any set of real numbers $c_\nu\in\mathbb{R}$, there exists a Hermitian operator $H$ satisfying the above equality, such that all unitaries $\{e^{-i H t}: t\in \mathbb{R}\}$ generated by $H$ belong to $\mathcal{V}_k^G$.  
 Furthermore, for any unitary $V\in \mathcal{V}_k^G$, there exists a $G$-invariant Hermitian operator $H$, such that $V=e^{-i H}$, and $H$ satisfies 
the above condition for a set of  real numbers $c_\nu\in\mathbb{R}$.
\end{theorem}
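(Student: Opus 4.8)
The plan is to prove Theorem \ref{Thm10} by transporting the characterization of $\mathcal{S}_k$ obtained in the Methods section into the Fourier (character-function) picture. Recall that for a $G$-invariant operator $A$ the assignment $A\mapsto \chi_A$, with $\chi_A(g)=\Tr(A\,U(g))$, is the Fourier transform of the charge vector $|\chi_A\rangle$, and is a one-to-one linear map. Hence the hypothesis $e^{-iHt}\in\mathcal{V}_k^G$ for all $t$, which by definition places $|\chi_H\rangle$ in $\mathcal{S}_k$, becomes an explicit constraint on the function $\chi_H(g)=\Tr(H\,u(g)^{\otimes n})$, and the whole content of the theorem is to identify that constraint as exactly Eq.(\ref{kjh}).

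For the forward direction I would use the Methods result that $\mathcal{S}_k$ is spanned over $\mathbb{R}$ by charge vectors of $G$-invariant Hermitian operators acting non-trivially only on a fixed set of $k$ sites, say the first $k$. By linearity of the Fourier transform it suffices to evaluate $\chi_{\tilde A\otimes I_{\text{rest}}}$ for one such building block $\tilde A$, where the trace factorizes across the partition into the first $k$ sites and the remaining $n-k$:
\[
\Tr\big((\tilde A\otimes I_{\text{rest}})\,u(g)^{\otimes n}\big)=\Tr\big(\tilde A\,u(g)^{\otimes k}\big)\cdot[\Tr(u(g))]^{\,n-k}.
\]
Since $\tilde A$ commutes with $u(g)^{\otimes k}$, Schur's lemma lets me write $u(g)^{\otimes k}=\bigoplus_\nu (u_\nu(g)\otimes I_{m_\nu})$ over the isotypic components and $\tilde A=\bigoplus_\nu(I_{d_\nu}\otimes B_\nu)$, whence $\Tr(\tilde A\,u(g)^{\otimes k})=\sum_{\nu}\Tr(B_\nu)\,f_\nu(g)$, with $\Tr(B_\nu)\in\mathbb{R}$ because $\tilde A$ is Hermitian. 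This is precisely Eq.(\ref{kjh}) with $c_\nu=\Tr(B_\nu)$; taking real linear combinations over the building blocks only recombines the coefficients $c_\nu$ (the prefactor being common), so the form is preserved and the claim holds for any $H$ with $|\chi_H\rangle\in\mathcal{S}_k$.

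For the converse I would exhibit a concrete witness: given real $c_\nu$, set $\tilde A=\sum_{\nu\in\text{Irreps}_G(k)}(c_\nu/m_\nu)\,\Pi_\nu$, where $\Pi_\nu$ is the isotypic projector of $\nu$ in the $k$-site representation and $m_\nu\ge 1$ its multiplicity. This $\tilde A$ is Hermitian and $G$-invariant, and the direct computation $\Tr(\Pi_\nu\,u(g)^{\otimes k})=m_\nu f_\nu(g)$ shows that $H=\tilde A\otimes I_{\text{rest}}$ satisfies the required identity; since $H$ is itself $k$-local and $G$-invariant, $e^{-iHt}\in\mathcal{V}_k^G$ for all $t$. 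For the final claim I would invoke that $\mathcal{V}_k^G$ is a connected compact Lie group (established earlier), so its exponential map is surjective: any $V\in\mathcal{V}_k^G$ can be written $V=e^{-iH}$ with $-iH$ in the Lie algebra of $\mathcal{V}_k^G$, i.e.\ $H$ a $G$-invariant Hermitian operator lying in the Lie algebra generated by $k$-local $G$-invariant operators. The Methods argument (commutators contribute nothing to charge vectors) then forces $|\chi_H\rangle\in\mathcal{S}_k$, and the forward direction supplies Eq.(\ref{kjh}).

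The factorization and the Schur computation are routine; the step demanding the most care is making the Fourier duality between $|\chi_H\rangle$ and $\chi_H(g)$ fully rigorous over a general finite or compact Lie group, so that ``span of $k$-site charge vectors'' transfers cleanly to ``functions of the form Eq.(\ref{kjh})'', together with the appeal to surjectivity of the exponential map for the third claim. This last point is where connectedness and compactness of $\mathcal{V}_k^G$ are essential: for a disconnected or noncompact group the representability $V=e^{-iH}$ with $H$ in the generating Lie algebra could fail, and the argument would break down.
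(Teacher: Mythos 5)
Your proposal is correct and takes essentially the same route as the paper: the forward direction via the commutator-vanishing charge-vector argument plus permutation to a fixed set of $k$ sites, trace factorization, and the Schur-lemma block decomposition (the paper's lemmas \ref{Thm1} and \ref{Thm2}), the converse via real combinations of isotypic projectors, and the final claim via proposition \ref{lkejf} together with surjectivity of the exponential map on the compact connected group $\mathcal{V}_k^G$ (corollary \ref{cor31}). Your explicit witness $\tilde A=\sum_{\nu}(c_\nu/m_\nu)\Pi_\nu$ merely makes precise the multiplicity normalization that the paper leaves implicit in the constant $c_\mu>0$.
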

As we explain later, this theorem follows from lemma \ref{Thm2} (See the discussion after this lemma).

\subsection{Charge vectors of general symmetric Hamiltonians }

 Consider the decomposition of the  unitary representation $\{U(g): g\in G\}$ into the irreps of group $G$.  Under this decomposition the Hilbert space  decomposes as 
 \be
 \mathcal{H}\cong \bigoplus_{\mu\in \text{Irreps}(n)} \mathcal{H}_\mu\ ,
 \ee
  where the summation is over $\text{Irreps}(n)$, the set of  inequivalent irreducible representations (irreps) of $G$ appearing in this representation and   $\mathcal{H}_\mu$ is the subspace corresponding to irrep $\mu$, also known as the \emph{isotypic} component of $\mu$.  Using  Schur's lemmas, we can show that any  $G$-invariant operator is block-diagonal with respect to this decomposition and, in general, can have support in any arbitrary subset of these sectors (It is worth noting that in the case of non-Abelian groups Schur's lemmas imply stronger constraints on the form of  $G$-invariant operators. See Eq.(\ref{Schur})).  \color{black}  However, as we will see in the following, for Hamiltonians generated by a fixed set of $G$-invariant Hamiltonians $\{H_j\}_j$, the supports in different subspaces $\{\mathcal{H}_\mu\}$ satisfy  particular constraints, dictated by  Hamiltonians $\{H_j\}_j$.

For any operator $A$, consider the vector 
\be\label{epj}
|\chi_A\rangle\equiv \sum_{\mu\in \text{Irreps}(n)} \Tr(A \Pi_\mu)\ |\mu\rangle \ ,
\ee
where $\Pi_\mu$ is the projector to the subspace $\mathcal{H}_\mu$ and $\{|\mu\rangle:  \mu\in \text{Irreps}(n)\}$ is a set of orthonormal vectors in an abstract vector space (not the state space of the system).  Vector $|\chi_A\rangle$, which  will be called the \emph{charge vector} of operator $A$,  encodes information about the components of  this operator in the subspace spanned by $\{\Pi_\mu: \mu\in \text{Irreps}(n)\}$. As we explain in remark \ref{remark61}, it can be shown  that $\{i\Pi_\mu: \mu\in \text{Irreps}(n)\}$ is the center of  $\mathfrak{h}_n$, the Lie algebra of  skew-Hermitian $G$-invariant operators (Recall that the center of a Lie algebra is the subalgebra formed from the  elements of the algebra that commute with all elements of the Lie algebra).  Therefore, the charge vector $|\chi_A\rangle$  determines the projection of $A$ into the center of  $\mathfrak{h}_n$.

A general $G$-invariant Hamiltonian can have any charge vector with real coefficients. In particular,  for any set of real numbers $\{a_\mu\in\mathbb{R}\}$, the Hermitian operator $\sum_{\mu\in \text{Irreps}(n)} a_\mu \Pi_\mu/\Tr(\Pi_\mu)$ is $G$-invariant and has the charge vector $\sum_{\mu\in \text{Irreps}(n)} a_\mu \ |\mu\rangle$.   In other words, under the linear map $A\rightarrow  |\chi_A\rangle$,   the image of the Lie algebra of skew-Hermitian $G$-invariant operators    is  $\{i\sum_\mu a_\mu |\mu\rangle: a_\mu\in\mathbb{R} \}$, which is a vector space  over field $\mathbb{R}$, with dimension equal to $|\text{Irreps}_G(n)|$.

\begin{lemma}\label{Thm1}
Consider a set of $G$-invariant Hermitian operators $\{H_j\}_j$.  For any operator $A$, if $iA\in\mathfrak{alg}\{i H_j\}_j$ then the charge vector of $A$ is in the subspace spanned by the charge vectors of $\{H_j\}$, i.e., $|\chi_A\rangle\in \text{Span}_\mathbb{R}\{|\chi_j\rangle\}_j$, where $|\chi_j\rangle=\sum_{\mu\in \text{Irreps}(n)} \Tr(H_j \Pi_\mu)\ |\mu\rangle$, $\text{Irreps}(n)$ is the set of inequivalent irreps of $G$ in the representation $\{U(g): g\in G\}$ and $\Pi_\mu$ is the projector to the subspace corresponding to irrep $\mu$.  Furthermore, 
\be\label{lem:eq}
\text{dim}(\mathfrak{h}_n)-\text{dim}(\mathfrak{alg}_\mathbb{R}\{iH_j\})\ge |\text{Irreps}(n)|-\text{dim}(\text{Span}_\mathbb{R}\{|\chi_j\rangle\}_j)\ ,
\ee
where $\mathfrak{h}_n$ is the Lie algebra of all skew-Hermitian $G$-invariant operators.
\end{lemma}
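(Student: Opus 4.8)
The plan is to exploit two facts: that the charge-vector map $A\mapsto|\chi_A\rangle$ is \emph{real-linear} and annihilates every commutator of $G$-invariant operators, and that this map is surjective onto the full charge space when restricted to $\mathfrak{h}_n$. Part 1 follows from the first fact, and the dimension inequality \reef{lem:eq} follows by combining it with a rank--nullity count applied to both $\mathfrak{alg}_\mathbb{R}\{iH_j\}$ and $\mathfrak{h}_n$.

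For Part 1, I would first note that since each $H_j$ is $G$-invariant, the entire generated algebra $\mathfrak{alg}_\mathbb{R}\{iH_j\}$ consists of $G$-invariant skew-Hermitian operators, because the space of $G$-invariant operators is a real vector space closed under the Lie bracket. Hence any $A$ with $iA\in\mathfrak{alg}_\mathbb{R}\{iH_j\}$ admits a real expansion $iA = \sum_j \alpha_j\, iH_j + \sum_{j_1,j_2}\beta_{j_1 j_2}[iH_{j_1}, iH_{j_2}] + \cdots$. The crucial step is that every term beyond the linear ones is a commutator $[P,Q]$ of two $G$-invariant operators, for which $\Tr([P,Q]\Pi_\mu) = \Tr(P[Q,\Pi_\mu]) = 0$: here I use the cyclic property of the trace and the fact that $Q$, being $G$-invariant, commutes with the isotypic projector $\Pi_\mu$ (by Schur's lemma, or equivalently because $\Pi_\mu$ is a linear combination of the $U(g)$). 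Applying the linear charge-vector map termwise then gives $|\chi_A\rangle = \sum_j \alpha_j |\chi_j\rangle \in \text{Span}_\mathbb{R}\{|\chi_j\rangle\}_j$.

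For the inequality, I would view the charge-vector map $\Phi$ as a real-linear map from $\mathfrak{h}_n$ into the real space $W\equiv\{i\sum_\mu a_\mu|\mu\rangle : a_\mu\in\mathbb{R}\}$ of dimension $|\text{Irreps}(n)|$. As recorded just before the lemma, $\Phi|_{\mathfrak{h}_n}$ is surjective onto $W$ (the operators $i\sum_\mu a_\mu \Pi_\mu/\Tr(\Pi_\mu)$ realize every target), so rank--nullity gives $\text{dim}(\mathfrak{h}_n) = \text{dim}(\ker\Phi|_{\mathfrak{h}_n}) + |\text{Irreps}(n)|$. Writing $\mathfrak{g}\equiv\mathfrak{alg}_\mathbb{R}\{iH_j\}\subseteq\mathfrak{h}_n$, Part 1 identifies $\Phi(\mathfrak{g}) = i\cdot\text{Span}_\mathbb{R}\{|\chi_j\rangle\}_j$ (the inclusion $\subseteq$ is Part 1, and $\supseteq$ holds since each $iH_j\in\mathfrak{g}$), whence $\text{dim}(\mathfrak{g}) = \text{dim}(\ker\Phi|_{\mathfrak{g}}) + \text{dim}(\text{Span}_\mathbb{R}\{|\chi_j\rangle\}_j)$.

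Finally I would subtract the two identities. Because $\ker\Phi|_{\mathfrak{g}} = \mathfrak{g}\cap\ker\Phi|_{\mathfrak{h}_n}\subseteq\ker\Phi|_{\mathfrak{h}_n}$, the kernel dimensions satisfy $\text{dim}(\ker\Phi|_{\mathfrak{h}_n}) - \text{dim}(\ker\Phi|_{\mathfrak{g}})\ge 0$, and the difference of the two rank--nullity relations collapses to exactly \reef{lem:eq}. The only genuinely substantive ingredient is the commutator-vanishing identity of Part 1; once it is established, the rest is bookkeeping. The point needing the most care is making precise that \emph{every} nested commutator (not merely the outermost one) can be written as a bracket of two operators commuting with $\Pi_\mu$ — this follows by induction from the closedness of the $G$-invariant operators under the Lie bracket.
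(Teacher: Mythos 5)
Your proposal is correct and follows essentially the same route as the paper's proof: the charge vector annihilates commutators via cyclicity of the trace and $[\Pi_\mu,\cdot]=0$ for $G$-invariant operators, and the dimension bound then follows from rank--nullity applied to the charge-vector map on $\mathfrak{alg}_\mathbb{R}\{iH_j\}\subseteq\mathfrak{h}_n$. The only cosmetic difference is that the paper groups the nested commutators as $i[B,H_j]$ (so that only the $G$-invariance of the generators is invoked), whereas you use the closedness of $G$-invariant operators under the bracket to treat each higher term as a commutator of two invariant operators — both are valid and you correctly flag the needed induction.
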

\begin{proof}
Recall that the elements of the Lie algebra  $\mathfrak{alg}_\mathbb{R}\{iH_j\}$ can be written as  linear combination of terms $\{iH_j\}_j$, and their nested commutators  $\{[i H_{j_1}, i H_{j_2}], [[i H_{j_1}, i H_{j_2}],i H_{j_3}],..\}$. This implies that if $iA \in \mathfrak{alg}_\mathbb{R}\{iH_j\}_j$, then there exists a set of real coefficients $\{a_j\in\mathbb{R}\}$ and a Hermitian operator $B$, such that
 \be
 A= \sum_j a_j H_j+ i [B, H_j]\ .
 \ee
Let $\Pi_\mu$ be the projector to the subspace corresponding to irrep $\mu$. Then, this equation implies
 \begin{align}
\Tr(A \Pi_\mu) &= \sum_j  a_j  \Tr(\Pi_\mu H_j)+ i\Tr(\Pi_\mu [H_j, B])\\ &=\sum_j  a_j  \Tr(\Pi_\mu H_j)+i\Tr([\Pi_\mu,H_j] B) \\ &=\sum_j  a_j  \Tr(\Pi_\mu H_j)\ ,
 \end{align}
where the second line follows from the cyclic property of trace and the last line follows from the assumption that for each $j$, operator $H_j$  is $G$-invariant, which implies $[\Pi_\mu, H_j]=0$. Therefore, we conclude that
\be
|\chi_A\rangle\equiv \sum_\mu \Tr(A \Pi_\mu)  |\mu\rangle=\sum_j  a_j \sum_\mu \Tr(\Pi_\mu H_j) |\mu\rangle=\sum_j  a_j |\chi_j\rangle\ ,
\ee
i.e., $|\chi_A\rangle\in \text{span}_\mathbb{R}\{|\chi_j\rangle\}$, which proves the first part of lemma. 

To prove the second part, we
use the rank-nullity theorem for the linear map  $ C\rightarrow i |\chi_C\rangle$.
 Rank-nullity theorem \cite{axler1997linear} states that  for any linear map, the dimension of the domain is equal to the sum of the  dimensions of its image (i.e., the rank of the map) and its kernel (i.e., the nullity of the map).   Since  $\mathfrak{alg}_\mathbb{R}\{i H_j\}$ is a subspace of $\mathfrak{h}_n$,  
using the rank-nullity theorem, we find that the difference between the dimensions of  $\mathfrak{h}_n$ and its  subspace $\mathfrak{alg}_\mathbb{R}\{iH_j\}$  is larger than, or equal to, the  difference between the dimensions of their images under the linear map  $ C\rightarrow i |\chi_C\rangle$. As we have seen before, the dimension of these images are respectively,  $|\text{Irreps}(n)|$ and    
$\text{dim}(\text{Span}_\mathbb{R}\{|\chi_j\rangle\}_j)$. This immediately implies   Eq.(\ref{lem:eq}).
 \end{proof}

\begin{remark}
In Eq.(\ref{epj}) we defined charge vectors based on projectors $\{\Pi_\mu\}$. This basis spans a subspace with dimension $|\text{Irreps}(n)|$ of the operator space, which corresponds to the center of $\mathfrak{h}_n$ (See remark \ref{remark61}). In general, charge vectors can be defined in terms of any arbitrary basis for this space. 
\end{remark}


\subsection{Charge vector and its Fourier Transform} 

Here, we describe a sightly different way of formulating the constraints on the charge vectors found in  lemma \ref{Thm1}.  This formulation does not explicitly depend on the decomposition  of symmetry to irreducible representations.  \color{black} For any operator $A$, consider the function $\chi_A: G\rightarrow \mathbb{C}$ defined by equation 
 \be
 \chi_A(g)\equiv \Tr(A U(g))\ : \ g\in G  \ .
 \ee
 Then, using an argument  similar to the  argument used in the proof of lemma \ref{Thm1},  we can easily prove
 \begin{lemma}\label{lem2}
Assume $i A\in \mathfrak{alg}_\mathbb{R}\{iH_j\}$, where $\{H_j\}_j$ are $G$-invariant Hermitian operators. Then,  
$\chi_A\in \text{Span}_\mathbb{R}\{\chi_j\}_j$, where $\chi_j(g)=\Tr(H_j U(g)): \forall g\in G$\ . Furthermore,
\be
\text{dim}(\mathfrak{h}_n)-\text{dim}(\mathfrak{alg}_\mathbb{R}\{iH_j\})\ge |\text{Irreps}(n)|-\text{dim}(\text{Span}_\mathbb{R}\{\chi_j\}_j)\ .
\ee
\end{lemma}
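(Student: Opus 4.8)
The plan is to run the proof of Lemma~\ref{Thm1} essentially verbatim, with the single family of symmetry operators $\{U(g): g\in G\}$ playing the role that the projectors $\{\Pi_\mu\}$ played there, and then to transport the dimension count through the one-to-one correspondence between the function $\chi_A$ and the charge vector $|\chi_A\rangle$.

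First I would prove the containment $\chi_A\in\text{Span}_\mathbb{R}\{\chi_j\}_j$. Since the $G$-invariant operators are closed under commutators, the hypothesis $iA\in\mathfrak{alg}_\mathbb{R}\{iH_j\}$ lets me write $A=\sum_j a_j H_j + C$ with real $a_j$, where $C$ is a real combination of terms $i[P,Q]$ with $P,Q$ Hermitian and $G$-invariant. The one computation that matters is that every commutator term is annihilated by the map $\chi$: for each $g\in G$,
\be
\Tr([P,Q]\,U(g))=\Tr(PQ\,U(g))-\Tr(QP\,U(g))=0\ ,
\ee
where the second term is rewritten using cyclicity of the trace together with $[Q,U(g)]=0$ (valid because $Q$ is $G$-invariant), giving $\Tr(QP\,U(g))=\Tr(P\,U(g)Q)=\Tr(PQ\,U(g))$. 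This is exactly the analogue of the identity $[\Pi_\mu,H_j]=0$ used in Lemma~\ref{Thm1}. Hence $\chi_C\equiv 0$ and $\chi_A=\sum_j a_j\chi_j$, as claimed.

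For the dimension bound I would apply the rank-nullity theorem to the real-linear map $\Phi$ sending a $G$-invariant skew-Hermitian operator $C$ to the function $\chi_C:g\mapsto\Tr(C\,U(g))$, restricted to the subspace $W\equiv\mathfrak{alg}_\mathbb{R}\{iH_j\}\subseteq\mathfrak{h}_n$. Because $\ker(\Phi|_W)=W\cap\ker(\Phi|_{\mathfrak{h}_n})\subseteq\ker(\Phi|_{\mathfrak{h}_n})$, subtracting the two rank-nullity identities yields
\be
\text{dim}(\mathfrak{h}_n)-\text{dim}(W)\ \ge\ \text{dim}\,\Phi(\mathfrak{h}_n)-\text{dim}\,\Phi(W)\ .
\ee
The first part of the lemma, together with $\Phi(iH_j)=i\chi_j$, shows $\Phi(W)=i\,\text{Span}_\mathbb{R}\{\chi_j\}_j$, so $\text{dim}\,\Phi(W)=\text{dim}(\text{Span}_\mathbb{R}\{\chi_j\}_j)$, and the desired inequality follows once I establish $\text{dim}\,\Phi(\mathfrak{h}_n)=|\text{Irreps}(n)|$.

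The hard part---really the only non-routine step---is this last equality, i.e.\ checking that $\chi_A$ carries exactly as much information as $|\chi_A\rangle$. I would invoke Schur's lemma to write any $G$-invariant $A$ as $\bigoplus_\mu (M_\mu\otimes I_{d_\mu})$ in the isotypic decomposition, so that $\chi_A(g)=\sum_\mu\Tr(M_\mu)\,f_\mu(g)$ while $\Tr(A\Pi_\mu)=d_\mu\Tr(M_\mu)$. Both $\chi_A$ and $|\chi_A\rangle$ are then determined by the same coefficients $\{\Tr(M_\mu)\}$ through invertible maps---the charge vector because $d_\mu\neq 0$, and the function because the irreducible characters $\{f_\mu\}$ are linearly independent. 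This is precisely the group Fourier transform referred to in the main text, so $\Phi(\mathfrak{h}_n)$ and the image of the charge-vector map have equal dimension, namely $|\text{Irreps}(n)|$. Substituting into the displayed inequality gives the stated bound and completes the proof.
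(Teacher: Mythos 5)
Your proposal is correct and is essentially the paper's own argument: the paper proves Lemma~\ref{lem2} by repeating the proof of Lemma~\ref{Thm1} with $U(g)$ in place of $\Pi_\mu$ (cyclicity of the trace plus $[Q,U(g)]=0$ killing all commutator terms), followed by the same rank-nullity comparison of images. Your identification $\dim\Phi(\mathfrak{h}_n)=|\text{Irreps}(n)|$ via the Schur decomposition $\chi_A=\sum_\mu \Tr(M_\mu) f_\mu$ and linear independence of characters is exactly the Fourier-transform correspondence the paper establishes in Eqs.~(\ref{Schur})--(\ref{Fourier}), so there is nothing missing.
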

For any $G$-invariant operator $A$, the function $\chi_A$ and the charge vector $|\chi_A\rangle$ are related via Fourier transform. To understand the connection better, consider the decomposition of the representation $\{U(g): g\in G\}$ to the irreducible representations  of $G$. If $G$ is a finite or compact Lie group, then every finite-dimensional representation is completely reducible, i.e., there exists a unitary  $W$ such that  
\be\label{rhp}
 W U(g) W^\dag= \bigoplus_{\mu\in \text{Irreps}(n)} u^{(\mu)}(g)\otimes I_{m_\mu}\ \ ,\ \ \ \ \forall g\in G
\ee
and the Hilbert space $\mathcal{H}$ decomposes as 
\be
\mathcal{H}\cong \bigoplus_{\mu\in \text{Irreps}(n)} \mathcal{H}_\mu\ \cong \bigoplus_{\mu\in \text{Irreps}(n)} \mathbb{C}^{d_\mu}\otimes \mathbb{C}^{m_\mu} \ ,
\ee
  where $\text{Irreps}(n)$ is the set of inequivalent  irreps of $G$ appearing in the representation $\{U(g): g\in G\}$,  $\{u^{(\mu)}(g): g\in G\}$ is the irreducible representation  which acts irreducibly on $\mathbb{C}^{d_\mu}$, $d_\mu$ is the dimension of  irrep $\mu$ and  ${m_\mu}$ is its multiplicity, and $I_{m_\mu}$ is the identity operator on $\mathbb{C}^{m_\mu} $.   Using Schur's lemmas, one can show that in this basis any $G$-invariant operator $A$ can be written as 
\be\label{Schur}
W A  W^\dag= \bigoplus_{\mu\in \text{Irreps}(n)} I_{d_\mu} \otimes A^{(\mu)}\ ,
\ee
 where $I_{d_\mu} $ is the identity operator on $\mathbb{C}^{d_\mu}$, and $A^{(\mu)}$ is an operator acting on $\mathbb{C}^{m_\mu} $ (See e.g. \cite{QRF_BRS_07}). Using this decomposition,  the charge vector of operator $A$ is 
 \be
 |\chi_A\rangle=\sum_{\mu\in \text{Irreps}(n)} \Tr(\Pi_\mu A)\ |\mu\rangle= \sum_{\mu\in \text{Irreps}(n)} d_\mu\times \Tr( A^{(\mu)})\ |\mu\rangle=\sum_{\mu\in \text{Irreps}(n)} d_\mu\times a_\mu \ |\mu\rangle\  ,
 \ee
where $a_\mu= \Tr(A^{(\mu)})$. On the other hand, 
\be\label{kohqef}
\chi_A(g)=\Tr(A U(g))=\Tr\big(\bigoplus_{\mu\in \text{Irreps}(n)} u^{(\mu)}(g) \otimes A^{(\mu)}\big)=\sum_{\mu\in \text{Irreps}(n)} \Tr(u^{(\mu)}(g))\times  \Tr(A^{(\mu)})=\sum_{\mu\in \text{Irreps}(n)} a_\mu\ f_\mu(g)\ ,
\ee
where $f_\mu(g)=\Tr(u^{(\mu)}(g))$ is the character of 
irrep $\mu$.  Recall the orthogonality relation  
for the characters \cite{georgi2018lie}, namely
\be
\int_G dg\  f_\nu(g) f^\ast_\mu(g)=\delta_{\mu,\nu}\ ,  
\ee
where the integral is over the Haar (uniform) measure for compact Lie group $G$ and $f^\ast_\mu$ is the complex conjugate of the character $f_\mu$. 
\color{black}
Using this, one can obtain $|\chi_A\rangle$ from  $\chi_A(g)$ and vice versa, using Fourier transforms. In particular, 
 \be\label{Fourier}
 |\chi_A\rangle=\int_G dg\  \chi_A(g) \ \sum_{\mu\in \text{Irreps}(n)} d_\mu\  f^\ast_\mu(g)   |\mu\rangle\ ,
 \ee 
where  $d_\mu$ is  the dimension of irrep $\mu$. 

It is worth  noting that Eq.(\ref{Schur}) implies that the group of symmetric unitaries $\mathcal{V}^G$ is isomorphic to the direct product of unitary groups U($m_\mu$), for all $\mu\in \text{Irreps}(n)$, where $m_\mu$ is the multiplicity of irrep $\mu$ in the representation $\{U(g): g\in G\}$.   \\

\begin{remark}\label{remark61}   (Center of the Lie algebra of $G$-invariant Hamiltonians) Decomposition in Eq.(\ref{Schur}) gives a simple  characterization of  $\mathfrak{h}_n$, the Lie algebra of $G$-invariant skew-Hermitian operators. In particular, for any set of skew-Hermitian operators $\{A^{(\mu)}: \mu \in \text{Irreps}(n)\}$ the operator $A$ is in $\mathfrak{h}_n$. Using Schur's lemmas, this immediately implies that the center of $\mathfrak{h}_n$ is spanned by $\{i \Pi_{\mu}: \mu\in \text{Irreps}(n)\}$. This means that for any skew-Hermitian $G$-invariant operator $A$, its charge vector $|\chi_A\rangle\equiv \sum_\mu \Tr(A \Pi_\mu)  |\mu\rangle$ determines the component of $A$ in the center of the Lie algebra $\mathfrak{h}_n$. 
\end{remark}

\subsection{Charge vectors of the Lie algebra generated by $k$-local symmetric Hamiltonians: The case of identical subsystems}

In this section, we focus on the special case where all subsystems $j=1,\cdots n$, have identical Hilbert spaces isomorphic to $\mathbb{C}^d$ for a finite $d$, and identical unitary representation of symmetry $G$. In particular, we assume the action of group element $g\in G$ on the composite system is represented by $U(g)=u(g)^{\otimes n}$. 

Suppose the family of unitaries $\{e^{-i H t}: t\in\mathbb{R}\}$ can be implemented using $k$-local $G$-invariant unitaries, i.e., suppose $iH\in\mathfrak{h}_k$. Then,  lemma  \ref{Thm1} implies that the charge vector of $H$ should be in the subspace spanned by the charge vector of $k$-local $G$-invariant Hamiltonians, denoted by
\be\label{defS}
\mathcal{S}_k\equiv  \{|\chi_H\rangle: e^{-i H t }\in \mathcal{V}_k^G , \forall t\in\mathbb{R}\}=\{|\chi_A\rangle: i A\in\mathfrak{h}_k\} =\{|\chi_A\rangle: A=A^\dag,  A\text{ is $k$-local},\ [A, U(g)]=0\ : \forall g\in G\}\ ,
\ee
where the last equality follows from lemma  \ref{Thm1}. As we discussed in remark \ref{remark61} charge vector of a $G$-invariant Hamiltonian $H$ determines the component of $i H$ in the center of $\mathfrak{h}_n$. This means $\mathcal{S}_k$ has  a simple interpretation:   
It determines the subspace of the center of $\mathfrak{h}_n$ which is included in $\mathfrak{h}_k$, i.e., is generated by $k$-local $G$-invariant Hamiltonians. In particular, $\text{dim}(\mathcal{S}_k)$ is the dimension of the subspace of the center of $\mathfrak{h}_n$ that is also included in  $\mathfrak{h}_k$.

Using Eq.(\ref{defS}), we can immediately see that the  dimension of $\mathcal{S}_k$ does not grow with $n$, the total number of sites, and only depends of $k$: First, note that the charge vectors are invariant under permutations of sites, i.e., for any permutation $S$, the charge vector of $A$ and $S A S^\dag$ are equal. This follows from the assumption that all sites have identical representation of symmetry and any such permutation leaves the total charge in the system invariant, i.e., $[S, \Pi_mu]=0$ for all $\mu\in\text{Irreps}(n)$.

Furthermore, by applying a proper   permutation $S$,  any $k$-local operator $A$ can be converted to an operator $S A S^\dag$, which acts non-trivially only  on a fixed set of $k$ sites. Therefore, the dimension of $\mathcal{S}_k$ is equal to the dimension of the subspace spanned by the charge vectors of $G$-invariant operators which act nontrivially only on a fixed $k$ sites. This immediately implies the dimension of   $\mathcal{S}_k$ does not grow with $n$. In fact, as we show next, it can  be easily seen that  $\text{dim}(\mathcal{S}_k)\le |\text{Irreps}_G(k)|$, 
where $|\text{Irreps}_G(k)|$ is the number of distinct irreps of $G$  appearing in representation $\{u(g)^{\otimes k}: g\in G\}$. This  follows from the following lemma, which also gives a simple criterion for testing whether  the charge vector of $A$ is 
in $\mathcal{S}_k$, or not.

\begin{lemma}\label{Thm2}
Suppose  $iB\in \mathfrak{h}_k\equiv \mathfrak{alg}\{A: \text{$k$-local}, A+A^\dag=0, [A, u(g)^{\otimes n}]=0:\forall g\in G\}$ . Then,  there exists a set of real coefficients  $\{b_\mu\in \mathbb{R}\}$, such that
 \be\label{wfefe}
 \forall g\in G: \ \ \ \ \ \ \  \Tr(u(g)^{\otimes n} B)=[\Tr(u(g))]^{n-k}\times \sum_{\mu\in\text{Irreps}_G(k)} b_\mu\  f_\mu(g)\ ,
 \ee
 where the summation is over $\text{Irreps}_G(k)$, the set of distinct irreps of $G$  appearing in representation $\{u(g)^{\otimes k}: g\in G\}$ and  $f_\mu$ is the character of irrep $\mu$.   Furthermore, for any set of real numbers  $\{b_\mu\}$, there exists a Hermitian $G$-invariant  operator $B$, satisfying this equation. 
\end{lemma}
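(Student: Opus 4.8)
The plan is to reduce the statement to a characterization of charge functions on a single fixed block of $k$ sites, and then exploit the multiplicativity of the trace over the tensor factorization into the "active" $k$ sites and the $n-k$ spectator sites. For the forward direction I would start from the hypothesis $iB\in\mathfrak{h}_k$ and invoke Lemma \ref{lem2}: since $\mathfrak{h}_k$ is generated by the $k$-local skew-Hermitian $G$-invariant operators, taking $\{H_j\}$ to be a spanning set of $k$-local $G$-invariant \emph{Hermitian} operators gives $\mathfrak{h}_k=\mathfrak{alg}_\mathbb{R}\{iH_j\}$, and Lemma \ref{lem2} then forces the charge function $\chi_B(g)=\Tr(u(g)^{\otimes n}B)$ to lie in the real span of the $\chi_{H_j}$ (the nested commutators contribute nothing, exactly as in the proof of Lemma \ref{Thm1}). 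It therefore suffices to characterize this span.

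Next I would simplify the span using permutation invariance. Because all sites carry the same representation $u(g)$, every site-permutation unitary $S$ is $G$-invariant and commutes with each $\Pi_\mu$, so $\chi_A=\chi_{SAS^\dagger}$; conjugating a $k$-local $A$ by a suitable $S$ moves its support onto a fixed set of $k$ sites, say the first $k$. Hence the span equals that of the charge functions of $G$-invariant Hermitian operators of the form $A=\tilde A\otimes I_{\text{rest}}$, with $\tilde A$ acting on the first $k$ sites. A short check shows that $G$-invariance of $A$ under $u(g)^{\otimes n}$ forces $\tilde A$ to be $G$-invariant under the $k$-site representation $u(g)^{\otimes k}$. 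The key computation is then the factorization of the trace:
\be
\chi_A(g)=\Tr\!\big(u(g)^{\otimes n}\,(\tilde A\otimes I_{\text{rest}})\big)=\Tr\!\big(u(g)^{\otimes k}\tilde A\big)\,\Tr\!\big(u(g)^{\otimes(n-k)}\big)=[\Tr(u(g))]^{n-k}\,\Tr\!\big(u(g)^{\otimes k}\tilde A\big)\ .
\ee
Applying the identity Eq.(\ref{kohqef}) in the $k$-site Hilbert space to the remaining factor expresses $\Tr(u(g)^{\otimes k}\tilde A)=\sum_{\mu\in\text{Irreps}_G(k)}b_\mu f_\mu(g)$ with $b_\mu=\Tr(\tilde A^{(\mu)})$, which is real when $\tilde A$ is Hermitian. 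Substituting yields Eq.(\ref{wfefe}).

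For the converse I would construct $B$ explicitly. Given real $\{b_\mu\}$, let $\tilde\Pi_\mu$ be the projector onto the isotypic component of $\mu$ in the $k$-site space and set $\tilde A=\sum_{\mu\in\text{Irreps}_G(k)}(b_\mu/\Tr(\tilde\Pi_\mu))\,\tilde\Pi_\mu$. This operator is Hermitian and $G$-invariant, and the same reasoning as around Eq.(\ref{epj}) gives $\Tr(u(g)^{\otimes k}\tilde A)=\sum_\mu b_\mu f_\mu(g)$. Then $B=\tilde A\otimes I_{\text{rest}}$ is $k$-local, Hermitian, and $G$-invariant, so $iB\in\mathfrak{h}_k$, and the trace factorization above shows it satisfies Eq.(\ref{wfefe}).

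The parts requiring the most care are the two reductions feeding the central computation: verifying that charge functions are genuinely permutation-invariant, so that restricting to a fixed $k$-site block loses nothing, and checking that $G$-invariance descends from $A$ to the reduced operator $\tilde A$. The trace factorization itself is elementary, but it is the crux of the argument—the appearance of the spectator factor $[\Tr(u(g))]^{n-k}$ is precisely what confines $\chi_B$ to a subspace of dimension at most $|\text{Irreps}_G(k)|$, independent of $n$, and thereby yields the bound $\text{dim}(\mathcal{S}_k)\le|\text{Irreps}_G(k)|$.
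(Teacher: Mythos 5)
Your proof is correct and takes essentially the same route as the paper's: permute the support of a $k$-local generator onto a fixed block of $k$ sites (using permutation invariance of charge functions), factor the trace to extract the spectator term $[\Tr(u(g))]^{n-k}$, expand the remaining factor via Eq.(\ref{kohqef}) on the $k$-site space, and construct $B=\tilde{A}\otimes I_{\text{rest}}$ from isotypic projectors for the converse. Your only addition is making explicit the reduction from general elements of $\mathfrak{h}_k$ to single $k$-local generators via Lemma \ref{lem2}, a step the paper's written proof (which treats $B$ as a single $k$-local operator) leaves implicit.
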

Another way to phrase the condition in the lemma  is that 
\be\label{ytty}
|\chi_B\rangle\in \mathcal{S}_k \Longleftrightarrow\ \chi_B\in \text{Span}_\mathbb{R}\{r^{(n-k)} f_\nu: \nu\in \text{Irreps}_G(k) \}\ ,
\ee
where $r(g)=\Tr(u(g))$. The  subspace in the right-hand side has dimension less than or equal to  $|\text{Irreps}_G(k)|$.  Since $\chi_B$ and $|\chi_B\rangle$ are related via an invertible linear map, it follows that $\text{dim}(\mathcal{S}_k)\le |\text{Irreps}_G(k)|$.  Note that for a general $G$-invariant Hamiltonian $B$, $\chi_B$ can be any function in the subspace $\text{Span}_\mathbb{R}\{f_\nu: \nu\in \text{Irreps}_G(n) \}$, which has dimension  $|\text{Irreps}_G(n)|$ (This follows from the linear independence of the characters of different irreps).

\begin{proof}
For any $k$-local Hermitian operator  $B$, there exists a  
permutation operator $S$ such that $S B S^\dag=\tilde{B}\otimes I_d^{\otimes (n-k)}$, where  $\tilde{B}$ is a Hermitian operator  defined on the first $k$ systems and $I_d$ is the identity operator 
on the Hilbert space of each system. It follows that 
\be
\Tr(B\ u(g)^{\otimes n})=Tr([S B S^\dag] u(g)^{\otimes n})=[\Tr(u(g))]^{n-k} \times \Tr(\tilde{B}\  u(g)^{\otimes k})\ .
\ee
Furthermore, if $[B, u(g)^{\otimes n}]=0$, then  
 $[\tilde{B}, u(g)^{\otimes k}]=0$, for all $g\in G$. Then, using Eq.(\ref{kohqef}) we find
 \be\label{gd}
\Tr(\tilde{B}\ u(g)^{\otimes k})=\sum_{\mu\in \text{Irreps}(k)} b_\mu\ f_\mu(g)\ ,
 \ee 
where $\{b_\mu\}$ is a set of real coefficients. This proves the first part of the lemma. To see the second part, we note that if in  the left-hand side of Eq.(\ref{gd}) we choose $\tilde{B}$ to be the projector to the subspace corresponding to irrep $\mu\in\text{Irreps}(k)$, then $\Tr(\tilde{B}\ u(g)^{\otimes k})=c_\mu f_\mu(g), \forall g\in G$, for a constant  $c_\mu>0$. More generally, considering the linear combinations of projectors corresponding to all  $\text{Irreps}_G(k)$, we conclude that  for any set of real coefficients $\{b_\mu: \mu\in\text{Irreps}_G(k)\}$,  there is a Hermitian $G$-invariant  operator $\tilde{B}$ acting on $(\mathbb{C}^d)^{\otimes n}$, such that $\sum_{\mu\in\text{Irreps}_G(k)} b_\mu\  f_\mu(g)$. Then, operator $B=\tilde{B}\otimes I_d^{\otimes n-k}$, is a $k$-local $G$-invariant Hermitian operator and satisfies Eq.(\ref{gd}). This proves the the second part of the lemma.

\end{proof}

This lemma together with proposition \ref{lkejf} and corollary \ref{cor31} imply theorem \ref{Thm10}. 
 An interesting corollary of this result is 
\begin{corollary}
Suppose $G$ is an Abelian group. If there exists a group element $g\in G$ such that $\Tr(u(g))=0$, then for any  $k<n$,  $\text{dim}(\mathfrak{h}_k)<\text{dim}(\mathfrak{h}_n)$.  
\end{corollary}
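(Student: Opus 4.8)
The plan is to apply Theorem~\ref{Thm10} (equivalently Lemma~\ref{Thm2}) directly, \emph{not} the dimension-counting bound of Eq.~(\ref{Eq:dim3}). The latter route is actually useless here: for $G=\mathbb{Z}_2$ acting on a qubit by $u(g_0)=\mathrm{diag}(1,-1)$ one has $\Tr(u(g_0))=0$, yet $|\text{Irreps}_{\mathbb{Z}_2}(l)|=2$ for every $l\ge 1$, so Eq.~(\ref{Eq:dim3}) only yields the vacuous $\dim(\mathfrak{h}_n)-\dim(\mathfrak{h}_k)\ge 0$. Instead I would exhibit a single explicit $G$-invariant operator lying in $\mathfrak{h}_n$ but provably not in $\mathfrak{h}_k$, detected by evaluating its character at $g_0$.

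First I would fix $g_0\in G$ with $\Tr(u(g_0))=0$ and assume $k<n$, so $n-k\ge 1$. By Lemma~\ref{Thm2}, every Hermitian $B$ with $iB\in\mathfrak{h}_k$ satisfies
\[
\Tr\!\big(B\,u(g_0)^{\otimes n}\big)=[\Tr(u(g_0))]^{\,n-k}\sum_{\mu\in\text{Irreps}_G(k)} b_\mu\, f_\mu(g_0)=0,
\]
since the prefactor $[\Tr(u(g_0))]^{\,n-k}$ vanishes. Thus the linear functional $C\mapsto \Tr\!\big(C\,u(g_0)^{\otimes n}\big)$ annihilates all of $\mathfrak{h}_k$.

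Next I would produce an element of $\mathfrak{h}_n$ on which this functional does not vanish. Pick any $\mu_0\in\text{Irreps}_G(n)$ and let $\Pi_{\mu_0}$ be the projector onto its isotypic component; then $i\Pi_{\mu_0}$ is skew-Hermitian and $G$-invariant, hence $i\Pi_{\mu_0}\in\mathfrak{h}_n$. Because $G$ is Abelian every irrep is one-dimensional, so $u(g_0)^{\otimes n}$ acts on $\mathcal{H}_{\mu_0}$ as the scalar $\mu_0(g_0)$, giving
\[
\Tr\!\big(\Pi_{\mu_0}\,u(g_0)^{\otimes n}\big)=m_{\mu_0}\,\mu_0(g_0),
\]
with $m_{\mu_0}=\dim\mathcal{H}_{\mu_0}\ge 1$. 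Since $\mu_0(g_0)$ has unit modulus it is nonzero, so this trace is nonzero. Therefore $i\Pi_{\mu_0}\in\mathfrak{h}_n\setminus\mathfrak{h}_k$, whence $\mathfrak{h}_k$ is a proper subspace of $\mathfrak{h}_n$ and $\dim(\mathfrak{h}_k)<\dim(\mathfrak{h}_n)$.

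The one genuine input, and the precise place the Abelian hypothesis is essential, is this last step: one needs \emph{some} irrep $\mu_0$ occurring in $U=u^{\otimes n}$ whose character is nonzero at $g_0$. For a general non-Abelian group the values $f_\mu(g_0)$ could all vanish and the argument would collapse; Abelianness guarantees $f_{\mu_0}(g_0)=\mu_0(g_0)\neq 0$ for every $\mu_0$, which is exactly what makes the vanishing prefactor $[\Tr(u(g_0))]^{\,n-k}$ detectable against a concrete symmetric operator. I expect no further obstacles, since the remaining steps follow immediately from Lemma~\ref{Thm2} and Schur's lemma.
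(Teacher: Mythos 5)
Your proposal is correct and is essentially the paper's own proof: both invoke Lemma~\ref{Thm2} at the group element $g_0$ with $\Tr(u(g_0))=0$, where the prefactor $[\Tr(u(g_0))]^{n-k}$ kills the right-hand side, and both use the projector onto the isotypic component of a one-dimensional irrep (guaranteed by Abelianness) as the witness in $\mathfrak{h}_n\setminus\mathfrak{h}_k$, since $|\Tr(\Pi_{\mu_0}\,u(g_0)^{\otimes n})|=\Tr(\Pi_{\mu_0})>0$. Your framing via a linear functional annihilating $\mathfrak{h}_k$, and your observation that Eq.~(\ref{Eq:dim3}) is vacuous here, are only cosmetic variations on the paper's proof by contradiction.
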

\begin{proof}
We prove a slightly more general result: Suppose the representation $\{u(g)^{\otimes n}: g\in G\}$ contains a 1D irrep of group $G$, and let $\Pi$ be the projector to the subspace corresponding to this 1D irrep. In the case of Abelian groups all irreps 
are 1D and therefore this assumption is always satisfied.  Now we show that if there exists $g\in G$ such that $\Tr(u(g))=0$ and if $k<n$, then $i\Pi\notin \mathfrak{h}_k$, whereas clearly, $i\Pi\in \mathfrak{h}_n$. This implies $\text{dim}(\mathfrak{h}_k)<\text{dim}(\mathfrak{h}_n)$. To prove this claim, we assume it is not true, i.e.,  $i\Pi\in \mathfrak{h}_k$ and derive a contradiction. If $i\Pi\in \mathfrak{h}_k$, then  Eq.(\ref{wfefe})  should be satisfied for $B=\Pi$ and certain coefficients $b_\mu\in\mathbb{R}$. However, if there exists a group elements $g\in G$ such that $\Tr(u(g))=0$ and $k<n$ , then the right-hand side of this equation is zero for this group element. On the other hand, the left-hand side, i.e., $\Tr(\Pi u(g)^{\otimes n})$ is non-zero: because $\Pi$ is a projector to a 1D irrep, $|\Tr(\Pi u(g)^{\otimes n})|=\Tr(\Pi)>0$. Therefore, the assumption that $i\Pi\in \mathfrak{h}_k$ is in contradiction with the assumptions that $k<n$ and $\Tr(u(g))=0$, for some $g\in G$. This completes the proof. 
\end{proof}

Another useful corollary of lemma \ref{Thm2} is the following result. 

\begin{corollary}\label{corcor}
Recall the definition $\mathcal{S}_k\equiv  \{|\chi_H\rangle: e^{-i H t }\in \mathcal{V}_k^G , \forall t\in\mathbb{R}\}=\{|\chi_A\rangle: i A\in\mathfrak{h}_k\}$. For any group $G$, $\text{dim}(\mathcal{S}_k)\le\big|\text{Irreps}_G(k)\big|$. Furthermore, for any connected Lie group $G$, this holds as as equality, i.e., $\text{dim}(\mathcal{S}_k)=\big|\text{Irreps}_G(k)\big|$. 
\end{corollary}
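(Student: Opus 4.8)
The plan is to transport the entire statement from charge vectors to their Fourier transforms, where Lemma~\ref{Thm2} applies directly. Since $\mathcal{S}_k$ consists of the charge vectors of the $G$-invariant Hermitian operators $A$ with $iA\in\mathfrak{h}_k$, and since for such operators the maps $A\mapsto|\chi_A\rangle$ and $A\mapsto\chi_A$ (with $\chi_A(g)=\Tr(A\,u(g)^{\otimes n})$) are related by the invertible Fourier transform of Eq.(\ref{Fourier}), I would first observe that $\dim(\mathcal{S}_k)$ equals the real dimension of the function space $\{\chi_A:iA\in\mathfrak{h}_k\}$. The two directions of Lemma~\ref{Thm2} then pin this space down exactly: the forward direction shows every such $\chi_A$ lies in $\text{Span}_\mathbb{R}\{r^{\,n-k}f_\mu:\mu\in\text{Irreps}_G(k)\}$, where $r(g)\equiv\Tr(u(g))$, while the converse direction produces a genuinely $k$-local $G$-invariant $B=\tilde{B}\otimes I_d^{\otimes(n-k)}$ (so that $iB\in\mathfrak{h}_k$) realizing any prescribed real coefficients $\{b_\mu\}$. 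Hence $\dim(\mathcal{S}_k)=\dim_\mathbb{R}\text{Span}_\mathbb{R}\{r^{\,n-k}f_\mu:\mu\in\text{Irreps}_G(k)\}$.

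The upper bound is then immediate and holds for any finite or compact Lie group: a real span of the $|\text{Irreps}_G(k)|$ functions $r^{\,n-k}f_\mu$ has real dimension at most $|\text{Irreps}_G(k)|$, so $\dim(\mathcal{S}_k)\le|\text{Irreps}_G(k)|$.

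For the equality in the connected case, I would prove that the functions $\{r^{\,n-k}f_\mu:\mu\in\text{Irreps}_G(k)\}$ are linearly independent over $\mathbb{R}$. Suppose $\sum_\mu c_\mu\,r^{\,n-k}f_\mu\equiv 0$ with $c_\mu\in\mathbb{R}$. The key point is that $r(g)=\Tr(u(g))$ is the character of a finite-dimensional representation of the Lie group $G$, hence a real-analytic function, and it is not identically zero since $r(e)=\Tr(I)=d>0$. On a connected real-analytic manifold the zero set of a nonzero analytic function is nowhere dense, so $r^{\,n-k}$ is nonvanishing on a dense subset of $G$; dividing there yields $\sum_\mu c_\mu f_\mu\equiv 0$ on a dense set, and hence everywhere by continuity. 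Because the $f_\mu$ are characters of pairwise inequivalent irreps they are linearly independent (over $\mathbb{C}$, hence over $\mathbb{R}$), forcing all $c_\mu=0$. This gives $\dim(\mathcal{S}_k)=|\text{Irreps}_G(k)|$.

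The main obstacle is precisely this linear-independence step, and it is where connectedness is indispensable: for a disconnected or finite group $r$ may vanish on a structurally large subset (as in the Abelian example where $\Tr(u(g))=0$ for some $g$), the functions $r^{\,n-k}f_\mu$ can become dependent, and $\dim(\mathcal{S}_k)$ can then drop strictly below $|\text{Irreps}_G(k)|$ --- exactly the mechanism behind the strict inequality $\dim(\mathfrak{h}_k)<\dim(\mathfrak{h}_n)$ in the preceding corollary. Everything else, namely the Fourier reduction and the counting bound, is routine given Lemma~\ref{Thm2} and Eq.(\ref{Fourier}).
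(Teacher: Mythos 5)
Your proposal is correct and follows essentially the same route as the paper: reduce to the function space $\{\chi_A: iA\in\mathfrak{h}_k\}$ via the invertible Fourier transform of Eq.(\ref{Fourier}), use both directions of Lemma~\ref{Thm2} to identify that space exactly as $\text{Span}_\mathbb{R}\{r^{\,n-k}f_\mu:\mu\in\text{Irreps}_G(k)\}$, and then establish linear independence in the connected case by an analyticity argument. The only cosmetic difference is in that last step --- you divide by $r^{\,n-k}$ on the dense set where the nonzero analytic function is nonvanishing, while the paper propagates the vanishing of $\sum_\mu a_\mu f_\mu$ outward from a neighborhood of the identity where $r\neq 0$; both rest on the identity theorem for real-analytic functions on a connected group together with the linear independence of characters.
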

\begin{proof}
Using Eq.(\ref{defS}), we have
\begin{align}
\text{dim}(\mathcal{S}_k)&=\text{dim}\big(\{|\chi_A\rangle: A=A^\dag,  A\text{ is $k$-local},\ [A, U(g)]=0\ : \forall g\in G\}\big)\\ &=\text{dim}\big(\{\chi_A: A=A^\dag,  A\text{ is $k$-local},\ [A, U(g)]=0\ : \forall g\in G\}\big)\ ,\label{wr}
\end{align}
where $\chi_A: G\rightarrow \mathbb{C}$ is defined by $\chi_A(g)=\Tr(A U(g))=\Tr(A u(g)^{\otimes n})$, and to get the second line we have used the fact that there is an invertible linear  map between the charge vector $|\chi_A\rangle$ and function $\chi_A$, namely 
the Fourier transform in Eq.(\ref{Fourier}).  Define
\be
v(g)\equiv \Tr(u(g)):\ \ g\in G\ .
\ee
Then, using Eq.(\ref{wfefe}) 
\be
\Big\{\chi_A: A=A^\dag,  A\text{ is $k$-local},\ [A, U(g)]=0\ : \forall g\in G\Big\}=\Big\{v^{n-k}\times \sum_{\mu\in\text{Irreps}_G(k)} a_\mu\  f_\mu\ : a_\mu\in \mathbb{R}\Big\}\ .
\ee
Together with Eq.(\ref{wr}) this immediately implies that
\begin{align}
\text{dim}(\mathcal{S}_k)&=\text{dim}\big(\{\chi_A: A=A^\dag,  A\text{ is $k$-local},\ [A, U(g)]=0\ : \forall g\in G\}\big)\\ &=\text{dim}\big(\Big\{v^{n-k}\times \sum_{\mu\in\text{Irreps}_G(k)} a_\mu\  f_\mu\ : a_\mu\in \mathbb{R}\Big\}\big)\ , \label{2ve} \\ &\le |\text{Irreps}_G(k)|\ .
\end{align}
This bound holds for any group $G$.
In the special case of connected Lie group, functions $\{v^{n-k} \times f_\mu: \mu\in \text{Irreps}_G(k)\}$ are linearly independent.  
To prove this we assume the contrary holds, i.e., there exists a set of real coefficients $a_\mu$, such that
\be\label{kjbf}
v^{n-k} \sum_{\mu\in \text{Irreps}_G(k)} a_\mu  \times f_\mu=0\ .
\ee
Assuming $G$ is a connected Lie group, then  there is a finite neighborhood around the group identity, such that for all group elements $g$ in the neighborhood, $v(g)=\Tr(u(g))\neq 0$ (Recall that the representation is finite-dimensional). Therefore, if Eq.(\ref{kjbf}) holds then for all group elements $g$ in this neighborhood  $\sum_{\mu\in \text{Irreps}_G(k)} a_\mu  f_\mu(g)=0$. But, because the characters 
 are analytic functions, if $\sum_{\mu\in \text{Irreps}_G(k)} a_\mu  f_\mu(g)=0$ is zero in a finite neighborhood, then it should vanish everywhere. Finally, using the fact that characters are linearly independent, we find that Eq.(\ref{kjbf}) holds if and only if all $a_\mu=0$ for all $\mu\in \text{Irreps}_G(k)$. We conclude that  the set of functions $\{v^{n-k} \times f_\mu: \mu\in \text{Irreps}_G(k)\}$ are linearly independent. Therefore,  if $G$ is a connected Lie group,  Eq.(\ref{2ve}) implies that  $\text{dim}(\mathcal{S}_k)=|\text{Irreps}_G(k)|$.
 \end{proof}

\subsection{Non-universality of LSQC and proof of the bound in Eq.(\ref{Eq:dim}): The case of identical subsystems }

Recall the following definitions 
\begin{align}\nonumber
\mathcal{V}^G_k&: \text{The Lie group generated by $k$-local $G$-invariant unitaries} \\
 \mathfrak{h}_k&:  \text{The Lie algebra corresponding to $\mathcal{V}^G_k$} \nonumber\\
   \mathcal{S}_k&: \text{The corresponding charge vectors}\nonumber
\end{align}
Note that 
\be
\text{dim}(\mathcal{V}^G_{k})=\text{dim}(\mathfrak{h}_k)\ ,
\ee
where the left-hand side is the dimension of $\mathcal{V}^G_{k}$ as a manifold and the right-hand side is the dimension of $\mathfrak{h}_k$ as a vector space. 
\begin{theorem}\label{Thm0040}
For a general group $G$, and integer $k\le n$
\begin{align}
\text{dim}(\mathfrak{h}_n)-\text{dim}(\mathfrak{h}_k)&\ge \text{dim}(\mathcal{S}_n)-\text{dim}(\mathcal{S}_k)\ge \text{Irreps}_G(n)-\text{Irreps}_G(k)\ .
\end{align}
Furthermore, if $G$ is connected Lie group, then for any  integers $k$ and $l$, satisfying $1\le k\le l\le n$, it holds that
\begin{align}
\text{dim}(\mathfrak{h}_l)-\text{dim}(\mathfrak{h}_k)&\ge \text{dim}(\mathcal{S}_l)-\text{dim}(\mathcal{S}_k)=\text{Irreps}_G(l)-\text{Irreps}_G(k)\ .
\end{align}
\end{theorem}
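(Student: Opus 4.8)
The plan is to realize both displayed chains as consequences of a single linear map together with dimension counts already in hand. Let $\Phi$ be the $\mathbb{R}$-linear charge-vector map used in Lemma \ref{Thm1}, which sends a skew-Hermitian $G$-invariant operator $iA\in\mathfrak{h}_n$ to the real charge vector $|\chi_A\rangle$. By the definition $\mathcal{S}_k=\{|\chi_A\rangle: iA\in\mathfrak{h}_k\}$, the image of the subalgebra $\mathfrak{h}_k$ under $\Phi$ is precisely $\mathcal{S}_k$, and likewise $\Phi(\mathfrak{h}_l)=\mathcal{S}_l$ and $\Phi(\mathfrak{h}_n)=\mathcal{S}_n$. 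Since a $k$-local operator is in particular $l$-local whenever $k\le l$, we have the nested subalgebras $\mathfrak{h}_k\subseteq\mathfrak{h}_l\subseteq\mathfrak{h}_n$, so the entire argument takes place inside one linear map restricted to nested domains.

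First I would establish $\text{dim}(\mathfrak{h}_n)-\text{dim}(\mathfrak{h}_k)\ge\text{dim}(\mathcal{S}_n)-\text{dim}(\mathcal{S}_k)$ by the same rank--nullity bookkeeping that produced Eq.(\ref{lem:eq}). Applying rank--nullity to $\Phi$ restricted to $\mathfrak{h}_n$ and to $\mathfrak{h}_k$ separately gives
\begin{align}
\text{dim}(\mathfrak{h}_n)&=\text{dim}(\ker\Phi|_{\mathfrak{h}_n})+\text{dim}(\mathcal{S}_n),\\
\text{dim}(\mathfrak{h}_k)&=\text{dim}(\ker\Phi|_{\mathfrak{h}_k})+\text{dim}(\mathcal{S}_k).
\end{align}
Because $\mathfrak{h}_k\subseteq\mathfrak{h}_n$ we have $\ker\Phi|_{\mathfrak{h}_k}\subseteq\ker\Phi|_{\mathfrak{h}_n}$, so subtracting the two lines leaves a nonnegative difference of kernel dimensions added to $\text{dim}(\mathcal{S}_n)-\text{dim}(\mathcal{S}_k)$, which yields the claimed inequality. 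The identical computation with $\mathfrak{h}_l$ in place of $\mathfrak{h}_n$, using $\mathfrak{h}_k\subseteq\mathfrak{h}_l$, gives the first inequality of the connected-group statement.

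It remains to pin down the $\mathcal{S}$-dimensions. For the general-group chain I would invoke two earlier facts: that the charge-vector map applied to all $G$-invariant operators is surjective onto $\text{Span}_\mathbb{R}\{|\mu\rangle:\mu\in\text{Irreps}_G(n)\}$, so $\text{dim}(\mathcal{S}_n)=|\text{Irreps}_G(n)|$; and the bound $\text{dim}(\mathcal{S}_k)\le|\text{Irreps}_G(k)|$ from Corollary \ref{corcor}. Together these give $\text{dim}(\mathcal{S}_n)-\text{dim}(\mathcal{S}_k)\ge|\text{Irreps}_G(n)|-|\text{Irreps}_G(k)|$, completing the first chain. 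For the connected-Lie-group refinement, Corollary \ref{corcor} upgrades the bound to the equality $\text{dim}(\mathcal{S}_k)=|\text{Irreps}_G(k)|$ for every $k$, so that $\text{dim}(\mathcal{S}_l)-\text{dim}(\mathcal{S}_k)=|\text{Irreps}_G(l)|-|\text{Irreps}_G(k)|$ holds exactly, finishing the second chain.

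The argument contains no genuinely hard step; its only delicate point is bookkeeping, namely verifying that domains and images line up so that $\mathcal{S}_k=\Phi(\mathfrak{h}_k)$ with correctly nested kernels. The substantive inputs --- the vanishing of commutator contributions to charge vectors (so that $\Phi(\mathfrak{h}_k)$ is spanned by charge vectors of genuinely $k$-local $G$-invariant operators) and the sharp count $\text{dim}(\mathcal{S}_k)=|\text{Irreps}_G(k)|$ for connected $G$ --- are already supplied by Lemma \ref{Thm1} and Corollary \ref{corcor}, so the theorem is essentially their assembly.
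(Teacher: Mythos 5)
Your proposal is correct and follows essentially the same route as the paper's own proof: rank--nullity applied to the charge-vector map on the nested algebras $\mathfrak{h}_k\subseteq\mathfrak{h}_l\subseteq\mathfrak{h}_n$ (the paper states the inequality $\text{dim}(\mathfrak{h}_l)-\text{dim}(\mathfrak{h}_k)\ge\text{dim}(\mathcal{S}_l)-\text{dim}(\mathcal{S}_k)$ directly; your explicit kernel-nesting computation is just the same step spelled out), combined with Corollary \ref{corcor} and the fact that $\text{dim}(\mathcal{S}_n)=|\text{Irreps}_G(n)|$. No gaps.
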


\begin{proof}
This theorem follows immediately by applying   the rank-nullity theorem for the  linear map $A\rightarrow |\chi_A\rangle$, together with corollary \ref{corcor}. In particular, note that for $k\le l$, the Lie algebra  $\mathfrak{h}_k$ is a subspace of $\mathfrak{h}_l$. Furthermore, $\mathcal{S}_k$ and $\mathcal{S}_l$ are their images under a linear map (up to the imaginary $i$).  Recall that according to the 
 rank-nullity theorem \cite{axler1997linear},  for any linear map, the dimension of the domain is equal to the sum of the  dimensions of its image (i.e., the rank of the map) and its kernel (i.e., the nullity of the map).  Since $\mathfrak{h}_k$ is a subspace of $\mathfrak{h}_l$, the kernel of the map  $A\rightarrow |\chi_A\rangle$ when restricted to  $\mathfrak{h}_k$ is contained in the kernel of this map, when the domain is $\mathfrak{h}_l$. It follows that 
\begin{align}
\text{dim}(\mathfrak{h}_l)-\text{dim}(\mathfrak{h}_k)&\ge \text{dim}(\mathcal{S}_l)-\text{dim}(\mathcal{S}_k)\ .
\end{align}
Combining this with corollary \ref{corcor} together with the fact that  $\text{dim}(\mathcal{S}_n)=\text{Irreps}_G(n)$,  proves the theorem. 
\end{proof}

\begin{remark}
Recall that $\text{dim}(\mathcal{S}_k)$ is the dimension of the  subspace of the center of $\mathfrak{h}_n$ that is included in  $\mathfrak{h}_k$. This means that the above lower bound  on  $\text{dim}(\mathfrak{h}_n)-\text{dim}(\mathfrak{h}_k)$ is due to the fact that  part of the center of $\mathfrak{h}_n$ is not produced by $k$-local $G$-invariant skew-Hermitian operators.
\end{remark}

\subsection{The general case of non-identical subsystems}

In the previous section we  focused on the special case where all the subsystems are identical and, in particular, they carry the same representation of the group $G$. However,  note that the general argument about charge vectors and, in particular, lemma \ref{Thm1} and \ref{lem2} are valid in the case of non-identical subsystems.
Using these lemmas it can be easily seen that the argument that proves the non-universality of local symmetric unitaries  can be generalized to the more general case  where the subsystems are not identical.  Here, we sketch the main idea.

  Assume there are a finite number of \emph{types}  of subsystems, where each type carries a particular representation of  group $G$. More precisely, suppose each subsystem has one of $T$  possible representations  $\{v^{(1)},\cdots ,v^{(T)} \}$, where for each $t\in \{1,\cdots, T\}$,  $\{v^{(t)}(g): g\in G\}$ is a finite-dimensional unitary representation of group $G$. 

Then, our previous argument can be easily generalized to show that $\mathcal{S}_k$, the set of charge vectors for $k$-local $G$-invariant Hamiltonians, is a finite-dimensional subspace, whose dimension is bounded by a number which is independent of $n$, the total number of sites. In fact, the dimension of $\mathcal{S}_k$ is upper bounded by  the total number of inequivalent irreps of $G$, which appear in all tensor product representations
\be
\bigotimes_{i=1}^k v^{(t_i)}  \ \ : \ t_1,\cdots, t_k\in\{1,\cdots , T\}\ .
\ee
This follows from the fact that any $k$-local operator  can act non-trivially on at most $k$ sites, and the representation of group $G$ on those $k$ sites is equivalent to one of the representations listed above. Clearly, the total number of inequivalent irreps appearing in the above representations, is independent of $n$, the total number of sites.

On the other hand, let $\bigotimes_{i=1}^n v^{(t_i)}$ be the representation of group $G$ on the total system, where   $v^{(t_i)}$ is the representation of group $G$ on  site $i$ and $t_i\in\{1,\cdots, T\}$.  For a compact connected Lie group $G$, such as U(1) and SU(2), as  the number of sites carrying a non-trivial representation of  $G$ grows, the number of distinct irreps which appear in this representation also increases unboundedly, and for sufficiently large $n$, this will be larger than the dimension of $\mathcal{S}_k$. 
  Therefore, by lemma \ref{Thm1} we conclude that for sufficiently large $n$, there are $G$-invariant unitaries which cannot be implemented using $k$-local $G$-invariant unitaries.



\newpage
\section{Supplementary Note 3: U(1) symmetry for a system of qubits}\label{App:B}

In this section we present a full characterization of Hamiltonians that can be implemented using $k$-local U(1)-invariant Hamiltonians on a system of qubits.  
It turns out that in this case the constraints found in the previous section in terms of charge  vectors, are both the necessary and sufficient conditions. 

While in this section we focus on the example of qubit systems, it is worth noting that similar results can be extended  to other representations of U(1) symmetry. This is relevant, for instance, in the context of quantum thermodynamics for systems with periodic Hamiltonians. These are Hamiltonians for which the ratio of the gaps between different eigenvalues are all rational numbers. That is, up to a constant shift, any energy level is an integer times a fixed energy scale  (Otherwise, if the eigenvalues do not have this form the generated group will not be compact and, in particular, will not be isomorphic to U(1)).  In Supplementary Note 7 We consider the specific case of these Hamiltonians where energy levels are equidistance. 

\subsection{Preliminaries}


Consider $n$ qubits, labeled as $j=1,\cdots, n$, and  the group of rotations around the z axis, i.e., unitaries 
\be\label{gerto}
U(e^{i\theta})=(e^{i\theta Z})^{\otimes n}=e^{i\theta \sum^n_{j=1} Z_j}=\sum_{m=0}^{n} e^{i\theta (n-2m)}\ \Pi_m=\sum_{l=0}^{n} (i\sin\theta)^{l}(\cos\theta)^{n-l}\  C_l   \ \ \ \ : \theta\in [0,2\pi) \ ,
\ee
where $ \Pi_m$ is the projector to the subspace corresponding to the eigenvalue $m$ of operator $ \sum_j (I-Z_j)/2$, also known as the subspace with the Hamming weight $m$ or charge $m$. The last equality in Eq.(\ref{gerto}) follows from the identity $e^{i\theta Z}=\cos\theta I+i\sin \theta Z$ together  with the definition
\be\label{defCl}
C_l\equiv
\sum_{\substack{{\bf{b}}:  w({\bf{b}})=l}\ }  {\bf{Z}}^{\bf{b}}=\sum^n_{m=0} c_l(m) \  \Pi_m\ \ \ \ :\ \ \ l=0,\cdots, n\ ,
\ee
where the first  summation in Eq.(\ref{defCl})  is over all bit strings with Hamming weight $l$  
and ${\bf{Z}}^{\bf{b}}=Z_1^{b_1}\cdots Z_n^{b_n}$, and  coefficient 
 \be\label{coef}
c_l(m)=\sum_{s=0}^m (-1)^s  {{m}\choose{s}} {{n-m}\choose{l-s}}\ ,
\ee
 is the eigenvalue of $C_l$ in the subspace with Hamming weight $m$. Note that  $C_0$ is the identity operator.  
 To see the second equality in Eq.(\ref{defCl}), first note that operator $C_l$ as defined by the first equality, is diagonal in the computational basis, and because of the permutational symmetry, the corresponding eigenvalue for the basis element $|z_1,\cdots, z_n\rangle$ only  depends on the Hamming weight of $z_1\cdots z_n$. This implies    $C_l=\sum_m c_l(m)\Pi_m$ with eigenvalues $\{c_l(m)\}$. Considering the expectation value of $C_l$ for the eigenvector $|1\rangle^{\otimes m} |0\rangle^{\otimes (n-m)}$, we find that $c_l(m)$ is equal to the sum of the expectation values of operators $\textbf{Z}^{\textbf{b}}$ in this  state,  
for all bit strings $\textbf{b}$ with Hamming weight $l$. Each expectation value is $\pm 1$. Then, a simple counting argument implies Eq.(\ref{coef}).  See Table \ref{default} in  Supplementary Note 4 for the example of $n=5$ qubits. 

In the following we use the fact that 
\be
\text{Span}_\mathbb{R}\{C_l: l=0,\cdots, n\}= \text{Span}_\mathbb{R}\{\Pi_m: m=0,\cdots, n\}\ .
\ee
Note that, up to an imaginary $i$ factor, this subspace is the center of $\mathfrak{h}_n$.    
  
  We study the group $\mathcal{V}_k^{\text{U}(1)}$ generated by $k$-local unitaries that are invariant under this symmetry.  It turns out that the constraints in lemma \ref{Thm1} on the charge vectors are the only constraints on Hamiltonians that can be simulated using these unitaries.

In the following, we refer to the basis $\{|0\rangle,|1\rangle\}^{\otimes n}$, as the computational basis of $n$ qubits  (Here, $\{|0\rangle, |1\rangle\}$ are eigenvectors of Pauli $Z$).

\subsection{Summary of main results}\label{Methods:char}

In this section we prove that
\begin{theorem}\label{Thm21}
For any  U(1)-invariant Hamiltonian $H$ on $n$ qubits the family of unitaries $\{e^{-i t H}: t\in\mathbb{R}\}$ can be implemented using $k$-local U(1)-invariant unitaries for $k\ge 2$, if, and only if
\be\label{App:conds2}
\Tr(H C_l)=0\ \ \ \  :\ l=k+1,\cdots, n\ .\\
\ee 
\end{theorem}
In terms of the corresponding Lie algebras this means
\begin{align}
i H \in \mathfrak{h}_{k} \  \ \Longleftrightarrow\  \ i H \in \mathfrak{h}_{n}\ \text{, and} \ \   \Tr(H C_l)=0\  \ \ \ \ \  :   l=k+1,\cdots, n\  ,
\end{align}
where 
\be
\mathfrak{h}_k\equiv \mathfrak{alg}_\mathbb{R}\big\{A: \text{$k$-local}, A+A^\dag=0\ ,\ [A, \sum_j Z_j]=0 \big\}\ ,
\ee
is the Lie algebra generated by $k$-local U(1)-invariant skew-Hermitian operators.  Note that  operators $iC_l\in \mathfrak{h}_n$ for $l=0,\cdots , n$ form a  linearly-independent set inside $\mathfrak{h}_n$. Therefore, each constraint  $\Tr(H C_l)=0$ in the above equation, reduces the dimension of the Lie algebra by one. Therefore, comparing $\mathfrak{h}_n$ and  $\mathfrak{h}_k$, we see that  operators in $\mathfrak{h}_k$ satisfy  $n-k$ additional  independent constraints, which means
\be
\text{dim}(\mathcal{V}_n^{\text{U}(1)})-\text{dim}(\mathcal{V}_k^{\text{U}(1)})=\text{dim}(\mathfrak{h}_n)-\text{dim}(\mathfrak{h}_k)=n-k\ .
\ee
We conclude that in this case our general lower bound in Eq.(\ref{Eq:dim3}) holds as equality. \\

\noindent \textbf{Overview of the proof of theorem \ref{Thm21}:} As we show in the following,  the necessity of the constraints in Eq.(\ref{App:conds2}) follows immediately from the constraints on the charge vectors in lemma \ref{Thm1} (See Sec.\ref{App:nec2021}). The  sufficiency of these conditions is proven in two steps: first, in Sec.\ref{App:Suff2021} we prove it in the special case of Hamiltonians that are diagonal in the computational basis $\{|0\rangle,|1\rangle\}^{\otimes n}$. Then, to extend the result to the  general case, we note that a general  U(1)-invariant Hamiltonian $H$ has a decomposition as $H=D+O$,  for a  Hermitian operator $D$ that is diagonal in the computational basis $\{|z_1,\cdots, z_n\rangle: z_j\in\{0,1\}\}$, 
plus a Hermitian operator $O$ with zero diagonal elements in this basis. 
This, in particular,  implies  $\Tr(O \Pi_m)=\Tr(O C_l)=0$ for $l, m=0,\cdots, n$.  We conclude that if $H=D+O$ satisfies the constraints in  Eq.(\ref{App:conds2}), then both $D$ and $O$ satisfy this constraint. From the special case of the result for diagonal Hamiltonians we know that $D$ can be realized by $k$-local U(1)-invariant unitaries, i.e., $iD\in\mathfrak{h}_k$.  Therefore, to extend the result to the general case it suffices to show that any U(1)-invariant Hamiltonian  $O$ with vanishing diagonal elements can be implemented  using  $k$-local U(1)-invariant unitaries, i.e., $iO\in \mathfrak{h}_k$. This follows from  the following lemma \ref{lem05}, which is proven in Sec.\ref{App:Sec:complete}.  
\begin{lemma}\label{lem05}
Suppose U(1)-invariant Hamiltonian $L$ satisfies the condition  $\Tr(L \Pi_m)=0$  for  $m=0,\cdots, n$. Then, for any time $t\in\mathbb{R}$, the unitary $\exp(-i t L)$ is in the group
\be
G_2\equiv \big\langle e^{i\theta (X_rX_s+Y_rY_s)}, e^{i\theta Z_r}:\theta\in[0,2\pi),r\neq s\in\{1... n\}\big\rangle\ ,
\ee 
i.e.,  can be implemented by single-qubit unitaries $\exp(i\theta Z)$ and 2-qubits unitaries $\exp(i\theta (XX+YY))$. 
\end{lemma}
For completeness, we rephrase the statement of this lemma in terms of the corresponding Lie algebras.\\

\noindent\textbf{Restatement of lemma} \ref{lem05}: 
The set of skew-Hermitian U(1)-invariant operators 
$\{L\in \mathfrak{h}_n: \Tr(L \Pi_m)=0\ :\ m=0,\cdots, n \}$ is a sub-algebra of  the Lie algebra generated by  operators $iR_{r,s}=i(X_rX_{s}+Y_rY_{s})/2$ together with $iZ_r$ for $r\neq s\in\{1,\cdots, n\}$, i.e., 
\be\label{hghgwq2}
\{L\in \mathfrak{h}_n: \Tr(L \Pi_m)=0\ ,\ m=0,\cdots, n \} \subset \mathfrak{alg}\{ i Z_r , i R_{r,s}: r\neq s\in\{1,\cdots n\}\}\ .
\ee

Therefore, proving this lemma completes the proof of theorem \ref{Thm21}. As we further explain in Sec.\ref{App:Sec:complete}, this lemma essentially means that the Lie algebra generated by operators $iR_{r,s}=i(X_rX_{s}+Y_rY_{s})/2$ together with $iZ_r$ for $r\neq s\in\{1,\cdots, n\}$ contains the  commutant sub-algebra of $\mathfrak{h}_n$. \\

This lemma has another useful corollary:

\begin{corollary}\label{cor81}
Any U(1)-invariant unitary $W=\bigoplus_{m=0}^n W_m$, satisfying  $\text{det}(W_m)=1: m=0,\cdots, n$   can be realized using 2-local unitaries $\exp(i (X_r X_s+Y_rY_s)\theta)$ and $\exp(i\theta Z_r)$, for $\theta\in[0,2\pi)$, $r\neq s\in\{1,\cdots, n\}$.   
\end{corollary}
This corollary follows from the fact that any unitary $W_m$ satisfying $\text{det}(W_m)=1$, can be written as $\exp(i H_m)$ for a traceless Hermitian operator $H_m$. Therefore, there exists a U(1)-invariant Hermitian operator $H=\bigoplus_{m=0}^n H_m$, such that $W=\exp(i H)$, and $\Tr(H_m)=0$. Then, the corollary follows immediately from the above lemma.

 \newpage
 
 \color{red}




\newpage

\color{black}

\newpage


\color{black}

\subsection{Charge vectors in the Lie algebra generated by local U(1)-invariant Hamiltonians \\ (Necessity of conditions in Eq.(\ref{App:conds2}) of theorem \ref{Thm21})}\label{App:nec2021}

For any operator $A$ acting on $(\mathbb{C}^2)^{\otimes n}$, the charge vector  is
\be
|\chi_A\rangle\equiv \sum_{m=0}^n \Tr(\Pi_m A)\ |m\rangle\ ,
\ee
and its Fourier transform is the function
\be\label{hgjfpq}
\chi_A(e^{i\theta})=\Tr(A U(e^{i\theta}) )=\Tr(A (e^{i\theta Z})^{\otimes n})=\sum_{l=0}^{n}\Tr(A C_l)\  (i\sin\theta)^{l}(\cos\theta)^{n-l}\  = \sum_{l=0}^{n} \Tr(A C_l)\ \xi_l(e^{i\theta}) \     \ \ \ \ : \theta\in [0,2\pi)\ ,
\ee
where
\be
 \xi_l(e^{i\theta})\equiv (\cos \theta)^{n-l}\ (i \sin\theta)^{l}\ ,
\ee
and we have used Eq.(\ref{gerto}). 

According to lemma \ref{lem2}, if operator $i A\in \mathfrak{h}_k$ then $\chi_A\in \tilde{\mathcal{S}}_k$, where 
\be
\tilde{\mathcal{S}}_k\equiv \{\chi_A:\  i A\in\mathfrak{h}_k  \ \}= \text{Span}_\mathbb{R}\Big\{\chi_A:\ A=A^\dag ,\  [A, \sum_r Z_r]=0,  A \text{ is $k$-local}\   \Big\}\ ,
\ee
i.e., $\tilde{\mathcal{S}}_k$ is the span of functions $\chi_A$ for all $k-$local Hermitian, U(1)-invariant operators.  Using Eq.(\ref{hgjfpq}) this can be rewritten as
\be
\tilde{\mathcal{S}}_k= \text{Span}_\mathbb{R}\Big\{\sum_{l=0}^{n} \Tr(A C_l)\ \xi_l :\ A=A^\dag ,\  [A, \sum_r Z_r]=0,  A \text{ is $k$-local}\   \Big\}\ .
\ee
For any $k$-local operator $A$, $\Tr(A C_l)=0$ for $l>k$. Furthermore, for any Hermitian operator $A$, $\Tr(A C_l)$ is a real number. This means $\tilde{\mathcal{S}}_k\subseteq \text{Span}_\mathbb{R}\{\xi_l: 0 \le l\le k \} $. Next, we show that this holds as equality. To see this consider U(1)-invariant Hermitian operator   $A= {\bf{Z}}^{\bf{b}}=Z_1^{b_1}\cdots Z_n^{b_n}$ for ${\bf{b}}=b_1\cdots b_n\in\{0,1\}^n$. This operator  satisfies  $\Tr(A C_l)=\Tr({\bf{Z}}^{\bf{b}} C_l)=2^n \delta_{w({\bf{b}}) , l}$,
 where $w({\bf{b}})=\sum_j b_j$. Using Eq.(\ref{hgjfpq}), this   implies $\chi_A= 2^n \xi_{w({\bf{b}})}$. Furthermore, if $w({\bf{b}})\le k$  then $A= {\bf{Z}}^{\bf{b}}$ is $k$-local, which implies $2^n \xi_{w({\bf{b}})}\in \tilde{\mathcal{S}}_k$. We conclude that
\be\label{9lkwgj}
\tilde{\mathcal{S}}_k=\text{Span}_\mathbb{R}\{\xi_l: 0 \le l\le k \} \ .
\ee
Therefore, lemma \ref{lem2} implies that if $iA\in \mathfrak{h}_k$, then 
\be
\sum_{l=0}^{n} \Tr(A C_{l})\ \xi_l\in \tilde{\mathcal{S}}_k= \text{Span}_\mathbb{R}\{\xi_{l'}: 0 \le l'\le k \}\ ,  
\ee
which means 
\be
\Tr(A C_{l})=0  :\  k< l\le n\ .
\ee
This proves the necessity of conditions in Eq.(\ref{App:conds2}). Equivalently, it implies
\be
\mathfrak{h}_{k}\subseteq \big\{A\in\mathfrak{h}_n :  \Tr(A C_l)=0,\  l\in\{ k+1,\cdots, n\}\big\}\ . 
\ee
In the rest of this Appendix, we prove the sufficiency of these conditions. We start with the case of diagonal operators.


\newpage

\subsection{Diagonal operators in the Lie algebra generated by local U(1)-invariant Hamiltonians\\ (Proof of theorem \ref{Thm21} in the special case of diagonal Hamiltonians)}\label{App:Suff2021}


Consider an arbitrary diagonal Hamiltonian
\be
H=\sum_{{\bf{z}}\in\{0,1\}^n}  h({\bf{z}})\ |{\bf{z}}\rangle\langle {\bf{z}}|=\sum_{{\bf{b}}\in\{0,1\}^n}  \tilde{h}({\bf{b}})\ {\bf{Z}}^{\bf{b}}\ ,
\ee
where 
\be
\tilde{h}({\bf{b}})=2^{-n}\sum_{{\bf{z}}\in\{0,1\}^n} (-1)^{{\bf{b}}\cdot {\bf{z}}}\ h({\bf{z}})=2^{-n}\ \Tr({\bf{Z}}^{\bf{b}} H)\ ,
\ee
 is the Fourier transform of $h(\bf{z})$.   Then,  the condition in Eq.(\ref{App:conds2}) is equivalent to
\be\label{App:conds22}
 \sum_{{\bf{b}}\in\{0,1\}^n: w({\bf{b}})=l } \tilde{h}({\bf{b}})=0 \ \ \ \  :\ l=k+1,\cdots, n\ ,
\ee
where the summation is over all bit strings with Hamming weight $l$.  We find that $iH \in\mathfrak{h}_k$, and hence the family of diagonal unitaries $\{e^{-i t H}: t\in\mathbb{R}\}$ can be generated using $k$-local U(1)-invariant Hamiltonians  only if Eq.(\ref{App:conds22}) holds. The sufficiency of these conditions follows immediately from the following lemma.

Recall the definitions
 \be
 R_{rs}=\frac{X_r X_s+Y_r Y_s}{2}\ \  ,\ \  \ T_{rs}= \frac{i}{2} [Z_r, R_{rs}]=\frac{X_r Y_s-Y_r X_s}{2}\ .
 \ee
We prove the following lemma,   which is a special case of lemma \ref{lem05} for diagonal operators.    \color{black} 
\begin{lemma}\label{lem41}
For  $n\ge 2$ qubits labeled as $1,\cdots, n$, it holds that
\begin{align}\label{kjeb}
\Big\{A=i\sum_{\textbf{b}} a_{\textbf{b}}\ \textbf{Z}^{\textbf{b}}: \ A+A^\dag=0 \ ,  \Tr(A \Pi_m)=0\ :   m=0,\cdots, n\Big\}\ &=\Big\{i\sum_{\textbf{b}} a_{\textbf{b}}\ \textbf{Z}^{\textbf{b}}: \ a_{\textbf{b}}\in\mathbb{R} ,\sum_{\textbf{b}: w(\textbf{b}) =l}a_{\textbf{b}}=0\ :   l=0,\cdots, n\Big\}\ \color{black} \nonumber\\  &\subset \mathfrak{alg}_\mathbb{R}\Big\{i R_{rs}, i Z_r: r\neq s\in\{1,\cdots, n\} \Big\}\ ,
\end{align}
where $\sum_{\textbf{b}: w(\textbf{b}) =l}$ is the summation over all bit strings with Hamming weight $l$, and the second line is the real Lie algebra generated by 2-local U(1)-invariant operators $\{i R_{rs}, i Z_r : r\neq s\in\{1,\cdots, n\} \}$.  
 \end{lemma}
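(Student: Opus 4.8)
The plan is to prove the inclusion by producing an explicit spanning set for the left-hand side out of the ``building blocks'' furnished by Eq.~(\ref{lkwf}). Since the right-hand side is a real Lie algebra, that equation guarantees that for every tuple of distinct qubits $r_1,\dots,r_v$ the skew-Hermitian operator $i\,(Z_{r_1}-Z_{r_v})\,Z_{r_1}Z_{r_2}\cdots Z_{r_v}$ lies in $\mathfrak{alg}_\mathbb{R}\{iR_{rs},iT_{rs}\}$, irrespective of the parity of $v$. Using $Z_j^2=I$, each such block simplifies to $i\,(Z_{r_2}\cdots Z_{r_v}-Z_{r_1}\cdots Z_{r_{v-1}})$, i.e.\ $i$ times a difference of two Pauli-$Z$ strings of equal Hamming weight $v-1$. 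Writing $Z_S\equiv\prod_{j\in S}Z_j$, the block built from $r_1,\dots,r_v$ equals $i(Z_S-Z_{S'})$ with $S=\{r_2,\dots,r_v\}$ and $S'=\{r_1,\dots,r_{v-1}\}$; these are weight-$(v-1)$ sets with $S\triangle S'=\{r_1,r_v\}$, so $S$ and $S'$ differ by a single-element swap, the swapped pair being the two distinguished endpoints and the middle indices forming the common part $S\cap S'$ (whose ordering is irrelevant to the value of the block).

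First I would dispose of the easy inclusion. Every block is a difference of two equal-weight strings, so its coefficients sum to zero within a single weight sector and it carries no identity component; hence each block already belongs to the left-hand subspace, and so does their real span. The content of the lemma is the reverse: that the blocks span all of the left-hand side. Here I would decompose the target by Hamming weight as the direct sum over $1\le w\le n$ of the ``sum-zero'' subspaces $\{\sum_{w(\textbf{b})=w}a_{\textbf{b}}\,\textbf{Z}^{\textbf{b}}:\ \sum a_{\textbf{b}}=0\}$; the weight-$0$ part is absent because the generated algebra is traceless, and the top sector $w=n$ is one-dimensional, so its sum-zero constraint forces the coefficient of $Z_1\cdots Z_n$ to vanish and nothing need be produced there. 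For each intermediate weight $1\le w\le n-1$, taking $v=w+1$ qubits shows, by the identification above, that the blocks realize \emph{exactly} all operators $i(Z_S-Z_{S'})$ with $|S|=|S'|=w$ and $|S\triangle S'|=2$.

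It therefore remains to show that these single-swap differences $Z_S-Z_{S'}$ span the sum-zero subspace of the weight-$w$ sector; this is the crux of the argument and the step I expect to be the main (if standard) obstacle. It is precisely the statement that the edge vectors $\delta_S-\delta_{S'}$ of the Johnson graph $J(n,w)$ -- vertices the weight-$w$ subsets, edges joining subsets that differ by one swap -- span the space of real functions on the vertices that sum to zero. This holds iff $J(n,w)$ is connected, which it is for every $1\le w\le n-1$, since any weight-$w$ subset can be turned into any other by successive single-element swaps. I would argue the spanning cleanly by duality: a function in the sum-zero subspace that is orthogonal to every edge vector is constant along edges, hence constant on each connected component, hence globally constant by connectedness; being also sum-zero it must vanish, so the edge vectors indeed span the whole sum-zero subspace.

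Assembling the pieces, the real span of the blocks equals the left-hand subspace, which by Eq.~(\ref{lkwf}) lies inside $\mathfrak{alg}_\mathbb{R}\{iR_{rs},iT_{rs}\}$, proving the lemma. The only bookkeeping beyond the Johnson-graph connectivity -- the factors of $i$ and the signs $c_v$ that relate each block to a concrete nested commutator of the $R_{rs}$ and $T_{rs}$ -- is exactly what Eq.~(\ref{lkwf}) supplies, so it requires no separate treatment.
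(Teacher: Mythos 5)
Your proposal is correct and follows essentially the same route as the paper's own proof: both start from the nested-commutator identity Eq.~(\ref{lkwf}) (Eq.~(\ref{kjf00}) in the appendix) to place every single-swap difference $i(\textbf{Z}^{\textbf{b}}-\textbf{Z}^{\textbf{b}'})$ of equal-weight Pauli-$Z$ strings at Hamming distance $2$ inside the generated Lie algebra, and both reduce the remaining spanning claim to the connectivity of equal-weight bit strings under transpositions. The only cosmetic difference is the final linear-algebra step, where you argue by duality (a sum-zero vector orthogonal to all edge vectors of the Johnson graph $J(n,w)$ must vanish) while the paper telescopes differences along an explicit transposition path; your handling of the weight-$0$ and weight-$n$ sectors also matches the paper's intent, correctly reading past the lemma's typographical slips (``$n\le 2$'' should be $n\ge 2$, and the identity component must be excluded since the generated algebra is traceless).
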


The lemma implies that   the subspace in Eq.(\ref{kjeb}) is a subspace of $\mathfrak{h}_k$ for $k\ge 2 $, where  $\mathfrak{h}_k$ is the Lie algebra generated by $k$-local U(1)-invariant skew-Hermitian operators.  By definition, in addition to this subspace, $\mathfrak{h}_k$ also includes  arbitrary linear combinations of operators $\{i\textbf{Z}^{\textbf{b}}: w(\textbf{b})\le k\}$. Linear combinations of these operators with the set of operators in the left-hand side of Eq.(\ref{kjeb}), yield all diagonal Hamiltonians satisfying  condition in Eq.(\ref{App:conds22}) which is equivalent to the condition in Eq.(\ref{App:conds2}) in theorem \ref{Thm21}. This complete the proof of the theorem \ref{Thm21}  in the special case of diagonal Hamiltonians.

In the rest of this section, we prove lemma \ref{lem41}. To prove this lemma we use the fact that,  for any subset $t\le n$ distinct qubits  $l_1, l_2,\cdots, l_{t}\in\{1,\cdots, n\} $, we have 
\begin{align}\label{kjf00}
i c_t \times (Z_{l_1}...Z_{l_{t-1}}-Z_{l_2}...Z_{l_{t}})=  
&\big[[..[ [i R_{l_1l_2}, i R_{l_2l_3} ], i R_{l_3l_4}].., i R_{l_{t-1}, l_{t}}], i R_{l_{t}, l_{1}}\big]&&: t \text{ odd} ,\\  
&\big[[..[ [i R_{l_1l_2}, i R_{l_2l_3} ], i R_{l_3l_4}].., i R_{l_{t-1}, l_{t}}],  i T_{l_{t}, l_{1}}\big]&&: t\  \text{even}\ ,\nonumber 
 \end{align}
 where $c_t=\pm1$, depending on $t$. Because of the usefulness of this commutation relation, for completeness we repeat  it again, using a slightly  different notation: 

\color{black}
\begin{empheq}[box=\fbox]{align}
\nonumber \\ 
\ \ &R_{rs}\equiv\frac{X_r X_s+Y_r Y_s}{2} \  \  :r\neq s\nonumber\\ 
\nonumber & \\ 
\ \ &\forall m\ge 2:\ \  \ \ \ \  (Z_1-Z_m)\ Z_{1}\cdots Z_{m}=  
\begin{cases}\!
  \begin{aligned}[b]    
  && c_m\ \Big[R_{1,{m}}\ ,\   \big[R_{m,{m-1}}\ , \ \cdots  [R_{4,3}\ ,\  [R_{3,2}, R_{2,1}]]\cdots \big]\Big]\ \ \  : &&&m \text{ odd} \\ 
    \nonumber \\
        && c_m\ \Big[R_{1,{m}}\ ,\   \big[R_{m,{m-1}}\ , \cdots\  [R_{3,2}\ , \ [R_{2,1}, \frac{Z_{1}}{2}]]\cdots \big]\Big]
    \ \ \  : &&&m\  \text{even}\
  \end{aligned} \end{cases}
\nonumber \\ \nonumber 
\\ \ \ \ &c_m=\pm1 \nonumber\\ \nonumber 
\end{empheq}

\begin{proof}(lemma \ref{lem41})
To see the first line of Eq.(\ref{kjeb}) note that the set of operators $\{\Pi_m: m=0,\cdots, n\}$ and $\{C_l=\sum_{\textbf{b}: w(\textbf{b})=l} \textbf{Z}^{\textbf{b}} : l=0,\cdots , n\}$ span the same $(n+1)$-dimensional space. Therefore, the condition $\Tr(A \Pi_m)=0$ for all $m=0,\cdots, n$ is equivalent to the condition $\Tr(A C_l)=0$ for all $l=0,\cdots, n$. 
For diagonal operator $A=\sum_{\textbf{b}} a_{\textbf{b}} \textbf{Z}^{\textbf{b}}$,  using the relation $\Tr(\textbf{Z}^{\textbf{b}} C_l)=2^n \delta_{l, w(\textbf{b})}$, we find that this condition is equivalent to $\sum_{\textbf{b}: w(\textbf{b})=l} a_{\textbf{b}} =0$ for $l=0,\cdots , n$. This proves the first line in Eq.(\ref{kjeb}). In the following, we show that  any diagonal operator  satisfying this condition can be written as a linear combination of  the commutators in Eq.(\ref{kjf00}), and hence is in $\mathfrak{alg}_\mathbb{R}\big\{i R_{rs}, i Z_r: r\neq s\in\{1,\cdots, n\} \big\}$.

For any pair of bit strings  $\textbf{b}_1, \textbf{b}_2\in\{0,1\}^n$,  let  $d(\textbf{b}_1, \textbf{b}_2) $ be their Hamming distance, i.e., the number of bits that should be flipped to transform one bit string to another. Using Eq.(\ref{kjf00}),  for any pair of bit strings $\textbf{b}_1, \textbf{b}_2\in\{0,1\}^n$, the operator $ i(\textbf{Z}^{\textbf{b}_1}-\textbf{Z}^{\textbf{b}_2})$ can be obtained from these commutators,  provided that $\textbf{b}_1$ and $\textbf{b}_2 $ have equal Hamming weights, i.e. $w(\textbf{b}_1)=w(\textbf{b}_2)=t-1$  and their Hamming distance $d(\textbf{b}_1, \textbf{b}_2)=2$. This means that the linear span of operators in Eq.(\ref{kjf00}) for a fixed $t$ in the interval $2\le t\le n$ contains all operators
\be
\text{Span}_\mathbb{R}\Big\{ i(\textbf{Z}^{\textbf{b}_1}-\textbf{Z}^{\textbf{b}_2}): w(\textbf{b}_1)=w(\textbf{b}_2)=t-1, d(\textbf{b}_1, \textbf{b}_2)=2 \Big\}\subset \mathfrak{alg}_\mathbb{R}\big\{i R_{rs}, i Z_r: r\neq s\in\{1,\cdots, n\} \big\}  \ .
\ee
Next, we prove that  the restriction $d(\textbf{b}_1, \textbf{b}_2)=2$ in the left-hand side can be removed, that is  
\begin{align}\label{lkwf21}
&\text{Span}_\mathbb{R}\Big\{ i(\textbf{Z}^{\textbf{b}_1}-\textbf{Z}^{\textbf{b}_2}): w(\textbf{b}_1)=w(\textbf{b}_2)=t-1, d(\textbf{b}_1, \textbf{b}_2)=2 \Big\}
=\text{Span}_\mathbb{R}\Big\{i(\textbf{Z}^{\textbf{b}_1}-\textbf{Z}^{\textbf{b}_2}) : \   w(\textbf{b}_1)=w(\textbf{b}_2)=t-1\Big\}\ ,
\end{align}
i To prove this we use  the fact that any pair of bit strings strings $\textbf{c}_1, \textbf{c}_2\in\{0,1\}^n$ with equal Hamming weights $w(\textbf{c}_1)=w(\textbf{c}_2)=t-1$ are related to each other by a permutation of bits. Furthermore, any permutation can be realized by a sequence of  \emph{transpositions}, i.e., 2-bit permutations, which only exchange the value of two-bits. It follows that for any pair of bit strings $\textbf{c}_1, \textbf{c}_2\in\{0,1\}^n$ with equal Hamming weights $w(\textbf{c}_1)=w(\textbf{c}_2)=t-1$, there is a path in the space of bit strings with Hamming weight $t-1$  from $\textbf{c}_1$ to $\textbf{c}_2$,  i.e.,  
\be
\textbf{f}_1,\cdots, \textbf{f}_L\in \{0,1\}^n :\ \   \ w(\textbf{f}_k)=t-1,\ \ \  \textbf{f} _1=\textbf{c}_1,\ \textbf{f}_L=\textbf{c}_2\ ,
\ee
and
\be
\textbf{c}_1=\textbf{f}_1\longrightarrow \textbf{f}_2\longrightarrow \cdots \longrightarrow  \textbf{f}_L=\textbf{c}_2\ ,
\ee
where each consecutive pair of bit strings have Hamming distance 2, i.e.
\be
d(\textbf{f}_r, \textbf{f}_{r+1})=2:\ \ \   1\le r\le L-1\ .
\ee

 Therefore, $i(\textbf{Z}^{\textbf{c}_1}-\textbf{Z}^{\textbf{c}_2})$ can be obtained using the linear combination 
\be
i(\textbf{Z}^{\textbf{c}_1}-\textbf{Z}^{\textbf{c}_2})=i(\textbf{Z}^{\textbf{f}_1}-\textbf{Z}^{\textbf{f}_L})= i(\textbf{Z}^{\textbf{f}_1}-\textbf{Z}^{\textbf{f}_2})+i(\textbf{Z}^{\textbf{f}_2}-\textbf{Z}^{\textbf{f}_3})+\cdots +i(\textbf{Z}^{\textbf{f}_{L-1}}-\textbf{Z}^{\textbf{f}_L})\ .
\ee
  This proves Eq.(\ref{lkwf21}).  Next, it can be easily seen that 
\be\label{kwfkjf76}
\text{Span}_\mathbb{R}\Big\{i(\textbf{Z}^{\textbf{b}_1}-\textbf{Z}^{\textbf{b}_2}) : \   w(\textbf{b}_1)=w(\textbf{b}_2)=t-1\Big\}=\Big\{i\sum_{\textbf{b}: w(\textbf{b})=t-1} a_{\textbf{b}}\ \textbf{Z}^{\textbf{b}}: \ \ a_{\textbf{b}}\in\mathbb{R}\ , \  \sum_{\textbf{b}: w(\textbf{b}) =t-1} a_{\textbf{b}} =0  \Big\}\ ,
\ee
where the right-hand side is the subspace of  all linear combinations $i\sum_{\textbf{b}: w(\textbf{b})=t-1} a_{\textbf{b}}\ \textbf{Z}^{\textbf{b}}$    for bit strings with Hamming weight $t-1$, which satisfy the linear constraint $\sum_{\textbf{b}: w(\textbf{b}) =t-1} a_{\textbf{b}} =0$. Recall that  $t$ can take all values in $\{2,\cdots, n\}$, which means the Hamming weight of bit strings $\textbf{b}$ takes values between $1$ to $n-1$. In other words, the two cases of bit strings $\textbf{b}=0^n$ and  $\textbf{b}=1^n$, which correspond to operators $I$ and $Z^{\otimes n}$ cannot be obtained in this way. It follows that the linear combination of operators in  Eq.(\ref{kwfkjf76}) is equal to 
\be
\big\{i\sum_{\textbf{b}} a_{\textbf{b}}\ \textbf{Z}^{\textbf{b}}: \ a_{\textbf{b}}\in\mathbb{R} ,\sum_{\textbf{b}: w(\textbf{b}) =l}a_{\textbf{b}}=0\ :   l=0,\cdots, n\big\}\ . 
\ee
We conclude that this set of operators 
can be obtained as a linear combination of the commutators in Eq.(\ref{kjf00}), and therefore is contained in $\mathfrak{alg}_\mathbb{R}\big\{i R_{rs}, i Z_r: r\neq s\in\{1,\cdots, n\} \big\}$.  This completes the proof of lemma \ref{lem41}.
\end{proof}

\subsection{From diagonal Hamiltonians to all symmetric Hamiltonians}\label{Sec:fg} 
In this section, we prove that if one can implement  
all diagonal Hamiltonians as well as 2-local Hamiltonians  $\{R_{j, j+1}=(X_jX_{j+1}+Y_jY_{j+1})/2 : j=1,\cdots, n\}$,  then one can implement all U(1)-invariant unitaries, i.e., those commuting with $\sum_j Z_j$.  In the next section, we use this result  to prove theorem \ref{Thm21} in the general case. We also apply this result in Appendix \ref{App:C} to prove theorem \ref{Thm:energy} and show that a single ancillary qubit suffices to circumvent the no-go theorem.

The formal version of this result is stated in the following theorem: 

\begin{theorem}\label{Thm5}
The real Lie algebra generated by the set of diagonal skew-Hermitian operators   
and operators $\{i R_{j, j+1}=i(X_jX_{j+1}+Y_{j}Y_{j+1})/2 : j=1,\cdots, n-1\}$ is equal to the set of all skew-Hermitian $U(1)$-invariant operators, i.e., those commuting with $\sum_j Z_j$. In other words,
\be
\mathfrak{h}_n\equiv \Big\{A\in \mathcal{L}({(\mathbb{C}^2)}^{\otimes n}): A+A^\dag=0\ , \  \big[A, \sum^n_{r=1} Z_r\big]=0  \Big\}= \mathfrak{alg}_\mathbb{R}\Big\{i R_{j, j+1} , i\textbf{Z}^{\textbf{b}}: \textbf{b}\in\{0,1\}^n, j\in \{1,\cdots, n-1\} \Big\}\ .
 \ee
\end{theorem}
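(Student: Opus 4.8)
The plan is to work entirely at the level of the Lie algebra and exploit the block structure dictated by the symmetry. Since every generator commutes with $\sum_j Z_j$, the generated algebra $\mathfrak{g}:=\mathfrak{alg}_\mathbb{R}\{iR_{j,j+1}, i\mathbf{Z}^{\mathbf{b}}\}$ is automatically contained in $\mathfrak{h}_n$, so only the reverse inclusion requires work. By Schur's lemma (as in Eq.(\ref{Schur})), $\mathfrak{h}_n$ decomposes as $\bigoplus_{w=0}^n \mathfrak{u}(d_w)$, where the sectors are labelled by the Hamming weight $w$ (equivalently by the eigenvalue of $\sum_j Z_j$) and $d_w=\binom{n}{w}$ is the number of computational basis states $|x\rangle$ of weight $w$. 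I would fix the computational basis as the common eigenbasis and aim to produce, inside $\mathfrak{g}$, a full set of generators of each $\mathfrak{u}(d_w)$.

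First, the diagonal generators already give the whole Cartan part: since $\{\mathbf{Z}^{\mathbf{b}}\}$ spans all diagonal operators, every $i|x\rangle\langle x|$ lies in $\mathfrak{g}$. Next I would isolate single off-diagonal matrix elements. The obstacle here --- and the heart of the proof --- is that $R_{j,j+1}$ does not act on one pair of basis states but simultaneously on every \emph{edge}, i.e.\ every pair $(x,y)$ of equal-weight strings differing only at sites $j,j+1$ by $01\leftrightarrow 10$. To separate these I would use the diagonal operators as a spectral filter: for a generic diagonal $D=\sum_{\mathbf{b}} a_{\mathbf{b}}\mathbf{Z}^{\mathbf{b}}$ the map $\mathrm{ad}_{iD}$ multiplies the $(x,y)$ entry of any operator by $i(D_x-D_y)$, so $\mathrm{ad}_{iD}^m(iR_{j,j+1})$ weights each edge $(x,y)$ by $(D_x-D_y)^m$. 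Choosing $D$ so that the numbers $D_x-D_y$ are distinct across edges, a Vandermonde argument lets me take real linear combinations of the $\mathrm{ad}_{iD}^m(iR_{j,j+1})$ that single out one edge at a time; moreover the even powers produce the symmetric generator $i(|x\rangle\langle y|+|y\rangle\langle x|)$ and the odd powers the antisymmetric generator $|x\rangle\langle y|-|y\rangle\langle x|$ (the latter is just the $T$-type operator of Eq.(\ref{kjgi})). That both the ``real'' and the ``imaginary'' off-diagonal partners appear is precisely what will eventually yield the full complex unitary algebra rather than only its real orthogonal subalgebra. Note that each isolated edge operator lives in a single weight sector, so this step simultaneously decouples the sectors.

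With these single-edge generators in hand, together with the diagonal operators, I would invoke a standard graph-connectivity argument. For each edge $(x,y)$ the three operators $i(|x\rangle\langle x|-|y\rangle\langle y|)$, $i(|x\rangle\langle y|+|y\rangle\langle x|)$ and $|x\rangle\langle y|-|y\rangle\langle x|$ span an $\mathfrak{su}(2)$. Commuting the generators of two edges sharing a vertex, e.g.\ $[\,i(|x\rangle\langle y|+|y\rangle\langle x|),\, i(|y\rangle\langle z|+|z\rangle\langle y|)\,]=-(|x\rangle\langle z|-|z\rangle\langle x|)$, produces the generator of the distance-two edge $(x,z)$; iterating along paths shows that whenever the underlying graph is connected one generates $|x\rangle\langle y|-|y\rangle\langle x|$ and $i(|x\rangle\langle y|+|y\rangle\langle x|)$ for \emph{every} pair $x,y$ in the component. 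The union over $j=1,\dots,n-1$ of the nearest-neighbour hopping edges is exactly the hard-core hopping graph on weight-$w$ strings, which is connected (any configuration of $w$ excitations on a line can be carried to any other by adjacent hops). Hence within each sector $\mathfrak{g}$ contains all off-diagonal and diagonal skew-Hermitian generators, i.e.\ all of $\mathfrak{u}(d_w)$, and therefore $\mathfrak{g}\supseteq\bigoplus_w\mathfrak{u}(d_w)=\mathfrak{h}_n$, completing the proof.

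I expect the isolation step (the Vandermonde separation of edges via $\mathrm{ad}_{iD}$) to be the main obstacle, both because it requires checking that a generic choice of diagonal makes the differences $D_x-D_y$ distinct --- and in particular distinct across different sectors, so that the sectors decouple --- and because one must verify that the even and odd powers really do deliver both the symmetric and the antisymmetric off-diagonal partners. Everything afterward is the routine connected-graph generation of $\mathfrak{u}(d)$.
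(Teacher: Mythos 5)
Your proof is correct, and its skeleton coincides with the paper's: the forward inclusion is immediate, $\mathfrak{h}_n$ is reduced to generating both off-diagonal partners $|x\rangle\langle y|-|y\rangle\langle x|$ and $i(|x\rangle\langle y|+|y\rangle\langle x|)$ for equal-weight pairs, connectivity of the weight-$w$ strings under adjacent transpositions propagates single-edge generators to arbitrary pairs via the same distance-two commutator you write, and commutation with diagonals toggles between the symmetric and antisymmetric partners. Where you genuinely diverge is in the step you yourself flag as the main obstacle: isolating a single edge out of $R_{j,j+1}$. You do this with a Vandermonde filter, taking real combinations of $\mathrm{ad}_{iD}^{m}(iR_{j,j+1})$ for a generic diagonal $D$ with distinct nonzero gaps $D_x-D_y$ across edges; this works (the even/odd power bookkeeping you describe is right, since odd powers give $(D_x-D_y)\bigl((D_x-D_y)^2\bigr)^m$ and the squares are distinct), but it is heavier than necessary. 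The paper's proof dissolves the obstacle in one line: since \emph{all} diagonals are available, so in particular the rank-one projectors $i|{\bf b}\rangle\langle {\bf b}|$, the single commutator $\big[\,i|{\bf b}\rangle\langle {\bf b}|\,,\,iR_{j,j+1}\big]=|{\bf b}'\rangle\langle {\bf b}|-|{\bf b}\rangle\langle {\bf b}'|$ already isolates exactly one edge (${\bf b}'$ being ${\bf b}$ with bits $j,j+1$ exchanged), because the projector itself acts as the spectral filter; no genericity, distinctness check, or power iteration is needed, and the sectors decouple automatically since each isolated edge lies in a single weight sector. What your heavier route buys is robustness: the Vandermonde argument only needs \emph{one} sufficiently generic diagonal Hamiltonian rather than the full diagonal algebra, so it would survive in settings where the diagonal resources are restricted; in the present theorem, where every $i{\bf Z}^{\bf b}$ is a generator, the paper's rank-one trick is the shorter path. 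Your explicit framing of $\mathfrak{h}_n$ as $\bigoplus_w \mathfrak{u}\big(\binom{n}{w}\big)$ is also a mild repackaging (the paper works directly with the spanning set of equal-weight matrix units), but the content is the same.
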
 
\begin{proof} 
It is clear that the Lie algebra $\mathfrak{alg}_\mathbb{R}\Big\{i R_{j, j+1} , i\textbf{Z}^{\textbf{b}}: \textbf{b}\in\{0,1\}^n, j\in \{1,\cdots, n-1\} \Big\}$ is contained in $\mathfrak{h}_n$. Here, we prove the converse, i.e., we show 
\be
\mathfrak{h}_n \subseteq \mathfrak{alg}_\mathbb{R}\Big\{i R_{j, j+1} , i\textbf{Z}^{\textbf{b}}: \textbf{b}\in\{0,1\}^n, j\in \{1,\cdots, n-1\} \Big\}\ .
\ee
Any arbitrary operator $A\in \mathcal{L}({(\mathbb{C}^2)}^{\otimes n})$ can be written as 
\be
A=\sum_{{\bf{b}},{\bf{b}}'\in\{0,1\}^n} a_{{\bf{b}},{\bf{b}}'}\ |{\bf{b}}\rangle\langle{\bf{b}}'|\ .
\ee
Using the fact that
\be
\big(\sum^n_{r=1} Z_r\big) |{\bf{b}}\rangle=[n-2 w({\bf{b}})]\  |{\bf{b}}\rangle\ ,
\ee
we find that
\be
[A, \sum^n_{r=1} Z_r]=2\sum_{{\bf{b}},{\bf{b}}'} a_{{\bf{b}},{\bf{b}}'} [w({\bf{b}})-w({\bf{b}'})]\ |{\bf{b}}\rangle\langle{\bf{b}}'|\ .
\ee
This implies that if $[A, \sum^n_{r=1} Z_r]=0$, then
\be
a_{{\bf{b}},{\bf{b}}'}=0\ \ \ \  \text{for}\ \  w({\bf{b}})\neq w({\bf{b}}')\ .
\ee
In other words, the off-diagonal terms for bit strings with different Hamming weights vanish. Therefore, the space of $U(1)$-invariant operators is spanned by 
\be
\{|{\bf{b}}\rangle\langle{\bf{b}}'|: w({\bf{b}})=w({\bf{b}}');\  {\bf{b}}, {\bf{b}'}\in\{0,1\}^n \}\ .
\ee
This implies that $\mathfrak{h}_n$, the space of skew-Hermitian $U(1)$-invariant operators  is spanned by 
\be
\mathfrak{h}_n=\text{Span}_\mathbb{R}\Big\{i\big(|{\bf{b}}\rangle\langle {\bf{b}}'|+|{\bf{b}}'\rangle\langle {\bf{b}}|\big) ,   |{\bf{b}}\rangle\langle {\bf{b}}'|-|{\bf{b}}'\rangle\langle {\bf{b}}|:   w({\bf{b}})= w({\bf{b}}')\ ; \ {\bf{b}}, {\bf{b}}' \in\{0,1\}^n \Big\}\  .
\ee
Using the fact that for any pair of bit strings ${\bf{b}},{\bf{b}'}\in\{0,1\}^n$,  
\be
\Big[ i |{\bf{b}}\rangle\langle {\bf{b}}|\ ,\  ( |{\bf{b}}\rangle\langle {\bf{b}}'|-|{\bf{b}}'\rangle\langle {\bf{b}}|) \Big]=i \big(|{\bf{b}}\rangle\langle {\bf{b}}'|+|{\bf{b}'}\rangle\langle {\bf{b}}|\big)\ ,
\ee
we find that the Lie algebra $\mathfrak{h}_n$ is generated by 
\be
\mathfrak{h}_n =\mathfrak{alg}\Big\{i |{\bf{b}}\rangle\langle {\bf{b}}|\ ,\  |{\bf{b}}\rangle\langle {\bf{b}}'|-|{\bf{b}}'\rangle\langle {\bf{b}}|:   w({\bf{b}})= w({\bf{b}}')\ ; \ {\bf{b}}, {\bf{b}}' \in\{0,1\}^n \Big\}\ .
 \ee
Next, we prove that this algebra is generated by the following set of operators
\be
\big\{i |{\bf{b}}\rangle\langle {\bf{b}}|\big\}\cup \big\{i R_{j, j+1}=i(X_jX_{j+1}+Y_{j}Y_{j+1})/2 : j=1,\cdots, n-1 \big\}\ ,
\ee
i.e., we prove that
\bes\label{ope}
\begin{align}
&\mathfrak{alg}\Big\{ i |{\bf{b}}\rangle\langle {\bf{b}}|, i R_{j, j+1}=i(X_jX_{j+1}+Y_{j}Y_{j+1})/2 : j=1,\cdots, n-1 , {\bf{b}}\in\{0,1\}^n  \Big\}\\ &\ \ \ \ \ \ \ \ = \mathfrak{alg}\Big\{i |{\bf{b}}\rangle\langle {\bf{b}}|\ ,\  |{\bf{b}}\rangle\langle {\bf{b}}'|-|{\bf{b}}'\rangle\langle {\bf{b}}|:   w({\bf{b}})= w({\bf{b}}')\ ; \ {\bf{b}}, {\bf{b}}' \in\{0,1\}^n \Big\}=\mathfrak{h}_n\ .
 \end{align}
\ees
To prove this claim, first note that for any bit string ${\bf{b}}\in\{0,1\}^n$, and any pair of distinct qubits $l,r\in\{1,\cdots, n\}$, it holds that
\begin{align}
\big[ i|{\bf{b}}\rangle\langle {\bf{b}}| , i R_{l r}  \big]= |{\bf{b}'}\rangle\langle {\bf{b}}|-|{\bf{b}}\rangle\langle {\bf{b}'}|\equiv F({\bf{b}'}, {\bf{b}})\ ,
\end{align}
where ${\bf{b}'}$ is the bit string obtained by exchanging bits $l$ and $r$ of bit string ${\bf{b}}$, and for any pair of bit strings ${\bf{d}}$ and ${\bf{e}}$, we have defined the notation
\be
F({\bf{d}}, {\bf{e}})\equiv |{\bf{d}}\rangle\langle {\bf{e}}|-|{\bf{e}}\rangle\langle {\bf{d}}|\ . 
\ee
Next, note that for any three distinct bit strings 
${\bf{b}}, {\bf{b}'}, {\bf{b}''}\in\{0,1\}^n$, it holds that 
\be\label{lkhwr}
F({\bf{b}}, {\bf{b}''})=\Big[F({\bf{b}}, {\bf{b}'}), F({\bf{b}'}, {\bf{b}''}) \Big]\ . 
\ee
By combining these two steps, we can obtain  $F({\bf{c}}_1, {\bf{c}}_2)= |{\bf{c}}_1\rangle\langle {\bf{c}}_2|-|{\bf{c}}_2\rangle\langle {\bf{c}}_1|$, for any pair of bit strings ${\bf{c}}_1, {\bf{c}}_2\in\{0,1\}^n$ with equal Hamming weights:  Recall that any pair of bit strings with equal Hamming weights are related via 
a permutation and any such permutation can be realized by combining \emph{transpositions}, i.e., 2-bit permutations. Therefore, there exists a sequence 
\be\label{keke}
{\bf{c}_1}={\bf{b}_1}\longrightarrow{\bf{b}_2}\ \cdots, \longrightarrow {\bf{b}_L}={\bf{c}_2}\ ,
\ee
where
${\bf{b}_1}={\bf{c}_1}$, 
 ${\bf{b}_L}={\bf{c}_2}$, $d({{\bf{b}}_p},{{\bf{b}}_{p+1}})=2$ for $1\le p\le L-1$. In fact, because any permutation on $n$ bits can be generated by transpositions  on nearest-neighbor pairs of bits $j$ and $j+1$, for $j=1,\cdots, n-1$, in the chain in Eq.(\ref{keke}), we can assume any two consecutive bit strings  ${\bf{b}_p}$  and ${\bf{b}_{p+1}}$ are identical for all bits, except a pair of nearest-neighbor bits $j$ and $j+1$.

 Then, using Eq.(\ref{lkhwr}) we have
 \begin{align}
F({\bf{c}_1}, {\bf{c}_2}) =F({\bf{b}_1}, {\bf{b}_L})=\Big[\big[\cdots \big[\big[[F({\bf{b}_1}, {\bf{b}_2}), F({\bf{b}_2}, {\bf{b}_3}) ],F({\bf{b}_3}, {\bf{b}_4})\big],F({\bf{b}_4}, {\bf{b}_5})  \big]\cdots \big],F({\bf{b}_{L-1}}, {\bf{b}_L})  \Big]\ .
\end{align}
This proves Eq.(\ref{ope}), i.e., the Lie algebra $\mathfrak{h}_n$ is generated by operators  $\{ i |{\bf{b}}\rangle\langle {\bf{b}}|, i R_{j , j+1}\}$, and  completes proof of theorem \ref{Thm5}.
\end{proof}

\subsection{ The commutant sub-algebra of U(1)-invariant operators $\mathfrak{h}_n$ (Proof of lemma \ref{lem05})}\label{App:Sec:complete}

In this section we prove lemma \ref{lem05}, which completes the proof of theorem \ref{Thm21}. 
This lemma states that any U(1)-invariant skew-Hermitian operator whose trace is zero in all charge sectors is in the Lie algebra generated by  operators $iR_{r,s}=i(X_rX_{s}+Y_rY_{s})/2$ together with $iZ_r$ for $r\neq s\in\{1,\cdots, n\}$.  More precisely, we show that
\be\label{haka}
\big\{L\in \mathfrak{h}_n: \Tr(L \Pi_m)=0\ :\ m=0,\cdots, n \big\}=[\mathfrak{h}_n, \mathfrak{h}_n]\subset  \mathfrak{alg}\{i Z_r , i R_{r,s}: r\neq s\in\{1,\cdots n\}\}\ ,
\ee
where  $[\mathfrak{h}_n, \mathfrak{h}_n]$ denotes  the commutator sub-algebra of $\mathfrak{h}_n$, i.e.,  the  Lie algebra generated by $[ A_1, A_2]$, for all $A_1,A_2\in \mathfrak{h}_n$.

To see the equality in Eq.(\ref{haka}), first, 
 note that $\Tr(\Pi_m [A_1,A_2])=\Tr([\Pi_m, A_1]A_2)=0$, which follows from the fact that $A_1, A_2\in \mathfrak{h}_n$ are block-diagonal with respect to $\{\Pi_m\}$. Since all elements of $[\mathfrak{h}_n, \mathfrak{h}_n]$ can be written as linear combinations of such commutators, we conclude that  
 \be
[\mathfrak{h}_n, \mathfrak{h}_n]\subseteq \{L\in \mathfrak{h}_n: \Tr(L \Pi_m)=0\ :\ m=0,\cdots, n \}\ .
 \ee
To show that the equality holds consider an arbitrary operator $L\in \mathfrak{h}_n$ satisfying $ \Tr(L \Pi_m)=0\ :\ m=0,\cdots, n $.  Because of U(1) symmetry, $L$ is block-diagonal with respect to $\{\Pi_m\}$, that is  $L=\sum_{m=0}^n \Pi_m L\Pi_m$. Furthermore, the condition $\Tr(L\Pi_m)=0$, implies that $\Pi_mL\Pi_m$ is a traceless skew-Hermitian operator with support restricted to the subspace of states with Hamming weight $m$. Recall  that any traceless skew-Hermitian operator in $\mathbb{C}^d$  for $d\ge 2$,   can be written as a linear combination of the commutators of skew-Hermitian operators. In other words, the commutator sub-algebra of $\mathfrak{su}(d)$ is equal to $\mathfrak{su}(d)$, i.e.,  $\mathfrak{su}(d)=[\mathfrak{su}(d),\mathfrak{su}(d)]$  \cite{fulton2013representation}. Therefore, the traceless skew-Hermitian operator   $\Pi_m L\Pi_m$   can be written as a linear combination of the commutators of skew-Hermitian operators with support restricted to the subspace with Hamming weight $m$. Because all such skew-Hermitian operators belong to $\mathfrak{h}_n$, we conclude that  $\Pi_m L\Pi_m\in [\mathfrak{h}_n,\mathfrak{h}_n]$, which in turn implies $L=\sum_m \Pi_m L\Pi_m\in  [\mathfrak{h}_n,\mathfrak{h}_n]$.  In summary, we found  
  \be\label{lab2}
\{L\in \mathfrak{h}_n: \Tr(L \Pi_m)=0\ :\ m=0,\cdots, n \}=[\mathfrak{h}_n, \mathfrak{h}_n]\ . 
 \ee
Next, applying lemma \ref{lem41} and theorem \ref{Thm5}, we argue that this Lie algebra is a sub-algebra of 
$\mathfrak{alg}\{i Z_r , i R_{r,s}: r\neq s\in\{1,\cdots n\}\}$. 

Let $\mathfrak{z}_n\subset \mathfrak{h}_n$ be the set of all diagonal skew-Hermitian operators. Lemma \ref{lem41} implies that
\be
\mathfrak{z}_n\cap [\mathfrak{h}_n, \mathfrak{h}_n]=\big\{A\in\mathfrak{z}_n: \Tr(A \Pi_m)=0 , m=0,\cdots, n \big\} \subset  \mathfrak{alg}\{ i Z_r , i R_{r,s}: r\neq s\in\{1,\cdots n\}\}\ ,
\ee
where the first equality follows from Eq.(\ref{lab2}). This means that $\mathfrak{alg}\{ i Z_r , i R_{r,s}: r\neq s\in\{1,\cdots n\}\}$ contains all elements of $\mathfrak{z}_n$  satisfying the constraint $\Tr(A \Pi_m)=0: m=0,\cdots, n $.  An arbitrary element of $\mathfrak{z}_n$  can be written as a linear combination of an operator satisfying this constraint  with operators   $\{i\Pi_m: m=0,\cdots n\}$. In other words, by adding $\{i\Pi_m: m=0,\cdots n\}$ to  $\mathfrak{alg}\{ i Z_r , i R_{r,s}: r\neq s\in\{1,\cdots n\}\}$, we obtain   all diagonal skew-Hermitian operators, i.e.,   
\be
\mathfrak{z}_n \subset  \mathfrak{alg}\Big\{i\Pi_m, i Z_r , i R_{r,s}: r\neq s\in\{1,\cdots n\}, m\in \{0,\cdots, n\}\Big\}\ .
\ee
Next, recall that, according to theorem \ref{Thm5},  diagonal skew-Hermitian operators together with operators $i R_{j,j+1}=i (X_j X_{j+1}+Y_j Y_{j+1}):\  j=1,\cdots, n-1$ generate all  U(1)-invariant skew-Hermitian  operators, that is
\be
\mathfrak{h}_n=\mathfrak{alg}(\mathfrak{z}_n \cup \{i R_{j,j+1}: j=1,\cdots n-1\})\ .
\ee
Combining the above two equations, we conclude that
\be
\mathfrak{h}_n= \mathfrak{alg}\big\{i\Pi_m,   i Z_r , i R_{r,s}: r\neq s\in\{1,\cdots n\}\ , m\in \{0,\cdots, n\} \big\}\ .
\ee
Finally, we consider the commutator sub-algebra of both sides. In the right-hand side, because  $i\Pi_m$ commutes with all elements of $\mathfrak{h}_n$, i.e., is in the center of $\mathfrak{h}_n$, it disappears in the commutator sub-algebra. That is the commutator subs-algebra of the right-hand side is a sub-algebra of $\mathfrak{alg}\{i Z_r , i R_{r,s}: r\neq s\in\{1,\cdots n\}\}$. We conclude that  
\be
[\mathfrak{h}_n,\mathfrak{h}_n] \subset  \mathfrak{alg}\{i Z_r , i R_{r,s}: r\neq s\in\{1,\cdots n\}\}\  .
\ee
Together with Eq.(\ref{lab2}), this implies Eq.(\ref{haka}) and proves lemma \ref{lem05}. This completes the proof of theorem \ref{Thm21}.

\color{black}

\newpage

 \section{Supplementary Note 4: Restrictions  on the realizable unitaries}\label{Sec:Method:lbody}

So far, we have focused on the constraints imposed by the locality and symmetry on the set of realizable Hamiltonians. On the other hand, for certain  applications, it is useful to also characterize the constraints on the realizable  unitaries.  Let $V$ be the unitary generated  by the time evolution of the system under $G$-invariant Hamiltonian $H(t)$, from time $t=0$ to $T$. This unitary is given by the time-ordered integral 
\be
V=\mathcal{T}\big\{\exp\big( {-i \int_{0}^T H(t) dt}\big) \big\}=\lim_{L\rightarrow \infty}\prod_{j=1}^L \exp({-\frac{i T}{L} H(\frac{T j}{L}) })\ .
\ee

The symmetry  implies  $H(t)$ and $V$ decompose
as $\bigoplus_{\mu\in\text{Irreps}_G(n)} H_\mu(t)$ and
$\bigoplus_{\mu\in\text{Irreps}_G(n)} V_{\mu}$, respectively. Here,  $H_\mu(t)$ and $V_\mu$ act on the subspace corresponding to irrep $\mu$,  and  
\be
V_\mu=\lim_{L\rightarrow \infty}\prod_{j=1}^L \exp({-\frac{i T}{L} H_\mu(\frac{T j}{L}) })\ .
\ee 
Using the fact that for any pair of Hermitian operators $A_1$ and $A_2$, $\text{det}(\exp({i A_1})\exp({i A_2}))=\exp({i\Tr(A_1)+i\Tr(A_2)})$, this implies   
\be
\text{det}(V_\mu)=\exp({-i\int^T_0  \Tr(H(t)\Pi_\mu) dt})\ .
\ee
 We conclude  that  for any set of integers $\{c(\mu)\}$, it holds that 
\be\label{ary2}
\hspace{-2mm}\Phi\equiv\hspace{-4mm}\sum_{\mu\in\text{Irreps}_G(n)}\hspace{-2mm} \hspace{-3mm}  c(\mu)\ \text{arg}(\text{det}(V_\mu))\hspace{-1mm} =-\hspace{-1mm}\int^T_0 \hspace{-1mm}dt\ \Tr(H(t)C)\ \ \ \ \ \ \ \ \ :\text{mod }2\pi\ ,  
\ee
where $\text{arg}(\text{det}(V_\mu))\in(-\pi, \pi]$ is the phase of $\text{det}(V_\mu)$, and  $C=\sum_{\mu} c(\mu)\Pi_\mu$. This, in particular, means that if Hamiltonians $H_1(t)$ and $H_2(t)$ generate the same unitary $V$, then 
\be
\int_0^T dt\  \Tr(H_1(t)C)=\int_0^T dt\  \Tr(H_2(t)C)\ \ \ \ \ \ \ \ \   :\text{mod } 2\pi\ .
\ee  

If  operator $C$ is traceless then $\Phi$ remains invariant under a global phase transformation $V\hspace{-1mm}\rightarrow\hspace{0mm} e^{i\alpha} V$. This can be seen, e.g., using  the 
 fact that if Hamiltonian $H(t): 0\le t\le T$ realizes $V$, then Hamiltonian  $H(t)-\frac{\alpha}{T} I: 0\le t\le T$ realizes $e^{i\alpha} V$. For traceless operator $C$, $\Tr(H(t) C)$ remains invariant under this transformation. Then,  the second equality in Eq.(\ref{ary2}) implies that $\Phi$ remains invariant.

Finally, note that Eq.(\ref{ary2}) can be rewritten as 
\be\label{lasteq}
\Phi=-\int_0^T dt\  \langle \zeta | \chi_{H(t)}\rangle\ \ \ \ :\text{mod }2\pi\ ,  
\ee
where $| \chi_{H(t)}\rangle=\sum_\mu \Tr(H(t)\Pi_\mu)|\mu\rangle$ is the charge vector of $H(t)$ and $|\zeta\rangle=\sum_\mu c(\mu) |\mu\rangle$. Therefore, by measuring $\Phi$,  we can obtain information about the integral of the charge vector $|\chi_{H(t)}\rangle$.

In summary, for any traceless operator $C=\sum_\mu c(\mu)  \Pi_\mu$ with integer eigenvalues $\{c(\mu)\}$, the phase $\Phi$ defined in Eq.(\ref{ary2}) is an  observable quantity.  According to Eq.(\ref{lasteq}), by measuring this phase we can obtain information about the charge vector of the Hamiltonian that realizes unitary $V$, which in turn contains information about the locality of this Hamiltonian.     
 Next, we focus on the example of U(1) symmetry. \\

\subsection{$l$-body phases for U(1)-invariant  unitaries}

Next, we consider the special case of  U(1) symmetry with qubit systems. In this case,  choosing operator $C$ in  Eq.(\ref{ary2}) to be the operator $C_l=\sum_{m} c_l(m) \Pi_m$ defined in Eq.(\ref{defCl}), we obtain the notion of $l$-body phase,  
\be\label{ary4}
 \Phi_l\equiv    \sum_{m=0}^nc_l(m) \theta_m=-\hspace{-1mm}\int_0^T\hspace{-2mm} dt  \hspace{-2mm}\sum_{\substack{{\bf{b}}: w({\bf{b}})=l}} \hspace{-
 2mm} \Tr(H(t) {\bf{Z}}^{\bf{b}})\ \ \  \ \text{: mod} \ 2\pi\ ,
\ee
defined in Eq.(4) of the main paper. Recall that here  $\theta_m=\text{arg}(\text{det}(V_m))\in(-\pi,\pi]$ is the phase of the
 determinant of $V_m$, and $c_l(m)=\sum_{s=0}^m (-1)^s  {{m}\choose{s}} {{n-m}\choose{l-s}}$. 
   Because for $l\ge 1$, operator $C_l$ is traceless,  the $l$-body phase $\Phi_l$ is physically observable for $l\ge 1$.  Note that $\{c_l(m)\}$, which are eigenvalues of $C_l$, are all  integer, which is crucial for the validity of Eq.(\ref{ary2}).      
 The table below shows example of coefficients $c_l(m)$ for a system with $n=5$ qubits.

 \begin{table}[htp]
\begin{center}
\begin{tabular}{|c|c|c|c|c|c|c|}
\cline{2-7}
\multicolumn{1}{c|}{}& \multicolumn{6}{|c|}{Charge Sector} 
\\ \hline
   &$m=0$\ & \ $m=1$\ &\  $m=2$\ &\  $m=3$\ &\  $m=4$\ &\  $m=5$\ \  \\
\hline
$l=0$ \text{body}  &1 & 1 & 1& 1& 1& 1\\
\hline
$l=1$ \text{body}  &5 & 3 & 1& -1& -3& -5\\
\hline
$l=2$ \text{body}   &10 & 2 & -2& -2& 2& 10\\
\hline
$l=3$ \text{body}   &10 & -2 & -2& 2& 2& -10\\ \hline
$l=4$ \text{body}&5 & -3& 1& 1& -3 & 5\\ \hline
$l=5$ \text{body}&1 & -1& 1& -1& 1& -1\\
\hline
\end{tabular}
\end{center}
\caption{Coefficients $c_l(m)$, i.e., the eigenvalues of operators $C_l$, for $n=5$ qubits.}
\label{default}
\end{table}%

As an example, consider the unitary $\exp(i\alpha \textbf{Z}^{\textbf{b}})$. Then,  using the identity $\Tr(C_l \textbf{Z}^{\textbf{b}})= 2^n \delta_{l, w(\textbf{b})}$, and applying the second equality in  Eq.(\ref{ary4}), we see that the $l$-body phase of this unitary  is  equal to $\Phi_l=2^n \alpha\times \delta_{l, w(\textbf{b})}$.  In particular, for the unitary $e^{i\alpha} I$,  all the $l$-body phases are zero, except $l=0$.


A useful property of $l$-body phases, which follows immediately from the definition  in Eq.(\ref{ary4}) is their additivity:  The $l$-body phase of $V_2V_1$ is the sum of the $l$-body phases of $V_1$ and $V_2$, mod $2\pi$. Furthermore, the $l$-body phase of $V^\dag$ is equal to minus $l$-body phase of $V$.    More abstractly, we can think of Eq.(\ref{ary4}) as a homomorphism from the group  of U(1)-invariant unitaries on $n$ qubits to the group $\text{U}(1)^{n+1}$.





\subsection{A Characterization of U(1)-invariant unitaries on qubit systems in terms of $l$-body phases}

 Next, we present a general characterization of U(1)-invariant unitaries in terms of their $l$-body phases (See theorem \ref{Thm:body}). First, 
 recall that, as discussed in corollary \ref{cor81},   any U(1)-invariant unitary     
 $W=\bigoplus_{m=0}^n W_m$ that satisfies the  constraint $\text{det}(W_m)=1: m=0,\cdots, n$ is contained in the group 
 \be\label{def:G2}
G_2\equiv \big\langle e^{i\theta (X_rX_s+Y_rY_s)}, e^{i\theta Z_r}:\theta\in[0,2\pi),r\neq s\in\{1... n\}\big\rangle\ .
\ee 
 For a general  U(1)-invariant unitary $V$ consider  the unitary 
 \be\label{Eq:U}
 U=\exp(-i\sum_{m=0}^n  \frac{\theta_m}{\Tr(\Pi_m)} \Pi_m)\ ,
 \ee 
where
\be\label{Eq:theta}
\theta_m=\text{arg}(\text{det}(V_m))=-\int_{t=0}^T\hspace{-2mm} dt\  \Tr(\Pi_m H(t))\ \ :\ (\text{mod} 2\pi)\ ,
\ee
Hermitian operator $H(t)$ is any $U(1)$-invariant Hamiltonian that realizes $V$, such that  $V=\mathcal{T}\big\{\exp\big( {-i \int_{0}^T H(t) dt}\big) \big\}$, and the second equality in Eq.(\ref{Eq:theta}) follows from Eq.(\ref{ary2}).   Then, for the unitary $V_2=UV$ the determinant in each charge sector is one, which means  $V_2\in G_2$.  Note that each $\theta_m$ is only defined mod $2\pi$, and in the definition of $U$ in Eq.(\ref{Eq:U}), we can use any set of $\{\theta_m\}$ that satisfies Eq.(\ref{Eq:theta}). In the following, we determine $\{\theta_m\}$ in terms of $l$-body phases  of unitary $V$.
For $l=0,\cdots, n$, define 
 \be\label{Def:D_l}
  D_l\equiv\frac{C_l}{\Tr(C^2_l)} =\frac{1}{2^n {{n}\choose{l}}}  \sum_{\substack{{\bf{b}}:  w({\bf{b}})=l}\ }  {\bf{Z}}^{\bf{b}}=  \sum_{m=0}^n \frac{c_l(m)}{2^n {{n}\choose{l}}}\ \Pi_m \ .
  \ee
 Note that the three sets of operators $\{C_l\}$, $\{D_l\}$ and $\{\Pi_m\}$ all span the same $(n+1)$-dimensional space and $ \Tr(C_{l} D_{l'})=\delta_{l,l'}$, which  implies 
 \be
 \Pi_m=\sum_{l=0}^n\ \Tr(D_l\Pi_m) C_l\ .
 \ee
  Putting this in Eq.(\ref{Eq:theta}), we find 
\bes
\begin{align}
\theta_m&=\sum_{l=0}^n\ \Tr(D_l\Pi_m) \times  \Big[-\int_{t=0}^T\hspace{-2mm} dt\   \Tr(C_l H(t))\Big]= \sum_{l=0}^n\ \Tr(D_l\Pi_m) \times  \big[\Phi_l+2\pi r_l\big]\  \ \ :\ (\text{mod} 2\pi)\ ,
\label{transform54}
\end{align}
\ees
for an unspecified  set of integers $\{r_l\}$. Here, to get the second equality we have applied Eq.(\ref{ary4}) (the unknown integers $\{r_l\}$ appear because Eq.(\ref{ary4})  holds mod $2\pi$).  Putting this into Eq.(\ref{Eq:U}), and using the fact that $D_l=\sum_m \Tr(\Pi_m D_l)\Pi_m/\Tr(\Pi_m)$, we find   
\be
U=\prod_{l=0}^n \exp(i(\Phi_l+2\pi r_l)D_l)\ .
\ee
 Note that $\prod_{l=0}^n \exp\big(i 2\pi r_l D_l \big)$ is an element of the group
\be\label{G0}
G_0\equiv \big\langle \exp(i 2\pi D_l ): l=0, \cdots, n \big\rangle\ ,
\ee
generated by unitaries $\exp(i 2\pi D_l )$ for $l=0,\cdots , n$.  Because operators $\{D_l\}$ commute with each other and their eigenvalues are  rational numbers, this group is finite.     Putting everything together, we arrive at  
\begin{theorem}\label{Thm:body}
Any U(1)-invariant unitary transformation $V$  on $n$ qubits has a decomposition    as 
\be\label{Eq:671}
V=  V_0 \big[\prod_{l=0}^n \exp\big(i\Phi_l D_l \big)\big] V_2 \ ,
\ee
where $\Phi_l$ is the $l$-body phase of $V$ defined in Eq.(\ref{ary4}), $D_l$ is  the Hermitian operator defined in Eq.(\ref{Def:D_l}),  $V_0$ is a unitary in the finite group $G_0$ defined in Eq.(\ref{G0}),  and $V_2$ is in the group $G_2$ defined in Eq.(\ref{def:G2}), i.e., can be realized using Hamiltonians $XX+YY$ and $Z$. 
\end{theorem}
According to Eq.(\ref{Def:D_l}), operator $D_l$ can be written as a linear combination of commuting $l$-local  operators and therefore for any $\theta$ the unitary $\exp(i\theta D_l)$ can be implemented  using $l$-local U(1)-invariant unitaries.  We conclude that  if for unitary V, $l$-body phase $\Phi_l=0$ for all $l>k\ge 2$, then $V$ is realizable using $k$-local U(1)-invariant unitaries, up to a unitary $V_0$ in the fixed finite group $G_0$ defined in Eq.(\ref{G0}).  It is also worth noting that in decomposition in Eq.(\ref{Eq:671}), we can replace each term $\exp\big(i\Phi_l D_l \big)$ with $\exp(\Phi_l Z^{\otimes l}\otimes I^{\otimes (n-l)}/2^n)$. That is, there exist $V'_0\in G_0$ and $V'_2\in G_2$ such that
  \be
 V= V'_0 \big[\prod_{l=0}^n \exp\big(i\frac{\Phi_l}{2^n} Z^{\otimes l}\otimes I^{\otimes (n-l)}\big)\big]V'_2   \ .
 \ee
To see this note that by multiplying $V$ in $\prod_{l=0}^n \exp\big(-i\frac{\Phi_l}{2^n} Z^{\otimes l}\otimes I^{\otimes (n-l)}\big)$, we obtain a unitary whose $l$-body phases vanish and, therefore by the above theorem, it can be written as $V'_2V'_0$ for $V'_0\in G_0$ and $V'_2\in G_2$.


To understand the appearance of the finite group $G_0$ in theorem \ref{Thm:body} and the unspecified integers $\{r_l\}$ in Eq.(\ref{transform54}), it is useful to recall the geometric interpretation of the $l$-body phases $\{\Phi_l\}$, as a coordinate system for the $(n+1)$-torus defined by phases $\{\theta_m\}$. Definition  $\Phi_l=\sum_m c_l(m) \theta_m$ together with Eq.(\ref{transform54}), allow us to go from one coordinate system to the other. However, because the coordinate system defined by $\{\Phi_l\}$ is degenerate, this relation is not 1-to-1. The finite group $G_0$ describes all possible points on the $(n+1)-$torus that have the same coordinates relative to  $\{\Phi_l\}$.
     
\subsection{U(1)-invariant Projective Measurements}

Finally, we present a useful corollary of the above theorem, namely the fact that, in the case of group U(1), locality does not restrict realizable rank-1 projective measurements, i.e.,
 
 \begin{corollary}\label{cor:Meas}
Let $\{|\phi_v\rangle: v=1,\cdots, 2^n\}$ be an orthonormal basis for the Hilbert space of $n$ qubits, with the property that each element of the basis  is an eigenvector of $\sum_{j=1}^n Z_j$, i.e., is invariant under U(1) symmetry. Then, the projective measurement in this basis can be realized by performing a sequence of 2-local unitaries $\exp(i\theta (XX+YY))$ and single-qubit unitary $\exp(i\theta Z)$, for $\theta\in[0,2\pi)$, followed by the measurement of all qubits in the z basis.  
\end{corollary}
To see this, first note that any pair of orthonormal bases satisfying the property described in the corollary can be converted to each other by a U(1)-invariant unitary. In particular, because the computational basis satisfies this property, there is a U(1)-invariant unitary $V$, such that $V|\phi_v\rangle=|\textbf{b}(v)\rangle$, where 
$|\textbf{b}(v)\rangle$ is an element of the computational basis.   
 Therefore, if we first perform unitary $V$ and then measure all qubits in $\{|0\rangle,|1\rangle\}$ basis,  the overall effect is equivalent to measuring  in 
$\{|\phi_v\rangle\}$ basis. Now suppose instead of unitary $V$, we perform unitary $V_2=UV $, for $U$ defined in Eq.(\ref{Eq:U}). Then,  
because $U$ is diagonal in the computational basis,  and the final measurement is also performed in this basis,  
the probability of outcomes    
 do not change. Since $V_2\in G_2$ it can be realized using Hamiltonians $XX+YY$ and local $Z$. This proves the corollary.

\color{black}

\newpage

 \section{Supplementary Note 5: Local symmetric process tomography}\label{Sec:Meth:Meas}

In this section we present a scheme for characterizing an unknown U(1)-invariant unitary and measuring its $l$-body phases. The main feature of this scheme, which makes it different from the standard process tomography methods, is the fact that it only requires initial states, 2-local unitaries, and single-qubit measurements that all respect the symmetry.  

Consider an unknown U(1)-invariant unitary $V$ on $n$ qubits labeled as $j=1,\cdots, n$. As depicted in Fig.\ref{Fig:gha:App}, in this scheme  all the $n$  qubits   are initially prepared in states $|z_j\rangle$, where $z_j\in\{0,1\}$,
except one qubit labeled as $r\in\{1,\cdots, n\}$. Qubit $r$, on the other hand, is entangled with an ancillary qubit, in the joint state $(|0\rangle |1\rangle+|1\rangle |0\rangle)/\sqrt{2}$. Then, the joint initial state of  $n$ qubits and the ancilla can be written as 
\be
\frac{|\textbf{z}\rangle\otimes |0\rangle+|\textbf{z}_-\rangle \otimes |1\rangle}{\sqrt{2}}=\frac{|\textbf{z}\rangle\otimes |0\rangle+X_r|\textbf{z}\rangle \otimes |1\rangle}{\sqrt{2}}\ ,
\ee
where we use the convention that $z_r=1$, i.e., 
\be
|\textbf{z}\rangle=|z_1\cdots, z_{r-1}, 1 , z_{r+1},\cdots, z_n\rangle\ .
\ee
After applying the unitary $V$, this state transforms to 
\be\label{state-befor}
\frac{V|\textbf{z}\rangle\otimes |0\rangle+V|\textbf{z}_-\rangle \otimes |1\rangle}{\sqrt{2}}\ .
\ee
Now suppose we perform a measurement in an orthonormal basis that includes the state 
\be\label{basis-meas}
\frac{|\textbf{z}'\rangle\otimes |0\rangle+ e^{i\theta} |\textbf{z}'_- \rangle\otimes |1\rangle}{\sqrt{2}}\ ,
\ee
where $\textbf{z}', \textbf{z}_-'\in\{0,1\}^n$ and they satisfy the property that $w(\textbf{z})= w(\textbf{z}')=w(\textbf{z}'_-)+1$, where $w(\textbf{z}')=\sum_{j=1}^n z'_j$ is the Hamming weight of bit string $\textbf{z}'$.
  This condition means that state in Eq.(\ref{basis-meas}) is restricted to a sector with a definite total charge, which is equal to the total charge of state in Eq.(\ref{state-befor}). Therefore state in Eq.(\ref{basis-meas}) can be an element of an orthonormal basis satisfying the condition in corollary \ref{cor:Meas}. Then, the corollary implies that  the  measurement in this basis can be performed  by applying  Hamiltonians $XX+YY$ and local $Z$ on qubits, and then measuring them in the computational basis.  

Performing this measurement on state in Eq.(\ref{state-befor}), we get the outcome corresponding to state in Eq.(\ref{basis-meas}) with probability
\be
\frac{1}{4}\Big|  \langle\textbf{z}'|V|\textbf{z}\rangle+e^{-i\theta} \langle\textbf{z}'_-|V|\textbf{z}_-\rangle \Big|^2\ .
\ee
Performing this measurement sufficiently many times for different values of $\theta\in(\pi,\pi]$, we can estimate the cross term
\be\label{cross}
\langle\textbf{z}'_-|V|\textbf{z}_-\rangle^\ast \times  \langle\textbf{z}'|V|\textbf{z}\rangle\ .
\ee
Note that $|\textbf{z}'_-\rangle$  and $|\textbf{z}_-\rangle$ both live in the same charge sector. If we know the value of the matrix element $\langle\textbf{z}'_-|V|\textbf{z}_-\rangle$, and if this value is non-zero, then by estimating the quantity in Eq.(\ref{cross}), we can infer the value of $ \langle\textbf{z}'|V|\textbf{z}\rangle$.  Note that U(1)-invariance together with the unitarity of $V$ guarantees that 
for any $|\textbf{z}_-\rangle$, there is a state   $|\textbf{z}'_-\rangle$ in the computational basis  with the property that  $\langle\textbf{z}'_-|V|\textbf{z}_-\rangle\neq 0$.  Therefore, applying this technique recursively, we can determine $V$ in any arbitrary sector. The sectors with the Hamming weights  $0$ and $n$ are 1-D. Then, to remove the global phase freedom in characterizing unitary $V$, we can choose $\langle0|^{\otimes n}V|0\rangle^{\otimes n}=1$. 


\color{black}

\subsection{Characterizing 3-qubit U(1)-invariant unitaries}\label{App:meas}

  \begin{figure}{\includegraphics[scale=.31]{Fig65.jpg}}\caption{\textbf{A scheme for characterizing an unknown U(1)-invariant unitary .}  
 }\label{Fig:gha:App}
\end{figure}

In the following, we demonstrate this scheme for the example of $n=3$ qubits. 
In this special case, which is also considered in Fig.(\ref{Fig:gha:App}), the unitary performed before the single-qubit measurements is the single-qibit unitary $\exp(i\alpha Z_\text{anc})$ on the ancilla, followed by a single two-qubit unitary $\exp(\frac{i\pi}{8} (X_{j}X_{\text{anc}}+Y_{j}Y_{\text{anc}}))$ for $j=1, 2, 3$.


Let $V$ be an  unknown U(1)-invariant unitary on 3 qubits. The symmetry implies that $|\langle 0|^{\otimes 3}V|0\rangle^{\otimes 3}|=1$, and to fix the global phase of unitary, we can choose 
\be\label{sectorzero}
\langle 000|V|000\rangle=1\ .
\ee
At the input of the above circuit, qubit $r\in\{1,2,3\}$ is entangled with the ancillary qubit, in the joint state $(|01\rangle+|10\rangle)/\sqrt{2}$. 
The other two qubits are  prepared in states $|0\rangle$ and $|1\rangle$. Therefore, we represent the joint state at the input as 
\be
\frac{|\textbf{z}\rangle\otimes |0\rangle_\text{anc}+X_r|\textbf{z}\rangle \otimes |1\rangle_\text{anc}}{\sqrt{2}}\  : \ \ z_r=1\ ,
\ee
where $\textbf{z}=z_1z_2 z_3$, and we use the convention that $z_r=1$ (In the example in Fig.\ref{Fig:gha:App}, we have $r=3$). After applying unitary $V$, we obtain state 
\be\label{stateV}
\frac{V|\textbf{z}\rangle\otimes |0\rangle_\text{anc}+VX_r|\textbf{z}\rangle \otimes |1\rangle_\text{anc}}{\sqrt{2}}\  
: \ \ z_r=1
\ee
Then, we apply the unitary $\exp({i\alpha Z_\text{anc}})$ on the ancillary qubit,  the unitary
 $\exp(\frac{i\pi}{4} R_{s,\text{anc}})$ 
 on qubit $s\in\{1, 2, 3\}$ and the ancillary qubit, and finally measure all qubits in the computational basis, and obtain the corresponding outcomes 
 $\textbf{z}'=z'_1z'_2z'_3\in\{0,1\}^3$, and 
 $ b\in\{0,1\}$.  
The probability of these outcomes  are determined by the overlap of state in Eq.(\ref{stateV}) with state 
 \begin{align}\label{meas09}
\exp({-i{\alpha} Z_\text{anc}})\exp(-\frac{i\pi}{4} R_{s,\text{anc}}) \big(|z'_1z'_2 z'_3\rangle\otimes |b\rangle_{\text{anc}}\big)\ \ \ , \ s\in\{1,2,3\}.
\end{align}
The two cases where $b=0$ and $z'_s=1$, or  $b=1$ and $z'_s=0$, 
are of special interest, because in these cases the probability of outcomes depend non-trivially on the matrix elements of $V$ in two different sectors (In other words, we can observe interference between two branches of wavefunction that goes through two different sectors). 

To simplify the analysis and notation, in the following we focus on the case of  $z'_s=1$ and $b=0$. In this case state in Eq.(\ref{meas09}) becomes  
\begin{align}
\exp({-i{\alpha} Z_\text{anc}})\exp(-\frac{i\pi}{4} R_{s,\text{anc}}) \big(|z'_1z'_2 z'_3\rangle\otimes |0\rangle_{\text{anc}}\big)=\frac{ e^{-i\alpha} |\textbf{z}'\rangle\otimes |0\rangle_{\text{anc}} - i e^{i\alpha} X_s |\textbf{z}'\rangle\otimes |1\rangle_{\text{anc}}}{\sqrt{2}}\  \ \ :\ z'_s=1\ .
\end{align}
Therefore, if $z'_s=1$ and $b=0$, the probability of outcome  $\textbf{z}'=z'_1z'_2z'_3$ is  
\be
\frac{1}{4}\Big|\langle\textbf{z}'|V|\textbf{z}\rangle+i e^{-i 2\alpha} \langle\textbf{z}'|X_s V X_r |\textbf{z}\rangle \Big|^2\ \ \ \ \  \ :\ z'_s, z_r=1 \ .
\ee 
By measuring this quantity for different values of $\alpha$, we can determine \be\label{est91}
\langle\textbf{z}|X_s V X_r |\textbf{z}\rangle^\ast \times \langle\textbf{z}'| V |\textbf{z}\rangle\ \ \ : \ z'_s, z_r=1 \ .
\ee 
Then, if we also know the value of $\langle\textbf{z}'|X_s V X_r |\textbf{z}\rangle$ and if this value is non-zero,  we can infer the matrix element $\langle\textbf{z}'| V |\textbf{z}\rangle$. As we see below, by applying this technique recursively, we can characterize the action of $V$ in all sectors. 

First, consider the sector corresponding to Hamming weight $m=1$, spanned by three states $\{001,010, 100\}$. Using the convention in Eq.(\ref{sectorzero}) for this sector, the coefficient  $\langle\textbf{z}'|X_s V X_r |\textbf{z}\rangle^\ast =\langle000|V|000\rangle=1$, and therefore, by estimating the quantity in Eq.(\ref{est91}), we can determine all matrix elements $\langle\textbf{z}'| V |\textbf{z}\rangle$ for $\textbf{z}', \textbf{z}\in \{001,010, 100\}$.

Next, in the sector with $m=2$, the outcome probabilities determine the value of 
\be
\langle\textbf{x}|X_s V  X_r|\textbf{z}\rangle^\ast \times \langle\textbf{x}| V |\textbf{z}\rangle\ \ \ : \ z'_s, z_r=1 \ .
\ee 
for $\textbf{z}' ,\textbf{z}\in \{011,101,110\}$. For instance, in the case of $\textbf{z}'=\textbf{z}=011$, we can choose $r, s\in\{2,3\}$. Then, by estimating the quantity in Eq.(\ref{est91}), we find the value of  $c^\ast \times  \langle{011}| V |{011}\rangle$, where  
\begin{align}\label{cccy}
c=(\langle{0}|\otimes\langle\textbf{y}|) V (|0\rangle\otimes |{\textbf{y}'}\rangle) \ \ \ :\  \textbf{y}, \textbf{y}'\in\{01,10\} \ .
\end{align}
These matrix elements are determined in the previous step, when we characterized the sector with $m=1$. Therefore, if any of these four matrix elements are non-zero, then we can determine  the matrix element $  \langle{011}| V |{011}\rangle$. We claim that, at least, one of the matrix elements in Eq.(\ref{cccy}) is non-zero . This follows from U(1) symmetry together with the unitarity of $V$: The sector with $m=1$ is 3 dimensional and is spanned  by $\{|100\rangle,|010\rangle, |001\rangle\}$ and unitary $V$ acts as a $3\times 3$ unitary in this subspace. If all 4 matrix elements in Eq.(\ref{cccy}) vanish, then this $3\times 3$  matrix   has a $2\times 2$ sub-matrix which is equal to zero. But, this contradicts unitarity (If a $3\times 3$ matrix has a zero $2\times 2$ submatrix, then at, least, two different columns of the matrix are linearly dependent, which contradicts with  unitarity). We conclude that, at least, one of the four matrix elements in Eq.(\ref{cccy}) are non-zero. This allows us to infer the value of $ \langle{011}| V |{011}\rangle$ , using the estimated value of Eq.(\ref{est91}). Similarly, we can determine $\langle\textbf{z}'| V |\textbf{z}\rangle$ for any pairs of  $\textbf{z}' ,\textbf{z}\in \{011,101,110\}$.

Finally, for $\textbf{z}'=\textbf{z}=111$, we can estimate
\be
\langle 111|X_s V X_r|111\rangle^\ast \times \langle 111|V|111\rangle\ ,
\ee 
for $r,s\in\{1,2,3\}$. Again, because $V$ acts unitarily in the $3$-dimensional subspace $\{011,101,110\}$, coefficients $
\langle 111|X_s V X_r|111\rangle$ cannot all be zero, and therefore in this way we can determine  $ \langle 111|V|111\rangle$. Hence, we can completely characterize a U(1)-invariant unitary $V$.

\newpage


 \section{Supplementary Note 6: Circumventing the no-go theorem--An intuitive explanation via Jordan-Winger Transform}
\vspace{-3mm}


As we discussed in the paper and formally prove in the Supplementary Note 7,  in the case of group U(1)  the no-go theorem can be circumvented using ancillary qubits. Here, we further discuss the example presented in Fig. 4 
of the main paper, and present a nice interpretation of this scheme.


Recall that in Fig. 4 of the paper we consider a system with $n$ qubits together with a pair of ancillary qubits a and b that form a closed loop.   
Considering the nested commutators 
of $R=(XX+YY)/2$ on pairs of neighboring qubits, we find
\begin{align}
\hspace{-2mm}K\equiv \hspace{-1mm}\pm \Big[R_{\text{b},{n}}\ , \cdots\hspace{-1mm}  \big[R_{3,2}\ ,\  [R_{2,1}, R_{1,\text{a}}]\big]\cdots\hspace{-1mm}\Big]=A_{\text{a}, \text{b}}\otimes Z^{\otimes n}\ ,   
 \end{align}
where  $A_{\text{a}, \text{b}}= R_{\text{a}, \text{b}}$ for even $n$  and  $A_{\text{a}, \text{b}}=T_{\text{a}, \text{b}}\equiv X_\text{a}Y_\text{b}-Y_\text{a}X_\text{b}$ for odd $n$. The above nested commutator means that by turning on and off interactions $XX+YY$ between all nearest-neighbor qubits in the loop, except between a and b, we can realize Hamiltonian $K=A_{\text{a}, \text{b}}\otimes Z^{\otimes n}$. In the following, to simplify the notation  we assume $n$ is even, which means $K=R_{\text{a}, \text{b}}\otimes Z^{\otimes n}$  (The case of odd $n$ is similar). 

 Now suppose the initial joint state of the $n$ qubits in the system and the ancillary qubits a and b is $|\psi\rangle|1\rangle_\text{a}  |0\rangle_\text{b} $. Then, applying Hamiltonian $K$ for a sufficiently short time interval $\delta t$, we obtain state
\be
e^{-i K \delta t} (|\psi\rangle |1\rangle_\text{a} |0\rangle_\text{b})=|\psi\rangle |1\rangle_\text{a}|0\rangle_\text{b} -  i \delta t  Z^{\otimes n}|\psi\rangle |0\rangle_\text{a}  |1\rangle_\text{b}\ ,
\ee
where we have neglected terms of order $\delta t^2$ and higher. 
Roughly speaking, the second term in the right-hand side corresponds to the event  in which  the charge which was initially located at qubit a moves to qubit b through the chain, and during this process obtains a phase $\pm 1$, depending on the value of $Z^{\otimes n}$.  Next,  by applying the unitary $\exp{(i\pi R_{\text{a}, \text{b}}/4)}\exp{(i\pi Z_\text{b}/4)}$ on a and b
we close the loop and (up to a global phase) obtain state
\begin{align}
&\frac{1}{\sqrt{2}}e^{ -iZ^{\otimes n}\delta t}|\psi\rangle|1\rangle_\text{a} |0\rangle_\text{b}+\frac{i}{\sqrt{2}}e^{+iZ^{\otimes n}\delta t}|\psi\rangle|0\rangle_\text{a} |1\rangle_\text{b}\ ,
\end{align}   
where again we have ignored terms of order $\delta t^2$ and higher.    Unitary $\exp{(i\pi R_{\text{a}, \text{b}}/4)}\exp{(i\pi Z_\text{b}/4)}$  allows the charge to go back and forth  directly between a and b.  
 Finally, by measuring one of the ancillary qubits in $\{|0\rangle, |1\rangle\}$ basis we can determine  the final location of the charge. Depending on the outcome, the state of the main system collapses to one of states $\exp({\pm i \delta t\ Z^{\otimes n}})|\psi\rangle+\mathcal{O}(\delta t^2)$. Repeating these steps we can approximately realize the unitary $\exp({i\theta Z^{\otimes n})}$, where the value of $\theta$ is determined by  a random walk.  Hence, in principle, by repeating these steps with sufficiently small $\delta t$, we can realize any desired unitary $\exp(i \phi  Z^{\otimes n})$ with $\phi\in(-\pi,\pi]$,  with arbitrary accuracy and probability of success approaching one. As we further discuss in Supplementary Note 7, it is indeed  possible to implement this scheme deterministically, without using measurements.  

Finally, we discuss an interesting interpretation of this scheme, based on the fermionic description of this system, obtained via the Jordan-Wigner transform  \cite{jordan1928pauli, fradkin1989jordan, nielsen2005fermionic}. Suppose we  label  qubits as $j=0,\cdots, n+1$, where $j=0 $ and $j=n+1$, corresponds to the ancillary qubits a and b, respectively. Then, this correspondence can be defined by 
\be\label{Jordan}
\frac{X_j+i Y_j}{2}\ \longleftrightarrow\  c_j^\dag\ \prod_{l=j+1}^{n+1}\hspace{-1mm}  (-1)^{c_l^\dag c_l } \ \ :j=0,\cdots, n+1\ ,\\
\ee
where $c_j^\dag: j=0,\cdots , n+1$ are the fermionic creation operators, satisfying the standard anti-commutation relations $\{c^\dag_j,c_k\}=\delta_{j,k}$ and  $\{c^\dag_j,c^\dag_k\}=0$. This, in particular, implies $Z_j=2c_j^\dag c_j-I=-(-1)^{c_j^\dag c_j}$,  where $I$ is the identity operator. This means that  the charge of a site in the qubit picture is determined by the occupation number of the corresponding site in the fermionic picture. Furthermore, this   implies the correspondence
\begin{align}
R_{\text{a,b}}\ &\longleftrightarrow \ \ (c_\text{a}^\dag c_\text{b}+c_\text{b}^\dag c_\text{a})   (-1)^{\sum_{l=1}^n  c_l^\dag c_l } \ ,\nonumber\\ 
K=R_{\text{a,b}}\otimes Z^{\otimes n} \ &\longleftrightarrow \ \    (c_\text{a}^\dag c_\text{b}+c_\text{b}^\dag c_\text{a}) \ \label{Jordan2} ,
\end{align}
where we have assumed $n$ is even. 
Note that $(-1)^{\sum_{l=1}^n c_l^\dag c_l }$ is $\pm 1$ depending on the parity of the total number of  ``particles"  in sites 1 to $n$. From this point of view, the difference  between $R_{\text{a,b}}$ and $K=R_{\text{a,b}}\otimes Z^{\otimes n}$, can be understood as a consequence of the anti-symmetry of the fermionic wavefunction: As we move a fermion from a to b through the chain, the state obtains $\pm 1$ sign depending on whether each site is occupied or not.  Hence, the overall phase depends on the total number of fermions in the chain (In the qubit picture this corresponds to the  observable $Z^{\otimes n}$).  On the other hand, if the particle moves  directly between a and b,  it does not obtain this phase.

From this point of view, the above scheme essentially uses interference between two branches of the wavefunction,  to determine the parity of the total charge in the system; one branch is going directly between a and b and the other is going through the chain of $n$ qubits. Then, the ancillary qubits can be interpreted as an internal  quantum reference frame that allows us to measure this relative phase.   

It is worth noting that the Jordan-Wigner transform defined in Eq.(\ref{Jordan}) is not unique. For instance,  if we periodically shift the labels by one, i.e., label qubit b as $j=0$,  qubit a as $j=1$, etc,  then the fermioinc operators in the top and bottom lines of Eq.(\ref{Jordan2}) will be swapped (This freedom can be formulated as a gauge potential \cite{fradkin1989jordan}). However, in both cases the operator $Z^{\otimes n}$ corresponds to  $(-1)^{\sum_{j=1}^n c^\dag_j c_j} $ .

\newpage

\color{black}

\section{Supplementary Note 7: Schemes for implementing general energy-conserving unitaries on composite systems}\label{App:C}

\subsection{Overview and an illustrative example}

In this section we focus on the case of U(1)-invariant unitaries, or, equivalently, energy-conserving unitaries and 
study various techniques and constructions for circumventing the no-go theorem with 
ancillary qubits. In particular, we prove the following theorem in the paper.

\begin{theorem}\label{Thm:energy} (Informal version) Consider a finite set of closed systems  with the property that for each system the  gap between any consecutive pairs of energy levels is $\Delta E$. Then, any global energy-conserving unitary transformation on these systems can be implemented by a finite sequence of  2-local energy-conserving unitaries, provided that the systems can interact with a single ancillary qubit  with the energy gap $\Delta E$ between its two levels.
\end{theorem}

 We also show that if one can use a second ancillary qubit, then a general energy-conserving unitary can be implemented  without any direct interactions between the systems, i.e., just using system-ancilla interactions.  Theorem  \ref{Thm:iejrg} and corollary \ref{erh} contain the precise statements of these results.\\

Before going into the details, here  we explain the main idea with an illustrative example for a system with 3 qubits.  Suppose the goal is to implement the family of unitaries $\{e^{i\theta Z_1Z_2Z_3}: \theta\in[0,2\pi)\}$ on a system with $n=3$ qubits, labeled as 1, 2, 3. The condition in Eq.(\ref{App:conds2}) implies that this family cannot be generated by 2-local U(1)-invariant unitaries.  Now suppose in addition to these 3 qubits, we can use an ancillary qubit labeled as $a$. As we explain in Fig.\ref{loop}, the commutation relations in Eq.(\ref{kjf00}) imply that by applying the unitaries generated by $XX+YY$ and local $Z$ Hamiltonians, which are both 2-local and invariant under rotations around z,  one can simulate Hamiltonian $(Z_3-Z_a) Z_1Z_2$, where $Z_a$ is Pauli $Z$ on the ancillary qubit $a$ tensor product the identity operators on the rest of qubits.  Assuming the ancillary qubit is initially prepared  in state $|0\rangle_a$ and qubits 1, 2, 3 are in an arbitrary state $|\psi\rangle$, under the time evolution generated by this Hamiltonian, the initial state $|\psi\rangle\otimes |0\rangle_a$ evolves to  
\be\label{Eq:loop}
e^{i \theta (Z_3-Z_a) Z_1Z_2 }\ (|\psi\rangle_{}\otimes |0\rangle_a)= \big(e^{i \theta(Z_1Z_2Z_3-Z_1Z_2)} |\psi\rangle\big)\otimes |0\rangle_a\ ,
\ee
for $\theta\in[0,2\pi)$.  Therefore, if after applying this unitary we apply the 2-local symmetric unitary $e^{i\theta Z_1Z_2}$ on qubits 1 and 2, the overall unitary evolution of qubits 1, 2, 3 will be the desired unitary  $e^{i \theta Z_1Z_2Z_3}$. Note that because at the end of the process the ancillary qubit goes back to its initial state, we can use it again to implement other unitary transformations. As we discuss in Fig.\ref{loop} and prove  in the following section, the commutation relations in  Eq.(\ref{kjf00}) imply that  this technique can be generalized to implement all diagonal Hamiltonians, just using interactions $XX+YY$, and local $Z$ on the ancillary qubit.

Finally, recall that combining diagonal unitaries with unitaries generated by 2-local interaction $XX+YY$, one obtains all U(1)-invariant unitaries (See theorem \ref{Thm5}). In summary, we conclude that: \emph{all unitaries that are invariant under rotations around z, i.e., those preserving  $\sum_j Z_j$,  can be implemented using a single ancillary qubit and via interactions $XX+YY$ and local $Z$ on the ancillary qubit}. 

We also show that this result remains valid if  there are further geometric constraints on the interactions between qubits. In particular, if the qubits in the system form a chain and only  nearest-neighbor $XX+YY$  interactions between them are allowed, we can still implement general U(1)-invariant unitaries,  provided that the ancillary qubit  can interact with all the qubits via $XX+YY$ interaction. Alternatively, if  in addition to $XX+YY$ interaction, one can also apply $ZZ$ interactions to the nearest-neighbor qubits, then the ancillary qubit only needs to interact with one qubit in the chain, e.g., the qubit at one end of the chain.

  \begin{figure}
{\includegraphics[scale=.127]{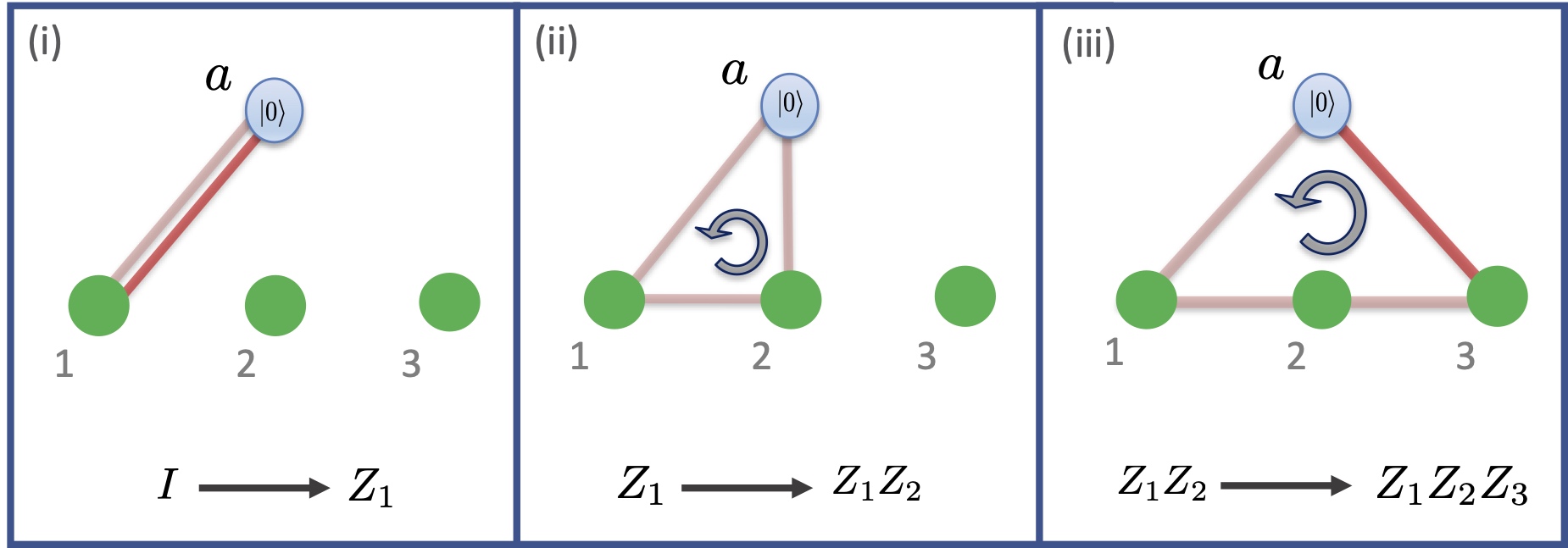}}\caption{\textbf{A protocol for implementing U(1)-invariant unitaries using interaction \emph{XX+YY}.} We show that using 2-local interactions that are invariant under rotations around the z axis, the family of unitaries $\{e^{i\theta Z_1Z_2 Z_3}: \theta\in[0,2\pi)\}$ cannot be implemented, unless one uses ancillary systems. Using a single ancillary qubit $a$, we can implement this family of unitraies just using interactions $R_{rs}=(X_rX_s+Y_rY_s)/2$ for $r,s\in\{1, 2, 3, a\}$  and local $Z_a$ on the ancillary qubit, which are both invariant under rotations around z. The ancillary qubit $a$, highlighted by blue, is initially prepared in state $|0\rangle$ and returns to the same state at the end of the process. 
In part (i) we use Hamiltonians $R_{a1}$ and $Z_a$ to simulate Hamiltonian $Z_1$. To achieve this, we consider the commutator $i[R_{a1},T_{1a}]=Z_1-Z_a$, where  $T_{1a}= \frac{i}{2} [Z_1, R_{1a}]$. Since $a$ is initially in state $|0\rangle$, the effect of Hamiltonian $Z_1-Z_a$ is equivalent to Hamiltonian $Z_1-I$, where $I$ is the identity operator. Therefore, by applying $R_{a1}$ and $Z_a$ in a proper order, we can implement unitaries generated by $Z_1$, up to a global phase. Part (ii)  corresponds to   the commutator $\big[\big[ [R_{a1}, R_{12}], R_{2a} \big]\big]=Z_1(Z_2-Z_a)$. The equality means that by applying $R_{a1}$, $R_{12}$ and $R_{2a}$ in a proper order, we can implement Hamiltonian $Z_1 Z_2-Z_1$. Combining it with $Z_1$ obtained in the step (i) we obtain $Z_1 Z_2$. Part (iii) corresponds to the commutator $i\big[\big[ [R_{a1}, R_{12}], R_{23} \big], T_{3a}\big]=Z_1Z_2(Z_3-Z_a)$. Since qubit $a$ is initially in state $|0\rangle$, the effect of this time evolution on qubits 1, 2, 3 is equivalent to the time evolution generated by Hamiltonian $Z_1Z_2Z_3-Z_1Z_2$ (See Eq.\ref{Eq:loop}).  
Combining this with Hamiltonian $Z_1Z_2$ obtained from step (ii), we obtain  $Z_1Z_2Z_3$.  }\label{loop}
\end{figure}

  \subsection{U(1) symmetry with systems of qubits }

Here, we show how  U(1)-invariant unitaries on qubit systems can be  implemented using a single ancillary qubit. The argument is a straightforward generalization of the idea discussed in Eq.(\ref{Eq:loop}).  
 This  proves   theorem \ref{Thm:energy}  on energy-conserving unitaries, for the special case where all the systems are qubits (As we discussed before, for systems considered in theorem \ref{Thm:energy} energy conservation is equivalent to a U(1) symmetry. To focus on the main idea, in this subsection we phrase the arguments in terms of U(1) symmetry).

Consider a pair of bit strings  $\textbf{b}, \textbf{b}'\in\{0,1\}^n$, such that $\textbf{b}'$ can be obtained from  $\textbf{b}$ by flipping a bit with value $1$, i.e.,  its Hamming weight is $w(\textbf{b}')=w(\textbf{b})-1$ and its  Hamming distance with  $\textbf{b}$ is $d(\textbf{b}',\textbf{b})=1$. Let  $Z_a$ and $I_a$ be, respectively, the Pauli $z$ and the identity operator on the ancillary qubit. Then, Eq.(\ref{kjf00}) implies that  the  Hamiltonian   
$$\textbf{Z}^{\textbf{b}}\otimes I_{a}-\textbf{Z}^{\textbf{b}'}\otimes Z_{{a}}$$
 can be generated using  Hamiltonians $R_{rs}:\  r,s\in\{1,\cdots , n\}\cup \{a\}$ together with Pauli $z$ on the ancillary qubit (This can also be seen using lemma \ref{lem05}).  Assuming this qubit is initially in state $|0\rangle_a$, under  this  Hamiltonian any initial state $|\psi\rangle$ of $n$ qubits evolves to  
 \be\label{jhwv}
e^{i\theta [\textbf{Z}^{\textbf{b}}\otimes I_{a}-\textbf{Z}^{\textbf{b}'}\otimes Z_{{a}}]}\ (|\psi\rangle\otimes  |0\rangle_{a}) = (e^{i\theta [\textbf{Z}^{\textbf{b}}-\textbf{Z}^{\textbf{b}'}]}|\psi\rangle)\otimes |0\rangle_{a} \ .
\ee 
Note that at the end of the process, the ancillary qubit goes back to its initial state.  
Therefore, repeating this, we can implement all Hamiltonians 
\be\label{kh}
\Big\{\textbf{Z}^{\textbf{b}}-\textbf{Z}^{\textbf{b}'}: \textbf{b},\textbf{b}'\in\{0,1\}^n,  w(\textbf{b}')=w(\textbf{b})-1, d(\textbf{b}',\textbf{b})=1 \Big\}\ .
\ee
Furthermore,  by applying Pauli $z$ Hamiltonian on the ancillary qubit, i.e., Hamiltonian $I^{\otimes n}\otimes Z_a$ on the total system, we can also implement the  constant Hamiltonian $I^{\otimes n}$ on $n$ qubits. Linear combinations of this Hamiltonian with Hamiltonians in Eq.(\ref{kh}), give all diagonal Hamiltonians (Note that a similar argument works if rather than state $|0\rangle_a$, the ancillary qubit is prepared in state $|1\rangle_a$). 

Finally, recall that according to theorem \ref{Thm5}, combining diagonal Hamiltonians with Hamiltonians $X_jX_{j+1}+Y_jY_{j+1}: j=1,\cdots  n-1$, we can generate all  $U(1)$-invariant Hamiltonians. This proves theorem \ref{Thm:energy} in the special case of qubit systems. 

\subsection*{Geometrically local interactions}\label{Sec:Geometry}
So far, in our discussion we have not assumed any particular geometry for the system and the labels $1,\cdots, n$ of $n$ qubits was arbitrary. Next, we assume the qubits lie on an open chain and their labeling corresponds to their order in the chain.  For instance, the qubits are ordered from left to right, and the leftmost  qubit is labeled as qubit 1.  

Suppose one can turn on and off $XX+YY$ interactions between nearest-neighbor qubits. Furthermore, suppose an ancillary qubits $a$ can interact with all qubits with $XX+YY$ interaction. Furthermore,  in addition to $XX+YY$ interactions between  the qubits, suppose one can also apply local Pauli $Z$ on the ancillary qubit. Then, the overall, Hamiltonian is in the form
\be\label{glj}
H(t)=\sum_{j=1}^{n-1} c_j(t)\ (X_jX_{j+1}+Y_jY_{j+1})+     d_j(t)\ (X_jX_{a}+Y_jY_{a})+ z(t) Z_a\ ,
\ee
where $c_j(t)$, $d_l(t)$ and $z(t)$ are arbitrary real functions. As before, we assume the ancilllary qubit is initially in state $|0\rangle$. 

This Hamiltonian does not allow direct interactions between arbitrary pairs of qubits.   Nevertheless, it turns out that using this family of Hamiltonians 
we can implement all U(1)-invariant  unitary transformations  on qubits $1,\cdots, n$. To see this first note that  for any connected subset of qubits, corresponding to a  gapless sequence of integers $j, j+1, \cdots,j'-1, j'$, we can implement the corresponding Hamiltonian $Z_{j} \cdots Z_{j'}$. This follows immediately from Eq.(\ref{kjf00}) together with the argument in Eq.(\ref{jhwv}).  In particular, we can simulate interaction $Z_j Z_{j+1} $ for any neighboring qubits $j$ and $j+1$. Now the key observation is that  by combining interactions $Z_j Z_{j+1}$ and $X_j X_{j+1}+Y_j Y_{j+1}$ one can implement the swap unitary on qubits $j$ and $j+1$, i.e.,  the unitary that exchanges the states of these qubits. In particular, 
\be
e^{i \frac{\pi}{4} (X_j X_{j+1}+Y_j Y_{j+1}+Z_j Z_{j+1})}=e^{i\frac{\pi}{4}} {S}_{j, j+1}\ ,
\ee  
where $S_{j,j+1}$ is the swap operator that exchanges the state of qubits $j$ and $j+1$. By combining swaps on nearest-neighbor qubits, we obtain all permutations on $n$ qubits. Therefore, we can change the order of qubits, arbitrarily. Combining this with the above technique we can implement arbitrary diagonal unitary transformations. For instance, to implement the unitary  $e^{i \theta Z_l Z_{m}} $ between any two arbitrary qubits $l$ and $m$ with $l<m$, we first apply the permutation operator $S_{l+1, m}$ that exchanges the states of qubits $m$ and $l+1$ and leave the other qubits unchanged. Then we apply   
the unitary $e^{i \theta Z_l Z_{l+1}}$ and finally  exchange the state of qubits $l+1$ and $m$ again. In this way, we obtain
\be\label{change}
S_{l+1, m} e^{i \theta Z_l Z_{l+1}} S_{l+1, m}=e^{i \theta Z_l Z_{m}}\ .
\ee    
Finally, recall that according to theorem \ref{Thm5}, 
 by combining diagonal unitaries with unitaries generated by Hamiltonians $\{R_{j,j+1}=\frac{1}{2}(X_j X_{j+1}+Y_{j}Y_{j+1}): j=1,\cdots, n-1\}$, we can implement all U(1)-invariant unitaries.  This proves the claim that using a single ancillary qubit in initial state $|0\rangle$ and by properly choosing functions $c_j$, $d_j$ and $z$, we can implement a general energy-conserving unitary using Hamiltonian $H(t)$ in Eq.(\ref{glj}). 

In the above scheme, the ancillary qubit $a$ needs to interact with all qubits in the system. We can relax this requirement if in addition to interactions  $XX+YY$, we have access to interactions $ZZ$. More precisely, consider  the family of Hamiltonians 
\be\label{glj33}
H'(t)=  r(r) (X_1X_{a}+Y_1Y_{a})+ s(t) Z_1Z_a + z(t) Z_a+\sum_{j=1}^{n-1} c_j(t)\ (X_jX_{j+1}+Y_jY_{j+1})+b_j(t)\ Z_j Z_{j+1}\ ,
\ee
where $c_j$, $b_j$, $r$, $s$ and $z$ are arbitrary real functions. 
Using an argument similar to the above argument, we can easily see that universality can also be achieved using this family of Hamiltonians. Again, the key point is that by combining $XX+YY$ and $ZZ$ interactions on nearest neighbor qubits, we can swap their orders, and therefore,  we can permute the order of all qubits arbitrarily. Hence, the restriction to nearest-neighbor interactions  becomes irrelevant.




 

\newpage

\subsection{General energy-conserving unitaries}

  \begin{figure}
{\includegraphics[scale=.7]{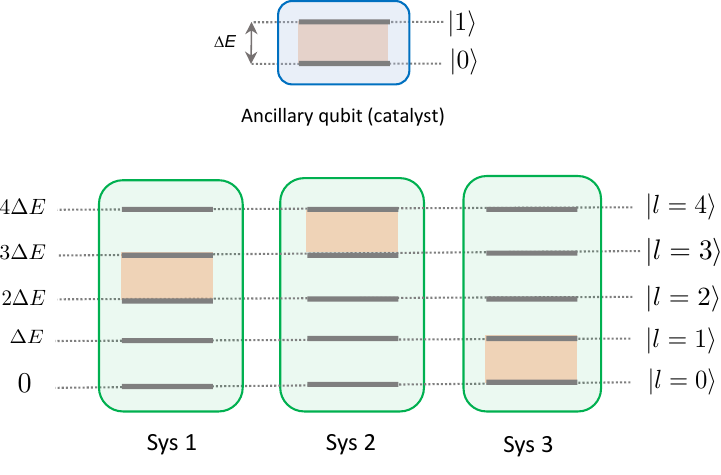}}\caption{\textbf{A scheme for implementing  energy-conserving unitaries on composite systems.} 
In each system we consider a pair of energy eigenstates with energy difference $\Delta E$, which can be interpreted as a single qubit. Then,  we can apply the protocol  discussed in the previous section for qubits with U(1) symmetry (See Fig.\ref{loop}). To implement this protocol,  the  qubits defined in different systems are sequentially   coupled to each other and to an ancillary qubit, via 2-local energy-conserving interactions in Eq.(\ref{int65}) and Eq.(\ref{int}). The ancilla is initially prepared  in its ground state $|0\rangle$, whose energy gap with the excited state $|1\rangle$, is $\Delta E$. This implies that the interaction in Eq.(\ref{int}), which couples the ancilla to the systems, is energy-conserving.  
Using these 2-local energy-conserving interactions and following the protocol introduced in the previous section, we can implement energy-conserving unitaries on the selected energy pairs.  Repeating these steps   
with other  pairs of energy eigenstates with energy difference $\Delta E$, we obtain all energy-conserving unitaries (See Appendix \ref{App:C}).}\label{energies}
\end{figure}


Next, we consider implementation of general energy-conserving unitaries  beyond qubit systems and prove theorem \ref{Thm:energy}. Note  that for the family of Hamiltonians considered in this theorem, energy conservation is equivalent to a U(1) symmetry.  
 The proposed scheme for implementing general energy-conserving unitaries is a generalization of the protocol  used in the qubit case with U(1) symmetry. In particular, similar to that case, there are two main steps in the argument: First, we show how a general \emph{diagonal} energy-conserving unitary can be implemented (lemma \ref{ref:lem23}), and then we show that by combining diagonal unitaries with 2-local energy-conserving unitaries, we obtain all energy-conserving unitaries (lemma \ref{eheheh}).

To simplify the notation and analysis, we  
focus on the case of systems with identical Hilbert spaces and Hamiltonians. We also assume the Hamiltonians are non-degenerate (These assumptions are  non-essential in the argument and can be relaxed). In particular, we consider $n\ge 1$ systems each with a $d$-dimensional Hilbert space and with the intrinsic Hamiltonian $\Delta E \sum^{d-1}_{r=0}  |r\rangle\langle r |$. The systems are labeled by $j=1,\cdots, n$. We assume  before and after applying the energy-conserving unitary, the systems are non-interacting, i.e., their total intrinsic Hamiltonian is  
\be
H_\text{intrinsic}=\sum_{j=1}^n H_j\ , 
\ee
where 
\be
H_j=\Delta E \sum^{d-1}_{r=0} r |r\rangle\langle r |_j \ ,
\ee
 is the Hamiltonian of system $j$ tensor product with the identity operators on the rest of systems.  \\


This scheme  uses an ancillary qubit with Hamiltonian $\Delta E |1\rangle\langle 1|_{\text{a}}$, initially prepared in the ground state $|0\rangle_\text{a}$. 
Suppose in each system we pick a pair of energy levels with energy difference $\Delta E$, which can be interpreted as a qubit (See Fig.\ref{energies}). Then, following the protocol introduced in the previous section,  we can implement all energy-conserving unitaries defined on these qubits using 2-local energy-conserving unitaries.

 The required interactions for implementing this scheme are  
\bes\label{arar2}
\begin{align}
R^{(l)}_{\text{a}, j}&\equiv R^{(l)}_{j, \text{a}} \equiv |l-1\rangle\langle l|_{j}\otimes  |1\rangle\langle0|_{\text{a}}+ |l\rangle\langle l-1|_{j}\otimes  |0\rangle\langle1|_\text{a}           \ \ \ \ &&: l=1,\cdots , d-1\ ;\ \ j=1,\cdots , n\\
 R^{(l,l')}_{j,j+1}&\equiv R^{(l',l)}_{j+1,j}\equiv  |l-1\rangle\langle l|_j\otimes |l'\rangle\langle l'-1|_{j+1}+|l\rangle\langle l-1|_{j}\otimes |l'-1\rangle\langle l'|_{j+1}          \ \ \ \ &&: l,l'=1,\cdots , d-1\ ;\ \ j=1,\cdots , n-1\ .
  \end{align}
  \ees
Note that the above operators are defined  on the Hilbert space $(\mathbb{C}^d)^{\otimes n}\otimes \mathbb{C}^2$, corresponding to $n$ systems and the ancillary qubit. Operator $R^{(l)}_{j, \text{a}}$
acts non-trivially on system $j$ and ancillary qubit $\text{a}$ and $R^{(l',l)}_{j+1,j}$ acts non-trivially on systems $j$ and $j+1$. 
Therefore, these interactions are 2-local. Furthermore, if the systems are placed in the order corresponding to their  labels $j=1,\cdots, n$, then the terms $R^{(l,l')}_{j,j+1}$ are interactions between a pair of nearest-neighbor    systems. 

Also, note that the above interactions are  all energy-conserving, i.e., commute with the total intrinsic  Hamiltonian of $n$ systems and the ancillary qubits  \be
H_\text{tot}\equiv \sum_{j=1}^n H_j + \Delta E |1\rangle\langle1|_{\text{a}}\ .
\ee

 In fact, if  we interpret states  $|l-1\rangle_j$ and $|l\rangle_j$ as states $|0\rangle$ and $|1\rangle$ of a qubit, then $R^{(l)}_{j,\text{a}}$  will be equivalent to the interaction $\frac{1}{2}(XX+YY)$  between a pair of qubits.   Similarly, $R^{(l,l')}_{j,j'}$ can be interpreted as interaction  $\frac{1}{2}(XX+YY)$ between a pair of qubits, defined in systems $j$ and $j'$. Recall that the protocol defined in the previous section can be implemented using interactions $XX+YY$ and local $Z$ on the ancillary qubit. Therefore,  interactions  defined in Eq.(\ref{arar2})   together with local $Z_\text{a}$ on the ancillary qubit, allow us to apply  this protocol and implement all energy-conserving unitaries defined on the the selected pairs of energy eigenstates. Using this idea we show that

\begin{lemma}\label{lem221}
Consider a Hermitian operator  $H_{{\bf{n}}}$ acting on the Hilbert space $(\mathbb{C}^d)^{\otimes n}$ that commutes with $H_\text{intrinsic}=\sum_{j=1}^n H_j$. Then, there exists an operator $\tilde{H}_{{\bf{n}}, \text{a}}$ acting on   $(\mathbb{C}^d)^{\otimes n}\otimes \mathbb{C}^2$, such that  $i \tilde{H}_{{\bf{n}}, \text{a}}$ is in the real Lie algebra, generated by $i Z_{\text{a}}\ , i R^{(l)}_{j, \text{a}}  , i R^{(l,l')}_{j,j+1}$, i.e. 
\be\label{wffw}
i \tilde{H}_{{\bf{n}}, \text{a}}\in \mathfrak{alg}_\mathbb{R}\Big\{i Z_{\text{a}}\ , i R^{(l)}_{j, \text{a}}  , i R^{(l,l')}_{j,j+1}:\  l,l'=1,\cdots , d-1\ ;\ \ j=1,\cdots , n
\Big\}
  \ ,
\ee
and
\be
|\psi\rangle\in (\mathbb{C}^d)^{\otimes n}:\ \  \ \ \ \  \tilde{H}_{{\bf{n}}, \text{a}} (|\psi\rangle\otimes |0\rangle_\text{a})=({H}_{{\bf{n}}}|\psi\rangle)\otimes |0\rangle_\text{a}\ .
\ee

\end{lemma}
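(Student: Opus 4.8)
\emph{The plan.} The plan is to pass from the Lie algebra of the joint system--plus--ancilla to its action on the $|0\rangle_{\text{a}}$ sector. I would write $\mathfrak{g}\equiv\mathfrak{alg}_{\mathbb{R}}\{iZ_{\text{a}},\,iR^{(l)}_{j,\text{a}},\,iR^{(l,l')}_{j,j+1}\}$ for the algebra on the right of Eq.~(\ref{wffw}). Since each generator commutes with $H_{\text{tot}}$, so does every element of $\mathfrak{g}$. I would then single out the subalgebra $\mathfrak{g}_0\equiv\{G\in\mathfrak{g}:[G,Z_{\text{a}}]=0\}$ of ancilla--block--diagonal elements and introduce the compression $\Phi(G)\equiv(I\otimes\langle0|_{\text{a}})\,G\,(I\otimes|0\rangle_{\text{a}})$ into operators on $(\mathbb{C}^d)^{\otimes n}$. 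On $\mathfrak{g}_0$, where $G=G_0\otimes|0\rangle\langle0|_{\text{a}}+G_1\otimes|1\rangle\langle1|_{\text{a}}$, one has $\Phi(G)=G_0$ and $\Phi([G,G'])=[\Phi(G),\Phi(G')]$, so $\Phi$ restricts to a Lie--algebra homomorphism into the skew--Hermitian operators on the systems, and $[\Phi(G),\sum_jH_j]=0$ because $H_{\text{tot}}$ reduces to $\sum_jH_j$ on the $|0\rangle_{\text{a}}$ block. The key reduction I would stress is that if $i\tilde{H}_{\mathbf{n},\text{a}}\in\mathfrak{g}_0$ satisfies $\Phi(i\tilde{H}_{\mathbf{n},\text{a}})=iH_{\mathbf{n}}$, then $\tilde{H}_{\mathbf{n},\text{a}}$ is block--diagonal with $|0\rangle_{\text{a}}$--block $H_{\mathbf{n}}$, hence $\tilde{H}_{\mathbf{n},\text{a}}(|\psi\rangle\otimes|0\rangle_{\text{a}})=H_{\mathbf{n}}|\psi\rangle\otimes|0\rangle_{\text{a}}$. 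So it would suffice to prove $\Phi(\mathfrak{g}_0)\supseteq\mathfrak{h}^{\text{sys}}_n$, the algebra of all skew--Hermitian operators on $(\mathbb{C}^d)^{\otimes n}$ commuting with $\sum_jH_j$.

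\emph{Generating set inside the image.} Next I would exhibit a generating set of $\mathfrak{h}^{\text{sys}}_n$ in $\Phi(\mathfrak{g}_0)$. By the $d$--level analogue of Theorem~\ref{Thm5}, proved by the identical argument, $\mathfrak{h}^{\text{sys}}_n$ is generated as a real Lie algebra by the computational--basis projectors $\{i|\mathbf{r}\rangle\langle\mathbf{r}|\}$ together with the adjacent hops $\{iR^{(l,l')}_{j,j+1}\}$: commuting a diagonal projector with a hop produces the skew--Hermitian off--diagonal exchange operators, and the relation $[F,F']=F''$ used there chains these across each fixed--energy sector, whose hopping graph is connected. The hops are system--only, so they already lie in $\mathfrak{g}_0$ with $\Phi(iR^{(l,l')}_{j,j+1})=iR^{(l,l')}_{j,j+1}$, and I would only need to realize the basis projectors. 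Setting $T^{(l)}_{j,\text{a}}\equiv\tfrac{i}{2}[Z_{\text{a}},R^{(l)}_{j,\text{a}}]$, so that $iT^{(l)}_{j,\text{a}}=\tfrac12[iZ_{\text{a}},iR^{(l)}_{j,\text{a}}]\in\mathfrak{g}$, a short computation gives the block--diagonal element
\[
\tfrac{1}{2i}\big[R^{(l)}_{j,\text{a}},T^{(l)}_{j,\text{a}}\big]=|l-1\rangle\langle l-1|_j\otimes|1\rangle\langle1|_{\text{a}}-|l\rangle\langle l|_j\otimes|0\rangle\langle0|_{\text{a}},
\]
whose $\Phi$--image is $-|l\rangle\langle l|_j$. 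Together with $\Phi(iZ_{\text{a}})=iI$, this would place all single--site projectors and the identity in $\Phi(\mathfrak{g}_0)$.

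\emph{Higher--weight projectors and the obstacle.} Finally, to reach a general projector $|\mathbf{r}\rangle\langle\mathbf{r}|=\bigotimes_j|r_j\rangle\langle r_j|$ I cannot simply multiply single--site projectors, since $\Phi(\mathfrak{g}_0)$ is a Lie and not an associative algebra; instead I would build it through the $d$--level generalization of the loop identities Eq.~(\ref{kjf00})--Eq.~(\ref{Eq:loop}). Closing a chain of adjacent hops on a support $j_1<\dots<j_v$ through the ancilla with $R^{(\cdot)}_{\cdot,\text{a}}$ and $T^{(\cdot)}_{\cdot,\text{a}}$ should yield a block--diagonal element of $\mathfrak{g}$ whose $|0\rangle_{\text{a}}$--block telescopes to $\pm(\,|r_{j_1}\rangle\langle r_{j_1}|\cdots|r_{j_v}\rangle\langle r_{j_v}|-|r_{j_1}\rangle\langle r_{j_1}|\cdots|r_{j_{v-1}}\rangle\langle r_{j_{v-1}}|\,)$, the lower--weight term appearing because the ancilla end--factor $|0\rangle\langle0|_{\text{a}}-|1\rangle\langle1|_{\text{a}}$ collapses to $1$ on the $|0\rangle_{\text{a}}$ subspace, as in Eq.~(\ref{Eq:loop}). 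I would induct on $v$, adding back the already--constructed weight--$(v-1)$ projector to isolate $\bigotimes_j|r_j\rangle\langle r_j|$, and reduce non--consecutive supports to this case by relabelling with nearest--neighbour swaps exactly as in Appendix~\ref{Sec:Geometry}. Then $\Phi(\mathfrak{g}_0)$ would contain the full generating set of $\mathfrak{h}^{\text{sys}}_n$ and, being a subalgebra, equal it, so pulling back any $iH_{\mathbf{n}}$ through $\Phi$ furnishes $\tilde{H}_{\mathbf{n},\text{a}}$. The hard part will be precisely this last identity: unlike the qubit case, the transition operators $|l-1\rangle\langle l|$ carry level indices that must be threaded consistently around the loop, and I would have to verify that the nested commutator closes to a block--diagonal operator with exactly the stated telescoping diagonal block.
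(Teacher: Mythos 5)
Your proposal is correct and takes essentially the same route as the paper's own proof: your ancilla-loop nested commutators with telescoping $|0\rangle_{\text{a}}$-blocks are exactly the paper's $D^{(l)}_j$ and $D^{(l_1,\cdots,l_t)}_{j_1,\cdots,j_t}$ construction, your ``$d$-level analogue of Theorem \ref{Thm5}'' is precisely the paper's lemma \ref{eheheh}, and the nearest-neighbour reduction by swaps matches Sec.~\ref{Sec:Geometry}. The only differences are presentational: you make the compression $\Phi$ onto the $|0\rangle_{\text{a}}$ block an explicit Lie-algebra homomorphism (the paper uses block-diagonality implicitly, evaluating operators on $|\psi\rangle\otimes|0\rangle_{\text{a}}$), and you aim at rank-one projectors $|\mathbf{r}\rangle\langle\mathbf{r}|$ where the loop commutators naturally output products of $Z^{(l)}$-type factors --- a harmless mismatch bridged by the change of diagonal basis in the paper's lemma \ref{lem:diag}.
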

Any energy-conserving unitary $V_\textbf{n}$ on the $n$ systems  can be written as $e^{i{H}_{{\bf{n}}}}$, where ${H}_{{\bf{n}}}$ commutes with the intrinsic Hamiltonian $H_\text{intrinsic}=\sum_{j=1}^n H_j$. Therefore, the above lemma implies that  there exists an operator  
$i \tilde{H}_{{\bf{n}}, \text{a}}$ in the real Lie algebra generated by $i Z_{\text{a}}\ , i R^{(l)}_{j, \text{a}}  , i R^{(l,l')}_{j,j+1}$, such that 
\be\label{lwwdhw92}
|\psi\rangle\in (\mathbb{C}^d)^{\otimes n}:\ \  \ \ \ \  e^{i\tilde{H}_{{\bf{n}}, \text{a}}} (|\psi\rangle\otimes |0\rangle_\text{a})=(e^{i{H}_{{\bf{n}}}}|\psi\rangle)\otimes |0\rangle_\text{a}=V_\textbf{n}|\psi\rangle\otimes |0\rangle_\text{a}\ .
\ee
Furthermore, the fact that $i \tilde{H}_{{\bf{n}}, \text{a}}\in \mathfrak{alg}_\mathbb{R}\big\{i Z_{\text{a}}\ , i R^{(l)}_{j, \text{a}}  , i R^{(l,l')}_{j,j+1}:\  l,l'=1,\cdots , d-1\ ;\ \ j=1,\cdots , n
\big\}$ implies that the unitary $e^{i\tilde{H}_{{\bf{n}}, \text{a}}}$ is in the Lie group generated by unitaries 
\be\label{ljdjdj}
\exp({i\theta Z_{\text{a}}})\ ,\  \exp(i\theta R^{(l)}_{j, \text{a}})\ ,\  \exp(i\theta R^{(l)}_{j, \text{a}})\  ,\   \exp(i\theta R^{(l,l')}_{j,j+1})\ \ \ \  : \theta\in[0, 2\pi) ;\   l,l'=1,\cdots , d-1\ ;\ \ j=1,\cdots , n\ ,
\ee
which are all 2-local and energy-conserving. 
Moreover, using {Fact 1} in Supplementary Note 1,  the group generated by these unitaries 
is compact and therefore using the result of 
\cite{d2007introduction}, any unitary in this group is uniformly finitely generated by the generating set in Eq.(\ref{ljdjdj}). This result, which is summarized in theorem \ref{Thm:iejrg}, proves the statement of theorem \ref{Thm:energy}.

This result means that there is a map  from energy-conserving unitaries on $(\mathbb{C}^d)^{\otimes n}$ to energy-conserving unitaries on    $(\mathbb{C}^d)^{\otimes n}\otimes \mathbb{C}^2$, namely 
\be
{V}_{\bf{n}} \longrightarrow\  \tilde{V}_{\bf{n},\text{a}}=  {V}_{\bf{n}} \otimes |0\rangle\langle 0|_\text{a}+ W_{\bf{n}} \otimes |1\rangle\langle1|_\text{a}\ , 
\ee
where $W_{\bf{n}}$ is also an energy-conserving unitary on $(\mathbb{C}^d)^{\otimes n}$, and $\tilde{V}_{\bf{n},\text{a}}$ can be generated by the family of unitaries in Eq.(\ref{ljdjdj}).  \\

%
%
%
%
%
%
%
%
%
%
%
%
%
%
%
%

\noindent{\textbf{Hamiltonian Picture:}}\\

Alternatively, we can understand this result in the Hamiltonian picture. Suppose to implement an energy-conserving unitary, we moodify the intrinsic Hamiltonian of systems and ancilla    $H_\text{tot}=\sum_{j=1}^n H_j + \Delta E |1\rangle\langle1|_{\text{a}}$. In particular, suppose we add  
2-local energy-conserving interactions in Eq.(\ref{arar2}) to $H_\text{tot}$, and obtain the family of Hamiltonians \be\label{ljwj12}
H_{{\bf{n}},\text{a}}(t)=H_\text{tot}+  g_\text{a}(t)\  Z_\text{a}+\sum_{j=1}^n \sum_{l=1}^{d-1}  g^{(l)}_{j, \text{a}}(t)\ R^{(l)}_{j, \text{a}}  + \sum_{j=1}^{n-1} \ \sum_{l,l'=1}^{d-1}  g_{j, j+1}^{(l,l')}(t)\ R^{(l,l')}_{j,  j+1} \ ,
\ee
where $ g_\text{a}$, $g^{(l)}_{j, \text{a}}$ and  $g_{j, j'}^{(l,l')}$ are real functions of time $t$, which vanish for $t<0$, and for sufficiently large $t$.  
 We are interested in the unitary transformations generated by this family of Hamiltonians for different choices of these functions, i.e., unitaries satisfying
\be
\frac{d}{dt} V_{{\bf{n}},\text{a}}(t)=-i H_{{\bf{n}},\text{a}}(t) V_{{\bf{n}},\text{a}}(t)\ , \ \  t\ge 0 
\ee
where $V_{{\bf{n}},\text{a}}(0)$ is the identity operator on $(\mathbb{C}^d)^{\otimes n}\otimes \mathbb{C}^2$. Clearly, 
\be
\forall t\ge0: \ \ \ \ \Big[H_{{\bf{n}},\text{a}}(t) , H_\text{tot}\Big]=0\ ,
\ee
which means the family of unitaries generated by these Hamiltonians are energy-conserving. 
Note that  Hamiltonian $H_{{\bf{n}},\text{a}}(t)$ in Eq.(\ref{ljwj12}) contains a time-independent  term $H_\text{tot}$, which corresponds to the intrinsic Hamiltonians of the $n$ systems and the ancilla. It turns out that the existence of this constant term does not restrict the family of unitaries generated by this family of Hamiltonians. In particular, this family contains the family of unitaries generated by Hamiltonians 
\be\label{ljwj1233}
g_\text{a}(t)\  Z_\text{a}+\sum_{j=1}^n \sum_{l=1}^{d-1}  g^{(l)}_{j, \text{a}}(t)\ R^{(l)}_{j, \text{a}}  + \sum_{j=1}^{n-1} \ \sum_{l,l'=1}^{d-1}  g_{j, j+1}^{(l,l')}(t)\ R^{(l,l')}_{j,  j+1}\ ,
\ee
 where we have dropped the term $H_\text{tot}$ in Eq.(\ref{ljwj12}). This follows from the fact that $H_\text{tot}$ 
commutes with all other terms in the Hamiltonian $H_{{\bf{n}},\text{a}}(t)$ and,  furthermore, it generates a periodic time evolution, with period $2\pi/\Delta E$. Therefore, if the total time of implementing the desired unitary is an integer multiple of $2\pi/\Delta E$, the presence of $H_\text{tot}$  does not have any effect on the implemented unitary. This can always be achieved by adding a time delay less than $2\pi/\Delta E$, during which the other terms are turned off.

As we have seen before,   the standard  results of quantum control theory  \cite{d2007introduction, jurdjevic1972control} imply that, using the family of Hamiltonians in Eq.(\ref{ljwj1233}) and 
assuming  $g_\text{a}, g^{(l)}_{j, \text{a}}$,  and $g_{j, j+1}^{(l,l')}$ are arbitrary real functions,
 we can generate any unitary $e^{i G}$, with $iG$ in the Lie algebra defined in Eq.(\ref{wffw}).   Combining this with the argument in Eq.(\ref{lwwdhw92}), we conclude that
 
\begin{theorem}\label{Thm:iejrg}
Consider an arbitrary  energy-conserving unitary $V_{\bf{n}}$ acting on $(\mathbb{C}^d)^{\otimes n}$  (i.e., a unitary satisfying $[V_{\bf{n}}\ , H_\text{intrinsic}]=0$).  Then, there exists  a unitary  $\tilde{V}_{\bf{n},\text{a}}$ acting on $(\mathbb{C}^d)^{\otimes n}\otimes \mathbb{C}^2$, such that
 \be
\forall |\psi\rangle\in(\mathbb{C}^d)^{\otimes n}:\ \ \  \tilde{V}_{\bf{n},\text{a}} (|\psi\rangle |0\rangle_\text{a})=V_{\bf{n}} |\psi\rangle \otimes |0\rangle_\text{a}\ ,
 \ee
 and $\tilde{V}_{\bf{n},\text{a}}$ can be generated by a finite sequence of 2-local energy-conserving unitaries in Eq.(\ref{ljdjdj}). Equivalently,   $\tilde{V}_{\bf{n},\text{a}}$  can be implemented with the family of energy-conserving  Hamiltonians $H_{\bf{n},\text{a}}(t)$ defined in Eq.(\ref{ljwj12}). 
\end{theorem}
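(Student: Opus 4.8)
The plan is to reduce Theorem~\ref{Thm:iejrg} to Lemma~\ref{lem221}, which carries all of the technical weight, and then to dispatch the reduction with results already in hand. Given an energy-conserving unitary $V_{\bf{n}}$, I would first write $V_{\bf{n}}=e^{iH_{\bf{n}}}$ for a Hermitian $H_{\bf{n}}$ commuting with $H_\text{intrinsic}$; this is legitimate because $V_{\bf{n}}$ lies in the group of energy-conserving unitaries, a compact connected Lie group on which the exponential map from the Lie algebra is surjective (Fact~2, equivalently Corollary~\ref{cor31}). Lemma~\ref{lem221} then produces $\tilde H_{\bf{n},\text{a}}$ lying, up to the factor $i$, in the Lie algebra generated by the $2$-local energy-conserving interactions of Eq.~(\ref{arar2}) and acting as $H_{\bf{n}}$ on the sector with the ancilla in $|0\rangle_\text{a}$. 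Exponentiating and invoking the control-theoretic correspondence (Proposition~\ref{lkejf}) shows that $\tilde V_{\bf{n},\text{a}}=e^{i\tilde H_{\bf{n},\text{a}}}$ lies in the group generated by the unitaries of Eq.~(\ref{ljdjdj}) and restricts to $V_{\bf{n}}$ on the $|0\rangle_\text{a}$ sector (as in Eq.~(\ref{lwwdhw92})). Finally, Fact~1 gives compactness of this generated group, and \cite{d2002uniform} upgrades ``generated'' to ``uniformly finitely generated,'' yielding the finite sequence claimed.

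So the real work is Lemma~\ref{lem221}: every energy-conserving Hermitian $H_{\bf{n}}$ on $(\mathbb{C}^d)^{\otimes n}$ must be matched, on the $|0\rangle_\text{a}$ sector, by an element of $\mathfrak{g}:=\mathfrak{alg}_\mathbb{R}\{iZ_\text{a}, iR^{(l)}_{j,\text{a}}, iR^{(l,l')}_{j,j+1}\}$. Since the set of achievable $H_{\bf{n}}$ is a real subspace of the energy-conserving Hermitian operators, it suffices to exhibit a spanning family. I would use the basis $\{|{\bf{m}}\rangle\langle{\bf{m}'}| : E({\bf{m}})=E({\bf{m}'})\}$, with ${\bf{m}}$ ranging over level configurations, and treat the diagonal part (${\bf{m}}={\bf{m}'}$) and the off-diagonal part separately; in each case I would reduce to the already-proved qubit constructions by restricting to a pair of consecutive energy levels on each site, which turns $R^{(l,l')}_{j,j+1}$ and $R^{(l)}_{j,\text{a}}$ into the $\tfrac12(XX+YY)$ interaction and $Z_\text{a}$ into a local $Z$.

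For the off-diagonal part I would mimic the proof of Theorem~\ref{Thm5}: the interactions $R^{(l,l')}_{j,j+1}$ realize elementary nearest-neighbor transitions that move one quantum of energy between neighboring sites, and any two configurations of equal total energy are connected by a chain of such hops (the energy-conserving analog of ``every permutation is a product of nearest-neighbor transpositions,'' now a token-redistribution statement on a path). The nested-commutator identity of the form $F({\bf{m}},{\bf{m}''})=[F({\bf{m}},{\bf{m}'}),F({\bf{m}'},{\bf{m}''})]$ used in Eq.~(\ref{lkhwr}) then chains these hops into a general antisymmetric transition $|{\bf{m}}\rangle\langle{\bf{m}'}|-|{\bf{m}'}\rangle\langle{\bf{m}}|$, and bracketing with a diagonal projector converts the antisymmetric into the symmetric combination. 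For the diagonal part I would run the single-ancilla argument of Eq.~(\ref{jhwv}) level-pair by level-pair: commutators of $R^{(l)}_{j,\text{a}}$ with $Z_\text{a}$ build operators of the shape $A\otimes I_\text{a}-B\otimes Z_\text{a}$ which, on $|0\rangle_\text{a}$, act as $A-B$, and chaining these via Eq.~(\ref{kjf00}) produces all diagonal energy-conserving operators, including precisely those forbidden without the ancilla.

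I expect the diagonal sector to be the main obstacle, since that is exactly where the no-go theorem must be evaded. Without the ancilla the charge-vector constraints of Corollary~\ref{corcor}---for the qubit model, the weight-sector constraints of Lemma~\ref{lem41}---leave a genuine deficit of reachable diagonal Hamiltonians, so the delicate point is to verify that coupling to a single ancilla in state $|0\rangle_\text{a}$ supplies exactly the missing directions. Concretely, because $Z_\text{a}|0\rangle_\text{a}=|0\rangle_\text{a}$, an ancilla-coupled term of a given weight restricts on the $|0\rangle_\text{a}$ sector to a term of one lower weight on the $n$ systems, so the single linear constraint per energy sector is lifted once the ancilla is available. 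Making this bookkeeping precise, and separately checking that the equal-energy configuration graph remains connected under the bounded nearest-neighbor hops available in the qudit model, are the two places where genuine care is required.
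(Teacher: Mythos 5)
Your proposal is correct and follows essentially the same route as the paper: the identical reduction of Theorem~\ref{Thm:iejrg} to Lemma~\ref{lem221} (surjectivity of the exponential map on the compact connected group, the control-theoretic correspondence, and uniform finite generation via compactness), and the same two-part proof of Lemma~\ref{lem221}, namely ancilla-assisted simulation of diagonal Hamiltonians via the telescoping commutator chains, plus generation of all energy-conserving operators from diagonal ones and nearest-neighbor energy hops, which is exactly the content of Lemma~\ref{eheheh} as the qudit analog of Theorem~\ref{Thm5}. The two points you flag as delicate---connectivity of the equal-energy configuration graph under distance-2 hops and the $|0\rangle_\text{a}$-sector bookkeeping---are precisely the steps the paper's appendix verifies.
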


Therefore, to complete the proof of this result we need to prove  lemma \ref{lem221}. But, first we discuss a modified version of this scheme.  

 

\subsection{A Modified Scheme with Two Ancillary Qubits}\label{App:Two}

In the above scheme we need system-system interactions $R^{(l,l')}_{j,j'}$. It turns out that this interaction can be easily engineered using system-ancilla interactions, provided that we can use a second ancillary qubit, labeled as qubit b, with Hamiltonian $\Delta E |1\rangle\langle 1|_\text{b}$.
This follows from the fact that 
\be
R^{(l,l')}_{j,j'} Z_\text{b}=\frac{1}{2}\big[R^{(l)}_\text{$j$,b}\ , [ R^{(l')}_\text{$j'$,b}\ ,\ Z_\text{b}] \big] \  .
\ee
Therefore, if qubit b is initially in state $|0\rangle_{\text{b}}$, then 
\be\label{wfwf01}
\exp\big(i\theta \frac{1}{2}\big[R^{(l)}_\text{$j$,b}\ , [ R^{(l')}_\text{$j'$,b}\ ,\ Z_\text{b}] \big]\big)|\phi\rangle|0\rangle_\text{b}=\exp\big(i\theta R^{(l,l')}_{j,j'} \big) |\phi\rangle\otimes |0\rangle_\text{b}\ ,
\ee
where $|\phi\rangle$ is an arbitrary state of the rest of systems. 

Based on this observation, we consider interactions between systems $j=1,\cdots n$ and ancillary qibits a and b: 
\begin{align}
R^{(l)}_{\text{a}, j}&\equiv R^{(l)}_{j, \text{a}} \equiv |l-1\rangle\langle l|_{j}\otimes  |1\rangle\langle0|_{\text{a}}+ |l\rangle\langle l-1|_{j}\otimes  |0\rangle\langle1|_\text{a}           \ \ \ \ &&: l=1,\cdots , d-1\ ,\ \ j=1,\cdots , n\\
 R^{(l)}_{\text{b}, j}&\equiv R^{(l)}_{j, \text{b}} \equiv |l-1\rangle\langle l|_{j}\otimes  |1\rangle\langle0|_{\text{b}}+ |l\rangle\langle l-1|_{j}\otimes  |0\rangle\langle1|_\text{b}           \ \ \ \ &&: l=1,\cdots , d-1\ ,\ \ j=1,\cdots , n \ . \end{align}
Then, in this modified  scheme instead of  
2-local  energy-conserving unitaries in Eq.(\ref{ljdjdj}), we consider unitaries 
\be\label{ljdjdj2}
\exp({i\theta Z_{\text{a}}})\ ,\  \exp(i\theta R^{(l)}_{j, \text{a}})\ ,\  \exp(i\theta R^{(l)}_{j, \text{a}})\  ,\    \exp(i\theta R^{(l)}_{j, \text{b}})\ \ \ \  : \theta\in[0, 2\pi) ;\   l,l'=1,\cdots , d-1\ ;\ \ j=1,\cdots , n\ .
\ee
Similarly, in the Hamiltonian picture, instead of Hamiltonians  in Eq.(\ref{ljwj12}), we consider the family of Hamiltonians 
\be\label{kjshd}
H_{{\bf{n}},\text{a}, \text{b}}(t)=\Big(\sum_{j=1}^n H_j + \Delta E |1\rangle\langle1|_{\text{a}}+\Delta E  |1\rangle\langle1|_{\text{b}}\Big)+ g_\text{a}(t)\  Z_\text{a}+g_\text{b}(t)\  Z_\text{b}+\sum_{j=1}^n \sum_{l=1}^{d-1}  g^{(l)}_{j, \text{a}}(t)\ R^{(l)}_{j, \text{a}}  + \sum_{j=1}^n \sum_{l=1}^{d-1}  g^{(l)}_{j, \text{b}}(t)\ R^{(l)}_{j, \text{b}} \ ,
\ee
where $ g_\text{a}$, $g_\text{b}$, $g^{(l)}_{j, \text{a}}$ and  $g^{(l)}_{j, \text{b}}$ are arbitrary real functions. Note that
\be
\forall t\ge0: \ \ \ \ \Big[H_{{\bf{n}},\text{a}, \text{b}}(t) ,  \Big(\sum_{j=1}^n H_j + \Delta E |1\rangle\langle1|_{\text{a}}+ \Delta E |1\rangle\langle1|_{\text{b}}\Big) \Big]=0\ ,
\ee
and therefore the family of unitaries generated by these Hamiltonians are energy-conserving. 

Then, combining the observation in Eq.(\ref{wfwf01}) with theorem \ref{Thm:iejrg}, we conclude that 

\begin{corollary}\label{erh}
Consider an arbitrary  energy-conserving unitary $V_{\bf{n}}$ acting on $(\mathbb{C}^d)^{\otimes n}$  (i.e., a unitary satisfying $[V_{\bf{n}}\ , H_\text{intrinsic}]=0$).  
 There exists  a unitary  $\tilde{V}_{\bf{n},\text{a},\text{b}}$ acting on $(\mathbb{C}^d)^{\otimes n}\otimes \mathbb{C}^2\otimes \mathbb{C}^2$, such that
  \be
\forall |\psi\rangle\in(\mathbb{C}^d)^{\otimes n}:\ \ \  \tilde{V}_{\bf{n},\text{a},\text{b}} (|\psi\rangle |0\rangle_\text{a}|0\rangle_\text{b})=V_{\bf{n}} |\psi\rangle \otimes |0\rangle_\text{a}|0\rangle_\text{b}\ ,
 \ee
and $\tilde{V}_{\bf{n},\text{a},\text{b}}$ can be generated by a finite sequence of 2-local energy-conserving unitaries in Eq.(\ref{ljdjdj2}). Equivalently,   $\tilde{V}_{\bf{n},\text{a},\text{b}}$  can be implemented with the family of energy-conserving  Hamiltonians $H_{\bf{n},\text{a},\text{b}}(t)$ defined in Eq.(\ref{kjshd}). 
\end{corollary}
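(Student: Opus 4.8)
The plan is to reduce Corollary~\ref{erh} to Theorem~\ref{Thm:iejrg}. Theorem~\ref{Thm:iejrg} already realizes the target action with the single ancilla $\text{a}$, but its generating set Eq.(\ref{ljdjdj}) contains the direct system-system interactions $R^{(l,l')}_{j,j+1}$, and these are precisely what distinguishes it from the two-ancilla generating set of Eq.(\ref{ljdjdj2}) and Eq.(\ref{kjshd}). The idea is to simulate each of those interactions using system-ancilla interactions with a second ancilla $\text{b}$, prepared in and returned to $|0\rangle_\text{b}$, so that $\text{b}$ functions as a catalyst.

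Concretely, I would first apply Theorem~\ref{Thm:iejrg} to express $\tilde{V}_{\bf{n},\text{a}}$ as a finite product of the one-parameter unitaries $e^{i\theta Z_\text{a}}$, $e^{i\theta R^{(l)}_{j,\text{a}}}$, and $e^{i\theta R^{(l,l')}_{j,j+1}}$. The first two families also belong to the two-ancilla generating set and leave $\text{b}$ untouched, so only the factors $e^{i\theta R^{(l,l')}_{j,j+1}}$ must be eliminated. For this I use the commutator identity $R^{(l,l')}_{j,j+1} Z_\text{b} = \tfrac{1}{2}\big[R^{(l)}_{j,\text{b}},[R^{(l')}_{j+1,\text{b}},Z_\text{b}]\big]$, verified by a short computation in the two levels of $\text{b}$: it places $i R^{(l,l')}_{j,j+1} Z_\text{b}$ in the real Lie algebra generated by $iZ_\text{b}$, $iR^{(l)}_{j,\text{b}}$, $iR^{(l')}_{j+1,\text{b}}$, so by the Lie-group/Lie-algebra correspondence recalled under Eq.(\ref{Lie}) the unitary $e^{i\theta R^{(l,l')}_{j,j+1} Z_\text{b}}$ is a finite word in the two-ancilla generators. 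Since $Z_\text{b}|0\rangle_\text{b}=|0\rangle_\text{b}$, block-diagonalizing $R^{(l,l')}_{j,j+1}\otimes Z_\text{b}$ in the eigenbasis of $Z_\text{b}$ gives $e^{i\theta R^{(l,l')}_{j,j+1} Z_\text{b}}(|\phi\rangle|0\rangle_\text{b}) = (e^{i\theta R^{(l,l')}_{j,j+1}}|\phi\rangle)\otimes|0\rangle_\text{b}$, which is exactly Eq.(\ref{wfwf01}): the systems undergo the desired gate while $\text{b}$ is returned to $|0\rangle_\text{b}$.

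I would then define $\tilde{V}_{\bf{n},\text{a},\text{b}}$ by substituting, in the product realizing $\tilde{V}_{\bf{n},\text{a}}$, each factor $e^{i\theta R^{(l,l')}_{j,j+1}}$ with the finite word realizing $e^{i\theta R^{(l,l')}_{j,j+1} Z_\text{b}}$. Acting on $|\psi\rangle|0\rangle_\text{a}|0\rangle_\text{b}$ and inducting over the factors: at the start of every block $\text{b}$ is in $|0\rangle_\text{b}$, each substituted block returns it to $|0\rangle_\text{b}$ while reproducing the original gate on the systems and $\text{a}$, and the unchanged generators never act on $\text{b}$. Hence the composite map equals $\tilde{V}_{\bf{n},\text{a}}$ on the first two tensor factors with $\text{b}$ left in $|0\rangle_\text{b}$, and with Eq.(\ref{lwwdhw92}) this yields $\tilde{V}_{\bf{n},\text{a},\text{b}}(|\psi\rangle|0\rangle_\text{a}|0\rangle_\text{b}) = V_{\bf{n}}|\psi\rangle\otimes|0\rangle_\text{a}|0\rangle_\text{b}$. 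Membership of $\tilde{V}_{\bf{n},\text{a},\text{b}}$ in the group generated by Eq.(\ref{ljdjdj2}) is built into the construction, and uniform finite generation follows from compactness of that group (Fact~1 in Appendix~\ref{App:Char}) together with the result of \cite{d2002uniform}.

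The step I expect to be the main obstacle is the catalyst bookkeeping: one must ensure that every simulated system-system block is genuinely self-contained—returning $\text{b}$ exactly to $|0\rangle_\text{b}$ with no residual system-ancilla entanglement—so that the blocks compose as intended. This rests precisely on the commutator identity producing $R^{(l,l')}_{j,j+1}\otimes Z_\text{b}$ rather than $R^{(l,l')}_{j,j+1}$ alone, so that the $+1$ eigenspace of $Z_\text{b}$ is left invariant; confirming that identity is the only genuine calculation in the argument.
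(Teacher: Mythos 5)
Your proposal is correct and takes essentially the same route as the paper: the paper proves Corollary \ref{erh} exactly by combining Theorem \ref{Thm:iejrg} with the catalytic identity $R^{(l,l')}_{j,j'} Z_\text{b}=\tfrac{1}{2}\big[R^{(l)}_{j,\text{b}}\,,[R^{(l')}_{j',\text{b}}\,,Z_\text{b}]\big]$ and Eq.(\ref{wfwf01}), so that each system--system generator is replaced by a word in system--ancilla generators that returns qubit $\text{b}$ to $|0\rangle_\text{b}$. Your explicit factor-by-factor substitution and induction over blocks merely spells out the bookkeeping the paper leaves implicit in the phrase ``combining the observation in Eq.(\ref{wfwf01}) with theorem \ref{Thm:iejrg}.''
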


\newpage

\subsection{Implementing diagonal unitaries with 2-local energy-conserving interactions \\(Proof of lemma \ref{lem221} for the special case of diagonal Hamiltonians)}
In this section we focus on diagonal unitaries, i.e., those that  commute with the Hamiltonians of all systems $j=1,\cdots, n$ and prove the following lemma, which is the special case of lemma \ref{lem221}  for diagonal Hamiltonians.

\begin{lemma}\label{ref:lem23}
For any  Hermitian operator $H_\text{diag}$ on $(\mathbb{C}^{d})^{\otimes n}$ that is diagonal in the basis  $\{\bigotimes_{j=1}^n |r_j\rangle: r_j=0,\cdots, d-1\}$, there exists an operator $\tilde{H}_\text{diag}$ on $(\mathbb{C}^{d})^{\otimes n}\otimes \mathbb{C}^2$, such that  $i \tilde{H}_\text{diag}$ is in the real Lie algebra $\mathfrak{alg}_\mathbb{R}\Big\{i Z_{\text{a}}\ , i R^{(l)}_{j, \text{a}}  , i R^{(l,l')}_{s,s+1}:\  l,l'=1,\cdots , d-1\ ;\ \ j=1,\cdots , n; s=1,\cdots, n-1
\Big\}$ and
\be
\forall |\psi\rangle\in(\mathbb{C}^d)^{\otimes n}:\ \  \ \ \ \ \tilde{H}_\text{diag}(|\psi\rangle\otimes |0\rangle_\text{a})={H}_\text{diag}|\psi\rangle\otimes |0\rangle_\text{a}\ .
\ee   
\end{lemma}
As we explained in Fig.\ref{energies} the main idea is to think of a pair of consecutive levels as a qubit, and apply our qubit results.  

Any diagonal unitary can be written as $e^{i H_\text{diag}}$,  where  
$H_\text{diag}$ is a diagonal Hermitian operator, i.e.,  can be written as
\be
H_\text{diag}=\sum_{s_1,\cdots, s_n=0}^{d-1}  h_{s_1,\cdots, s_n} \ \bigotimes_{j=1}^n |s_j\rangle\langle  s_j|\ , 
\ee
 where $h_{s_1,\cdots, s_n} \in\mathbb{R}$.  
 
We start with the case of $n=1$, i.e.,   a single system with the Hilbert space $\mathbb{C}^d$. Consider the pair of energy eigenstates $|l-1\rangle$ and $|l\rangle$ with energies $(l-1)\times \Delta E $ and $l\times \Delta E $, respectively. Define 
\begin{align}
Z^{(l)}  &\equiv |l-1\rangle\langle l-1|-|l\rangle\langle l|    \ \ \ \ \ \  \ \ \ \ \ \ : l=1,\cdots , d-1\ . 
 \end{align}
Consider the set of operators  
\be\label{basis}
\Big\{I_d , Z^{(l)}\equiv|l-1\rangle\langle l-1|- |l\rangle\langle l| : l=1,\cdots , d-1\Big\}\ ,
\ee  
where $I_d$ is the identity operator on $\mathbb{C}^d$. It can be easily seen that the above $d$ operators form a basis for diagonal operators. 
In particular, 
\be\label{qef}
\text{Span}_\mathbb{R}\big\{|s\rangle\langle  s|\ :  0 \le s\le d-1  \big\}=\text{Span}_\mathbb{R}\big\{I_d , Z^{(l)}\equiv|l-1\rangle\langle l-1|- |l\rangle\langle l| : l=1,\cdots , d-1\big\}\ .
\ee
Next, consider $n$ systems labeled as $j=1,\cdots, n$. The set of operators $\big\{\bigotimes_{j=1}^n |s_j\rangle\langle  s_j|\ :  0 \le s_j\le d-1  \big\}$ spans the space of diagonal operators on $(\mathbb{C}^d)^{\otimes n}$. Clearly, this set can be obtained as the $n-$fold tensor product of the set $\big\{|s\rangle\langle  s|\ :  0 \le s\le d-1  \big\}$, i.e.
\be\label{jldkdf}
\Big\{\bigotimes_{j=1}^n |s_j\rangle\langle  s_j| : s_j=0,\cdots, d-1 \Big\}= \Big\{|s\rangle\langle  s|\ :  0 \le s\le d-1  \Big\}^{\otimes n}\ .
\ee
Next, we consider the $n$-fold tensor product of operators $\big\{I_d , Z^{(l)}\equiv|l-1\rangle\langle l-1|- |l\rangle\langle l| : l=1,\cdots , d-1\big\}$, which appear in the right-hand side of Eq.(\ref{qef}).  For each system $j$ consider the pair of energy eigenstates $|l-1\rangle_j$ and $|l\rangle_j$ with energies $(l-1)\times \Delta E $ and $l\times \Delta E $, respectively. Define  
\begin{align}
Z^{(l)}_j  &\equiv |l-1\rangle\langle l-1|_j-|l\rangle\langle l|_j    \ \ \ \ \ \  \ \ \ \ \ \ : l=1,\cdots , d-1\ ,\ \ j=1,\cdots , n \ .
 \end{align}
Then,  the $n$-fold tensor product of $\big\{I_d , Z^{(l)}\equiv|l-1\rangle\langle l-1|- |l\rangle\langle l| : l=1,\cdots , d-1\big\}$ gives the set of operators 
\be
\Big\{I_d , Z^{(l)} : l=1,\cdots , d-1\Big\}^{\otimes n}= \Big\{I, \prod_{r=1}^t Z^{(l_r)}_{j_r}: \ 1 \le t\le n\ ,\   0 <j_1<j_2<\cdots <j_t<n+1 \ , \ 1 \le l_r\le d-1 
  \Big\} \ ,
\ee
where $I=I^{\otimes n}_d$ is the identity operator on $(\mathbb{C}^d)^{\otimes n}$. Combining this with Eq.(\ref{qef}) and Eq.(\ref{jldkdf}), we find 
\begin{align}
\text{Span}_\mathbb{R}\big\{\bigotimes_{j=1}^n |s_j\rangle\langle  s_j|\ :  0 \le s_j\le d-1  \big\}=\text{Span}_\mathbb{R} \Big\{I, \prod_{r=1}^t Z^{(l_r)}_{j_r}: \ 1 \le t\le n\ ,\   0 <j_1<j_2<\cdots <j_t<n+1 \ , \ 1 \le l_r\le d-1 
  \Big\} \ .
\end{align}
Given that all operators $\prod_{r=1}^t Z^{(l_r)}_{j_r}$ are traceless, we conclude that 
\begin{lemma}\label{lem:diag}
Any  Hermitian operator on $(\mathbb{C}^{d})^{\otimes n}$ that is diagonal in the basis  $\{\bigotimes_{j=1}^n |r_j\rangle: r_j=0,\cdots, d-1\}, $
can be written as
\begin{align}
H_\text{diag}&=\sum_{r_1,\cdots, r_n=0}^{d-1}  h_{r_1,\cdots, r_n} \bigotimes_{j=1}^n |r_j\rangle\langle  r_j| \\
&=   \frac{\Tr(H_\text{diag})}{\Tr(I)}  I + \sum_{t=1}^n \ \ \sum_{\substack{j_1,\cdots j_t\\  0< j_1<j_2<\cdots <j_t< n+1  } }^n \ \ \sum_{l_{1},\cdots , l_{t}=1}^{d-1}  c^{(l_{1},\cdots, l_{t})}_{j_1,\cdots, j_t}\  Z^{(l_1)}_{j_1}Z^{(l_2)}_{j_2} \cdots Z^{(l_{t})}_{j_{t}}\ ,
\end{align}
for a set of real coefficients $h_{r_1,\cdots, r_n}$, and $c^{(l_{1},\cdots, l_{t})}_{j_1,\cdots, j_t}$, where 
$Z^{(l)}_j \equiv |l-1\rangle\langle l-1|_j-|l\rangle\langle l|_j    \ : l=1,\cdots , d-1; j=1,\cdots , n $.
\end{lemma}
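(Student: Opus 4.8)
The plan is to reduce the claim to a single-site statement via the tensor-product structure of the space of diagonal operators, and then to fix the coefficient of the identity using the trace.

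First I would record that the single-system set in Eq.(\ref{qef}) is in fact a basis. On $\mathbb{C}^d$ the space of diagonal operators is $d$-dimensional, and both $\{ |s\rangle\langle s| : 0\le s\le d-1 \}$ and $\{ I_d, Z^{(l)} : 1\le l\le d-1 \}$ have exactly $d$ elements with equal span by Eq.(\ref{qef}); hence each is a basis, and in particular $\{ I_d, Z^{(1)}, \dots, Z^{(d-1)} \}$ is.

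Next I would lift this to $n$ sites using the elementary linear-algebra fact that the tensor product of bases is a basis: if $\{A_i\}$ is a basis of the diagonal operators on one factor and $\{B_j\}$ of those on another, then $\{A_i\otimes B_j\}$ is a basis of the diagonal operators on the tensor product. Applying this inductively to $\{ I_d, Z^{(l)} \}$ produces a basis of the diagonal operators on $(\mathbb{C}^d)^{\otimes n}$ consisting of all $n$-fold tensor products in which each factor is either $I_d$ or some $Z^{(l)}$. Reading off the sites carrying a nontrivial factor, each such basis element is either $I=I_d^{\otimes n}$ or a product $Z^{(l_1)}_{j_1}\cdots Z^{(l_t)}_{j_t}$ over distinct sites $0<j_1<\cdots<j_t<n+1$ with $1\le l_r\le d-1$. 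This is precisely the set on the right-hand side of the claimed identity, so every diagonal Hermitian $H_\text{diag}$ expands uniquely as a real combination of these operators.

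Finally I would extract the identity coefficient by taking traces. Each $Z^{(l)}=|l-1\rangle\langle l-1|-|l\rangle\langle l|$ is traceless, and the trace factorizes across the tensor product, so every product $Z^{(l_1)}_{j_1}\cdots Z^{(l_t)}_{j_t}$ with $t\ge 1$ is traceless. Tracing the expansion therefore annihilates all terms but the identity, giving $\Tr(H_\text{diag})=c_I\,\Tr(I)$ and hence $c_I=\Tr(H_\text{diag})/\Tr(I)$, which is exactly the stated constant term. There is no genuine obstacle here: the result is a dimension count together with the multiplicativity of span and of trace under tensor products. The only step meriting a line of justification is that tensor products of bases remain bases, guaranteeing both the existence of the expansion and the well-definedness of the identity coefficient; this follows from the linear independence of tensor products of linearly independent vectors.
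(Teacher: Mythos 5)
Your proof is correct and follows essentially the same route as the paper's: establish the single-site span equality $\text{Span}_\mathbb{R}\{|s\rangle\langle s|\}=\text{Span}_\mathbb{R}\{I_d, Z^{(l)}\}$, lift it to $n$ sites by taking $n$-fold tensor products, and use the tracelessness of every product $Z^{(l_1)}_{j_1}\cdots Z^{(l_t)}_{j_t}$ to identify the coefficient of $I$ as $\Tr(H_\text{diag})/\Tr(I)$. Your only addition is upgrading the spanning sets to bases (giving uniqueness of the expansion), which the paper does not need but which is a harmless and correct strengthening.
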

Therefore, to generate a general diagonal unitary evolution, up to a global phase,  it suffices to implement all Hamiltonians  
\be
Z^{(l_1)}_{j_1}Z^{(l_2)}_{j_2} \cdots Z^{(l_{t})}_{j_{t}}\ \ :\  \ t=1,\cdots, n\ ; \ \  0< j_1<j_2<\cdots <j_t< n+1\ ; l_1,\cdots, l_j=0,\cdots , d-1\ .
\ee
Next, note that each pair of states  $\{|l-1\rangle_j, |l\rangle_j\}$ can be interpreted as a separate qubit and $Z^{(l)}_{j}$ can be interpreted as the 
Pauli $Z$ operator associated to this qubit. Hence, we can mimic the argument in Appendix \ref{App:B} in the case of qubits. Following this analogy, we define 
\begin{align}
T^{(l)}_{ \text{a}, j}&\equiv \frac{i}{2}[ Z_{\text{a}} , R^{(l)}_{\text{a}, j} ]= i\Big(|l\rangle\langle l-1|_{j}\otimes  |0\rangle\langle1|_\text{a}- |l-1\rangle\langle l|_{j}\otimes  |1\rangle\langle0|_{\text{a}}\Big)      \ \ \ \ &&: l=1,\cdots , d-1\ ,\ \ j=1,\cdots , n\ .
 \end{align}
With this definition we can easily see that

\begin{empheq}{align}\label{lsls}
D^{(l)}_j&\equiv Z^{(l)}_j-Z_\text{a}=i \big[iR_\text{$j$,a}^{(l)} , iT_\text{a, $j$}^{(l)}\big]= \frac{i}{2}\big[i R^{(l)}_{j, \text{a}} , [ i Z_{\text{a}} , i R^{(l)
}_{\text{a}, j} ]\big] \ .
\end{empheq}
Furthermore, rewriting Eq.(\ref{kjf00}), we find that for any distinct $t\ge 2$ systems labeled by $j_1<j_2<\cdots<j_t$, it holds that  
\begin{empheq}{align}
D^{(l_1,\cdots, l_t)}_{j_1,\cdots, j_t}&\equiv  (Z^{(l_1)}_{j_1}-Z_{\text{a}})\   Z^{(l_{2})}_{j_2}...Z^{(l_{t})}_{j_{t}} =  
\begin{cases}\!
  \begin{aligned}[b]
   c_t i\   \big[[\cdots[ [i R^{(l_1,l_2)}_{j_1, j_2}, i R^{(l_2,l_3)}_{j_2j_3} ], i R^{(l_3,l_4)}_{j_3j_4}]\cdots , i R^{(l_{t})}_{j_{t}, \text{a} }], i R^{(l_1)}_{\text{a}, j_{1}}\big]\ \ : t \text{ even} \\ 
   &   \\
       c_t i\  \big[[\cdots[ [i R^{(l_1,l_2)}_{j_1, j_2}, i R^{(l_2,l_3)}_{j_2j_3} ], i R^{(l_3,l_4)}_{j_3j_4}]\cdots, i R^{(l_{t})}_{j_{t}, \text{a}}], i T^{(l_1)}_{\text{a}, j_{1}}\big]\ \ : t\  \text{odd}\
,\\
  \end{aligned} \end{cases}
\end{empheq}
where $c_t=\pm 1 $, depending on $t$.

Next, note that for any $|\psi\rangle\in(\mathbb{C}^d)^{\otimes n}$, it holds that
\begin{align}
 Z_\text{a} (|\psi\rangle|0\rangle_{\text{a}})&=|\psi\rangle|0\rangle_{\text{a}}\ , \\
 D^{(l)}_{j} (|\psi\rangle|0\rangle_{\text{a}})&=(Z^{(l)}_{j} -I)|\psi\rangle\otimes |0\rangle_{\text{a}} \ ,
\\
 D^{(l_1,\cdots, l_t)}_{j_1,\cdots, j_t} (|\psi\rangle|0\rangle_{\text{a}})&=([Z^{(l_1)}_{j_1} \cdots Z^{(l_t)}_{j_t}-Z^{(l_2)}_{j_2} \cdots Z^{(l_t)}_{j_t}]|\psi\rangle)\otimes |0\rangle_{\text{a}} \ , \  \ \ \ \ : t\ge 2\ . 
\end{align}
Considering the linear combinations of the above terms, we find $\forall |\psi\rangle\in(\mathbb{C}^d)^{\otimes n}$,
\begin{align}
\big[D^{(l)}_{j}+Z_\text{a}\big]
  (|\psi\rangle|0\rangle_{\text{a}})&=Z^{(l)}_{j}|\psi\rangle\otimes |0\rangle_{\text{a}} \ , \\
\big[\sum_{r=1}^{t-1} D^{(l_r,\cdots, l_t)}_{j_r,\cdots, j_t}+ D^{(l_t)}_{j_t}+Z_\text{a}\big]
  (|\psi\rangle|0\rangle_{\text{a}})&=(Z^{(l_1)}_{j_1} \cdots Z^{(l_t)}_{j_t} |\psi\rangle)\otimes |0\rangle_{\text{a}} \ ,\ \ \  \  : t\ge 2\ .
\end{align}
Combining this with lemma \ref{lem:diag}, we find that for any  Hermitian operator $H_\text{diag}$ on $(\mathbb{C}^{d})^{\otimes n}$ that is diagonal in the basis  $\{\bigotimes_{j=1}^n |r_j\rangle: r_j=0,\cdots, d-1\}$, there exists an operator $\tilde{H}_\text{diag}$ on $(\mathbb{C}^{d})^{\otimes n}\otimes \mathbb{C}^2$, such that  $i \tilde{H}_\text{diag}$ is in the real Lie algebra $\mathfrak{alg}_\mathbb{R}\Big\{i Z_{\text{a}}\ , i R^{(l)}_{j, \text{a}}  , i R^{(l,l')}_{j,j'}:\  l,l'=1,\cdots , d-1\ ;\ \ j\neq j'=1,\cdots , n
\Big\}$ and
\be
\forall |\psi\rangle\in(\mathbb{C}^d)^{\otimes n}:\ \  \ \ \ \ \tilde{H}_\text{diag}(|\psi\rangle\otimes |0\rangle_\text{a})={H}_\text{diag}|\psi\rangle\otimes |0\rangle_\text{a}\ .
\ee   
Finally, it can be easily shown that the same result remains valid  if  instead of all Hamiltonians $
R^{(l,l')}_{j,j'}:\  l,l'=1,\cdots , d-1\ ;\ \ j\neq j'=1,\cdots , n$, we are restricted to only nearest-neighbor interactions $
R^{(l,l')}_{s,s+1}$, $ l,l'=1,\cdots , d-1\ ; s=1,\cdots, n-1$. The argument  is similar to the argument in Sec.\ref{Sec:Geometry}) for qubits: Combining interactions $R^{(l,l')}_{s,s+1}$ and $Z_s^{(l)}Z_{s+1}^{(l')}$, we can swap the state of qubits defined by   $\{|l-1\rangle_s,|l\rangle_s\}$ in system $j$ and   $\{|l'-1\rangle_{s+1},|l'\rangle_{s+1}\}$ in system $j'$. Furthermore, by combining permutations on nearest-neighbor sites, we can change the order of qubits arbitrarily. Therefore, the additional  restriction to  nearest-neighbor interactions, does not restrict the of Hamiltonians that can be simulated.

This proves lemma \ref{ref:lem23} which is a special case of lemma \ref{lem221} for the case of diagonal Hamiltonians. Next, we prove lemma \ref{lem221} in the general case.

\subsection{All energy-conserving unitaries from diagonal unitaries and 2-local energy-conserving unitaries\\ (Proof of lemma \ref{lem221})}\label{Sec:end}

In this section, we show how a general energy-conserving unitary can be implemented by combining diagonal energy-conserving unitaries and 2-local energy-conserving unitaries. In particular, we  study the Lie algebra of energy-conserving Hamiltonians and show

\begin{lemma}\label{eheheh}
Let $\mathfrak{h}$ be the Lie algebra of energy-conserving skew-Hermitian operators, i.e., those commuting with $H_\text{intrinsic}=\sum_{j=1}^n H_j$. Then, $\mathfrak{h}$ is generated by the set of skew-Hermitian diagonal operators together with operators $i R^{(l,l')}_{j,j+1}:\  l,l'=1,\cdots , d-1\ ;\ \ j=1,\cdots , n-1$, i.e.  
\begin{align}
\mathfrak{h}&\equiv \Big\{A: A+A^\dag=0, [A, \sum_{j=1}^n H_j]=0\Big\}\\ &= \mathfrak{alg}_\mathbb{R}\Big(\big\{i \bigotimes_{j=1}^n |r_j\rangle\langle  r_j|: r_j=0,\cdots, d-1 \big\}\cup \big\{i R^{(l,l')}_{j,j+1}:\  l,l'=1,\cdots , d-1\ ;\ \ j=1,\cdots , n-1\big\}\Big)\ .
\end{align}
\end{lemma}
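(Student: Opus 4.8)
The plan is to treat this as the $d$-level generalization of Theorem~\ref{Thm5}, following the same three-move strategy used there for qubits: (i) reduce $\mathfrak{h}$ to an explicit spanning set, (ii) realize each spanning element as a (nested) commutator of the generators, and (iii) note that the reverse inclusion is immediate, since every listed generator is skew-Hermitian and commutes with $H_\text{intrinsic}=\Delta E\sum_j N_j$ (where $N_j=\sum_r r\,|r\rangle\langle r|_j$ is the on-site excitation number), so the generated algebra sits inside $\mathfrak{h}$. For the forward inclusion I would first record that, writing $|\mathbf{s}\rangle=\bigotimes_j |s_j\rangle$ and $N(\mathbf{s})=\sum_j s_j$, energy conservation forces $\mathfrak{h}$ to be spanned by $i(|\mathbf{s}\rangle\langle\mathbf{t}|+|\mathbf{t}\rangle\langle\mathbf{s}|)$ and $F(\mathbf{s},\mathbf{t})\equiv|\mathbf{s}\rangle\langle\mathbf{t}|-|\mathbf{t}\rangle\langle\mathbf{s}|$ over all pairs with $N(\mathbf{s})=N(\mathbf{t})$, playing exactly the role that the equal-Hamming-weight condition $w(\mathbf{b})=w(\mathbf{b}')$ played in the qubit case.

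Two commutator identities carry over verbatim. First, $[\,i|\mathbf{s}\rangle\langle\mathbf{s}|,F(\mathbf{s},\mathbf{t})\,]=i(|\mathbf{s}\rangle\langle\mathbf{t}|+|\mathbf{t}\rangle\langle\mathbf{s}|)$, so it suffices to produce every $F(\mathbf{s},\mathbf{t})$ with $N(\mathbf{s})=N(\mathbf{t})$ together with the diagonal projectors, which are already in the generating set. Second, for distinct $\mathbf{d},\mathbf{e},\mathbf{f}$ the chaining relation $[F(\mathbf{d},\mathbf{e}),F(\mathbf{e},\mathbf{f})]=F(\mathbf{d},\mathbf{f})$ holds just as before. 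The one genuinely new computation is the interaction of a diagonal projector with a nearest-neighbor hopping term: for $\mathbf{s}$ with $s_j=l$ and $s_{j+1}=l'-1$, writing $\mathbf{s}'$ for the configuration obtained by the elementary hop $(s_j,s_{j+1})\mapsto(s_j-1,s_{j+1}+1)$, I would check that $[\,i|\mathbf{s}\rangle\langle\mathbf{s}|,\,iR^{(l,l')}_{j,j+1}\,]=F(\mathbf{s}',\mathbf{s})$; the projector annihilates every other resonant configuration contained in $R^{(l,l')}_{j,j+1}$, leaving a single antisymmetric term. Thus each elementary nearest-neighbor quantum-transfer generator $F(\mathbf{s}',\mathbf{s})$ lies in the generated algebra, with the choice $l=s_j\ge 1$, $l'=s_{j+1}+1\le d-1$ encoding precisely the validity of the hop.

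It then remains to connect arbitrary $\mathbf{s},\mathbf{t}$ with $N(\mathbf{s})=N(\mathbf{t})$ through elementary hops and to assemble $F(\mathbf{s},\mathbf{t})$ by iterating the chaining identity. This replaces the qubit step ``equal-weight strings are related by nearest-neighbor transpositions'' with the claim that the graph on configurations of fixed total excitation $N$, with edges given by nearest-neighbor $\pm1$ transfers, is connected. I would prove this by a monovariant (left-packing) argument: moving a quantum one step leftward strictly decreases $\sum_j j\,s_j$, so repeated leftward moves terminate, and at termination no leftward move remains, which forces the unique left-packed configuration determined by $N$; hence every configuration is connected to it and the graph is connected. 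Choosing a \emph{simple} path $\mathbf{s}=\mathbf{f}_1\to\cdots\to\mathbf{f}_L=\mathbf{t}$ (all $\mathbf{f}_p$ distinct), each $F(\mathbf{f}_p,\mathbf{f}_{p+1})$ is in the algebra, and the nested commutator $[\cdots[[F(\mathbf{f}_1,\mathbf{f}_2),F(\mathbf{f}_2,\mathbf{f}_3)],F(\mathbf{f}_3,\mathbf{f}_4)]\cdots,F(\mathbf{f}_{L-1},\mathbf{f}_L)]$ collapses through the chaining identity to $F(\mathbf{f}_1,\mathbf{f}_L)=F(\mathbf{s},\mathbf{t})$. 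Combined with the previous paragraph this exhausts the spanning set, giving $\mathfrak{h}\subseteq\mathfrak{alg}_\mathbb{R}(\cdots)$ and hence equality.

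The individual computations are routine; the main conceptual point, and the only substantive departure from Theorem~\ref{Thm5}, is the connectivity lemma, because the elementary moves are here particle \emph{transfers} constrained by the on-site capacity $s_j\le d-1$ rather than free permutations of site labels, so I must verify that these capacity-limited nearest-neighbor hops still connect the entire fixed-$N$ sector. I expect the left-packing monovariant to be what makes this clean. A secondary point requiring care is confirming that the diagonal projector genuinely kills all non-resonant terms of $R^{(l,l')}_{j,j+1}$, so that the commutator yields a single $F(\mathbf{s}',\mathbf{s})$ rather than a sum over spectator configurations.
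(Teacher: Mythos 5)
Your proposal is correct, and it diverges from the paper's proof at exactly one structural point. The shared skeleton is identical: you decompose $\mathfrak{h}$ into diagonal, real-antisymmetric, and imaginary-symmetric spanning sets, recover the imaginary-symmetric part via $[\,i|\mathbf{s}\rangle\langle\mathbf{s}|,F(\mathbf{s},\mathbf{t})\,]$, obtain elementary transfer generators from the projector--hopping commutator (your computation $[\,i|\mathbf{s}\rangle\langle\mathbf{s}|,\,iR^{(l,l')}_{j,j+1}\,]=F(\mathbf{s}',\mathbf{s})$ is exactly the paper's $F(\mathbf{r}',\mathbf{r})=\big[i|\mathbf{r}\rangle\langle\mathbf{r}|,\,iR^{(r_v+1,r_w)}_{v,w}\big]$, and your worry about spectator terms is resolved the same way, since the projector selects a single row and column of $R$), and assemble general $F(\mathbf{s},\mathbf{t})$ by the chaining identity along a path. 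The difference is how each argument gets from arbitrary pairwise transfers down to nearest-neighbor ones. The paper proves the lemma first with \emph{all} pairwise interactions $R^{(l,l')}_{j,j'}$ (where connectivity of the fixed-$N$ sector is trivial), and then restricts to nearest neighbors by a separate group-theoretic step: the $n=2$ case of the result shows the swap $S_{j,j+1}$ lies in the generated Lie group, swaps generate all permutations, and closure of the Lie algebra under the adjoint action of its group then places every $iR^{(l,l')}_{j,j'}$ in the nearest-neighbor algebra. You instead prove directly that the fixed-$N$ configuration graph under capacity-constrained nearest-neighbor hops is connected, via the left-packing monovariant $\sum_j j\,s_j$; this argument is sound (at termination of leftward moves, every adjacent pair satisfies $s_j=d-1$ or $s_{j+1}=0$, which forces the unique left-packed configuration, and reversibility of hops gives connectivity). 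Your route buys a fully elementary, self-contained proof that avoids invoking adjoint-closure and the $n=2$ special case, and your insistence on a \emph{simple} path also cleans up the nested-commutator formula, which in the paper's writeup contains repeated factors that only make sense for distinct configurations; the paper's route buys modularity (the all-pairs statement, Eq.~(\ref{oqoqo}), is used independently) and explicitly constructible swap unitaries, which the paper reuses in its discussion of geometrically local implementations.
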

%

This lemma is a generalization of lemma \ref{Thm5} for the qubit case with U(1) symmetry and can be proven in a similar way.


In the following, we use the notation
\be
|\textbf{r}\rangle\equiv | r_1\rangle\cdots |r_n\rangle=\bigotimes_{j=1}^n |r_j\rangle\ ,\ \ \ \ \ \ \ \ \textbf{r}\in\{0,\cdots, d-1\}^n\ ,
\ee
where $\textbf{r}\equiv r_1\cdots r_n$ and $r_j\in\{0,\cdots, d-1\}$. Then, 
\be
H_\text{intrinsic}=\sum_{j=1}^n  H_j =\sum_{j=1}^n \sum_{r_j=0}^{d-1} (r_j \Delta E)  |r_j\rangle\langle r_j|_j =  \Delta E\sum_{\textbf{r}\in\{0,\cdots, d-1\}^n}  N(\textbf{r})\ |\textbf{r}\rangle\langle \textbf{r}|\ ,
\ee
where
\be
N(\textbf{r})\equiv\sum_{j=1}^n r_j\ .
\ee

\begin{proof}
Any operator $A$ acting on $(\mathbb{C}^d)^{\otimes n}$ that commutes with $H_\text{intrinsic}$ can be written as
\be
A=\sum_{\textbf{r},\textbf{r}': N(\textbf{r})=N(\textbf{r}')} a_{\textbf{r}, \textbf{r}'}\  |\textbf{r}\rangle\langle \textbf{r}'|\ ,
\ee
where the summation is over all $\textbf{r}, \textbf{r}'\in\{0,\cdots, d-1\}^n$  satisfying the condition $N(\textbf{r})=N(\textbf{r}')$. It follows that  $\mathfrak{h}$, the Lie algebra of skew-Hermitian energy-conserving operators, can be written as a linear combinations of 3 sets of operators, namely, 
\begin{align}
\mathcal{D}&\equiv \Big\{i |\textbf{r}\rangle\langle\textbf{r}| : \textbf{r}\in \{0,\cdots, d-1\}^n \Big\} \\
\mathcal{R}&\equiv \Big\{|\textbf{r}\rangle\langle\textbf{r}'|-|\textbf{r}'\rangle\langle\textbf{r}|\ :  \textbf{r}, \textbf{r}'\in \{0,\cdots, d-1\}^n\ ,  N(\textbf{r})=N(\textbf{r}') \Big\}\\ 
\mathcal{I} &\equiv \Big\{  i(|\textbf{r}\rangle\langle\textbf{r}'|+|\textbf{r}'\rangle\langle\textbf{r}|)  \ :  \textbf{r}, \textbf{r}'\in \{0,\cdots, d-1\}^n,  N(\textbf{r})=N(\textbf{r}') \Big\}\ ,
\end{align}
where the constraint $N(\textbf{r})=N(\textbf{r}')$ means that states $|\textbf{r}\rangle$ and $|\textbf{r}'\rangle$ have the same energy. In other words, 
\begin{align}
\mathfrak{h}&\equiv \Big\{A: A+A^\dag=0, [A ,  H_\text{intrinsic}]=0\Big\}=\text{span}_\mathbb{R}(\mathcal{I}\cup \mathcal{D} \cup \mathcal{R}) \ .
\end{align}

For any  distinct pair $\textbf{r}_1, \textbf{r}_2 \in\{0,\cdots, d-1\}^n$,  the following commutation relations hold:
\begin{align}
\Big[ (|\textbf{r}_2\rangle\langle\textbf{r}_1|\mp |\textbf{r}_1\rangle\langle\textbf{r}_2|)\ , |\textbf{r}_1\rangle\langle\textbf{r}_1| \Big]&=|\textbf{r}_2\rangle\langle\textbf{r}_1|\pm |\textbf{r}_1\rangle\langle\textbf{r}_2|\ .\end{align}
This implies that the Lie algebra $\mathfrak{h}$ is generated by $\mathcal{D}$ and $\mathcal{R}$, i.e.
\begin{align}\label{wrg22}
\mathfrak{h}&\equiv \Big\{A: A+A^\dag=0, [A, H_\text{tot}]=0\Big\}=\text{span}_\mathbb{R}(\mathcal{D}\cup \mathcal{I} \cup \mathcal{R}) =\mathfrak{alg}_\mathbb{R}(\mathcal{D}\cup  \mathcal{R}) \ .
\end{align}

Next, we consider the  subset of $\mathcal{R}$ defined by 
\be
\mathcal{R}'=\Big\{ |\textbf{r}\rangle\langle\textbf{r}'|-|\textbf{r}'\rangle\langle\textbf{r}|  \ :  \textbf{r}, \textbf{r}'\in \{0,\cdots, d-1\}^n,  N(\textbf{r})=N(\textbf{r}'), \text{dist}(\textbf{r}, \textbf{r}')=2 \Big\}\ ,
\ee
where for any pair  $\textbf{r}=r_1\cdots r_n$ and $\textbf{r}'=r'_1\cdots r'_n$, we have defined the distance
\be
\text{dist}(\textbf{r}, \textbf{r}')\equiv \sum_{j=1}^n |r_j-r'_j| \ . 
\ee
Note that the two conditions 
\be
N(\textbf{r})=N(\textbf{r}')\  \ \  \ \ \ \text{and}\ \ \  \ \ \ \  \text{dist}(\textbf{r}, \textbf{r}')=2\ ,
\ee
together imply that $r_j=r'_j$ for all systems $j=1,\cdots, n$ except two distinct systems $v$ and $w$, i.e.,  
\be\label{0347}
r_j=r'_j\ \ \ \  :  j\neq v,w \   ; \ \ \ \  r'_v=r_v+1\ \ \ \   \text{and}\ \  r'_w=r_w-1\ . 
\ee

Next, we show that $\mathcal{R}$ can be generated by $\mathcal{R}'$, i.e., 
\be
\mathcal{R}\subset \mathfrak{alg}_\mathbb{R}(\mathcal{R}') ,
\ee
 and therefore 
\begin{align}\label{lkjw20155}
\mathfrak{h}=\mathfrak{alg}_\mathbb{R}(\mathcal{D}\cup \mathcal{R})=\mathfrak{alg}_\mathbb{R}(\mathcal{D}\cup \mathcal{R}') \ .
\end{align}
To see this first note that for any three distinct $\textbf{r}_1, \textbf{r}_2, \textbf{r}_3\in\{0,\cdots, d-1\}^n$,  the following commutation relations hold:
\begin{align}
 \Big[ |\textbf{r}_3\rangle\langle\textbf{r}_2|-|\textbf{r}_2\rangle\langle\textbf{r}_3|  , |\textbf{r}_2\rangle\langle\textbf{r}_1|-|\textbf{r}_1\rangle\langle\textbf{r}_2|  \Big]&= |\textbf{r}_3\rangle\langle\textbf{r}_1|- |\textbf{r}_1\rangle\langle \textbf{r}_3|\  ,
\end{align}
or, equivalently, 
\be\label{arar}
[F(\textbf{r}_3, \textbf{r}_2) , F(\textbf{r}_2, \textbf{r}_1)]=F(\textbf{r}_3, \textbf{r}_1)\ ,
\ee
where we have defined 
\be
F(\textbf{r}', \textbf{r})\equiv |\textbf{r}'\rangle\langle\textbf{r}|-|\textbf{r}\rangle\langle\textbf{r}'|\ .
\ee

Then, consider a pair of 
 $\textbf{r}_\text{in}, \textbf{r}_\text{fin}\in\{0,1,\cdots , d-1\}^n$ satisfying $N(\textbf{r}_\text{in})=N(\textbf{r}_\text{fin})$, which means $|\textbf{r}_\text{in}\rangle$ and $|\textbf{r}_\text{out}\rangle$ have the same energy.  It can be easily seen that  any such pair  can be converted to each other through  a sequence of transitions 
 \be\label{lkjww}
 \textbf{r}_\text{in}=\textbf{r}_1 \longrightarrow  \textbf{r}_2 \longrightarrow   \cdots  \cdots \ \longrightarrow \textbf{r}_m=\textbf{r}_\text{fin}\ ,
 \ee
where any consecutive pairs $\textbf{r}_t $ and $\textbf{r}_{t+1} $, satisfy 
\be
N(\textbf{r}_t)=N(\textbf{r}_{t+1}) \ \ \ \  \text{and}\ \ \ \ \  \text{dist}(\textbf{r}_t , \textbf{r}_{t+1})=2\ \ \ , 1\le t\le m\ .
\ee
 This means that at each step in Eq.(\ref{lkjww}), energy $\Delta E$ is transferred from one system to another. Combining this with  
 Eq.(\ref{arar}), we find
\be
F(\textbf{r}_\text{out},\textbf{r}_\text{in})=[F(\textbf{r}_m, \textbf{r}_{m-1}),[ \cdots [F(\textbf{r}_3, \textbf{r}_2) ,[F(\textbf{r}_3, \textbf{r}_2) ,  [F(\textbf{r}_3, \textbf{r}_2) , F(\textbf{r}_2, \textbf{r}_1)]]]\cdots]]\ .
\ee 
Furthermore, because at each step $N(\textbf{r}_t)=N(\textbf{r}_{t+1})$ and  $\text{dist}(\textbf{r}_t , \textbf{r}_{t+1})=2$, then $F(\textbf{r}_{t+1} , \textbf{r}_{t})\in \mathcal{R}'$. This proves that $\mathcal{R}\subset \mathfrak{alg}_\mathbb{R}(\mathcal{R}')
$, and therefore implies Eq.(\ref{lkjw20155}).
 
Next, we prove that 
 \be
\mathcal{R}'\subset\mathfrak{alg}_\mathbb{R}\Big(\mathcal{D}\cup \big\{i R^{(l,l')}_{j,j'}:\  l,l'=1,\cdots , d-1\ ;\ \ j\neq j'=1,\cdots , n\big\}\Big)\ .
\ee
That is we show that for any  $\textbf{r}, \textbf{r}'\in \{0,\cdots, d-1\}^n$, satisfying   $N(\textbf{r})=N(\textbf{r}')$ and $\text{dist}(\textbf{r}, \textbf{r}')=2$, it holds that
\be
 |\textbf{r}\rangle\langle\textbf{r}'|-|\textbf{r}'\rangle\langle\textbf{r}| \in   \mathfrak{alg}_\mathbb{R}\Big(\mathcal{D}\cup \big\{i R^{(l,l')}_{j,j'}:\  l,l'=1,\cdots , d-1\ ;\ \ j\neq j'=1,\cdots , n\big\}\Big)\ .
\ee 
 To see this note that, as we have seen in Eq.(\ref{0347}), for any pair $\textbf{r}=r_1\cdots r_n$ and $\textbf{r}'=r'_1\cdots r'_n$, the two conditions  $N(\textbf{r})=N(\textbf{r}')$ and $\text{dist}(\textbf{r}, \textbf{r}')=2$ together imply
\be
r_j=r'_j\ \ \ \  :  j\neq v,w \   ; \ \ \ \  r'_v=r_v+1\ \ \ \   \text{and}\ \  r'_w=r_w-1\ . 
\ee

It follows that 
  \be
|\textbf{r}'\rangle=R_{v,w}^{(r_v+1,r_w)}|\textbf{r}\rangle=\frac{1}{2}\Big(|r_v+1\rangle\langle r_v|_v\otimes |r_w-1\rangle\langle r_w|_{w}+|r_v\rangle\langle r_v+1|_v\otimes |r_w\rangle\langle r_w-1|_{w}\Big)|\textbf{r}\rangle  \  .
\ee
This means that
\be
F(\textbf{r}',\textbf{r})= |\textbf{r}'\rangle\langle\textbf{r}|-|\textbf{r}\rangle\langle\textbf{r}'|= \big[ i|\textbf{r}\rangle\langle\textbf{r}|\ ,\ i R_{v,w}^{(r_v+1,r_w)} \big]\ .
\ee
Since $\mathcal{R}'\equiv\big\{ |\textbf{r}\rangle\langle\textbf{r}'|-|\textbf{r}'\rangle\langle\textbf{r}|  \ :  \textbf{r}, \textbf{r}'\in \{0,\cdots, d-1\}^n,  N(\textbf{r})=N(\textbf{r}'), \text{dist}(\textbf{r}, \textbf{r}')=2 \big\}$, we conclude that $\mathcal{R}'\subset\mathfrak{alg}_\mathbb{R}\big(\mathcal{D}\cup \big\{i R^{(l,l')}_{j,j'}:\  l,l'=1,\cdots , d-1\ ;\ \ j\neq j'=1,\cdots , n\big\}\big)$. Combining this with Eq.(\ref{lkjw20155}), we find
\bes\label{oqoqo}
\begin{align}
\mathfrak{h}&\equiv \Big\{A: A+A^\dag=0, [A ,  H_\text{intrinsic}]=0\Big\}\\ &=\text{span}_\mathbb{R}(\mathcal{D}\cup \mathcal{I} \cup \mathcal{R})\\ &=\mathfrak{alg}_\mathbb{R}(\mathcal{D}\cup  \mathcal{R})\\  &=\mathfrak{alg}_\mathbb{R}(\mathcal{D}\cup  \mathcal{R}')\\  &=\mathfrak{alg}_\mathbb{R}  \big(\mathcal{D}\cup \big\{i R^{(l,l')}_{j,j'}:\  l,l'=1,\cdots , d-1\ ;\ \ j\neq j'=1,\cdots , n\big\}\big)\\ &= \mathfrak{alg}_\mathbb{R}\Big(\big\{i \bigotimes_{j=1}^n |r_j\rangle\langle  r_j|: r_j=0,\cdots, d-1 \big\}\cup \big\{i R^{(l,l')}_{j,j'}:\  l,l'=1,\cdots , d-1\ ;\ \ j\neq j'=1,\cdots , n\big\}\Big)   \ . 
\end{align}
\ees
This means that all energy-conserving unitaries can be implemented using diagonal Hamiltonians together with interactions $ \big\{i R^{(l,l')}_{j,j'}:\  l,l'=1,\cdots , d-1\ ;\ \ j\neq j'=1,\cdots , n\big\}$.

Finally, we can easily see that the above conclusion  remains valid if instead of all pairwise interactions $ \big\{R^{(l,l')}_{j,j'}:\  l,l'=1,\cdots , d-1\ ;\ \ j\neq j'=1,\cdots , n\big\}$, one only considers  interactions $ \big\{R^{(l,l')}_{j,j+1}:\  l,l'=1,\cdots , d-1\ ;\ \ j=1,\cdots , n-1\big\}$
on nearest-neighbor systems. Here, we sketch the argument:  Using diagonal Hamiltonians 
together with interactions $ \big\{i R^{(l,l')}_{j,j+1}:\  l,l'=1,\cdots , d-1\ ;\ \ j=1,\cdots , n-1\big\}$, one can implement swap unitaries on nearest-neighbor systems $j$ and $j+1$ , i.e., the unitaries that exchange the state of the two nearest-neighbor systems.  To see this first note that for any pair of neighbor systems $j$ and $j+1$ the swap unitary $S_{j,j+1}$ that exchanges the state of  systems $j$ and $j+1$, can be implemented using Hamiltonians 
\begin{align}
&|r\rangle\langle r|_j \otimes |r'\rangle\langle r'|_{j+1} :\ \ \ \ \  r,r'=0,1,\cdots, d-1\ \  \ \ \ \text{and}\ \ \ \\ &R^{(l,l')}_{j,j+1}=|l-1\rangle\langle l|_j \otimes |l'\rangle\langle l'-1|_{j+1}+|l\rangle\langle l-1|_j \otimes |l'-1\rangle\langle l'|_{j+1} : \ \ \ \   \  l,l'=1,\cdots , d-1\ .
\end{align}
In other words, $S_{j,j+1}$ is in the Lie group associated to the Lie algebra 
\be
\mathfrak{alg}_\mathbb{R}\Big\{i |r\rangle\langle r|_j \otimes |r'\rangle\langle r'|_{j+1} : r,r'=0,1,\cdots, d-1,\ \ ; i R^{(l,l')}_{j,j+1} : l,l'=1,\cdots , d-1\Big\}\ .
\ee
This follows, for instance, using the above result in Eq.(\ref{oqoqo}) in the special case of $n=2$, and the fact that  $S_{j,j+1}$ is an energy-conserving unitary.  Since this holds for all $j=1,\cdots, n-1$, and since swaps of nearest-neighbor systems generate all possible permutations of the systems, we conclude that all permutations are in the Lie group associated to the Lie algebra  
\be\label{yryr}
\mathfrak{alg}_\mathbb{R}  \big(\mathcal{D}\cup \big\{i R^{(l,l')}_{j,j+1}:\  l,l'=1,\cdots , d-1\ ;\ \ j=1,\cdots , n-1\big\}\big)  \ .
\ee
But, since the Lie algebra is closed under the adjoint action of the Lie group, it follows that  the above Lie algebra is closed under all permutations of $n$ systems. This, in particular, implies that for any pair of distinct systems $j$ and $j'$, operator $i R^{(l,l')}_{j,j'}$ is in the  Lie algebra in Eq.(\ref{yryr}), for all $j\neq j'=1,\cdots, n$ and $l,l'=1,\cdots, n-1$. We conclude that
\begin{align}
&\mathfrak{alg}_\mathbb{R}  \big(\mathcal{D}\cup \big\{i R^{(l,l')}_{j,j+1}:\  l,l'=1,\cdots , d-1\ ;\ \ j=1,\cdots , n-1\big\}\big)\\ &= \mathfrak{alg}_\mathbb{R}  \big(\mathcal{D}\cup \big\{i R^{(l,l')}_{j,j'}:\  l,l'=1,\cdots , d-1\ ;\ \ j\neq j'=1,\cdots , n\big\}\big)=\mathfrak{h} \ .
\end{align}

This implies lemma \ref{eheheh}, and therefore completes the proof of lemma  \ref{lem221}.

\end{proof}

\newpage

\bibliography{Ref_2020}

\end{document}